\theoremstyle{plain}
\newtheorem{theorem}{Theorem}[section]
\newtheorem{lemma}[theorem]{Lemma}
\theoremstyle{definition}
\theoremstyle{remark}
\def\bc{\boldsymbol{c}}
\def\be{\boldsymbol{e}}
\def\bell{\boldsymbol{\ell}}
\def\bt{\boldsymbol{t}}
\def\bG{\boldsymbol{G}}
\def\bH{\boldsymbol{H}}
\def\bS{\boldsymbol{S}}
\def\bU{\boldsymbol{U}}
\def\bV{\boldsymbol{V}}
\def\bW{\boldsymbol{W}}
\def\bX{\boldsymbol{X}}
\def\bY{\boldsymbol{Y}}
\def\bZ{\boldsymbol{Z}}
\def\bOmega{\boldsymbol{\Omega}}
\def\bTheta{\boldsymbol{\Theta}}
\def\bSigma{\boldsymbol{\Sigma}}
\def\bnu{\boldsymbol{\nu}}
\def\IE{{\mathbb E}}
\def\bzero{\mathbf{0}}
\DeclareMathOperator*{\argmin}{arg\,min}
\newcommand{\revise}{\color{black}}
\newcommand{\reviseagain}{\color{black}}
\begin{document}
\begin{frontmatter}
\title{A sample article title}
\runtitle{A sample running head title}

\begin{aug}
\author[A]{\fnms{Nicholas}~\snm{Woolsey}\ead[label=e1]{nwoolsey@email.sc.edu}},
\author[B]{\fnms{Xianzheng}~\snm{Huang}\ead[label=e2]{huang@stat.sc.edu}\orcid{0000-0001-7077-0869
}}
\address[A]{Statistics,
University of South Carolina\printead[presep={,\ }]{e1}}

\address[B]{Statistics,
University of South Carolina\printead[presep={,\ }]{e2}}
\runauthor{N. Woolsey et al.}
\end{aug}

\begin{abstract}
This study considers regression analysis of a circular response with an error-prone linear covariate. Starting with an existing estimator of the circular regression function that assumes error-free covariate, three approaches are proposed to revise this estimator, leading to three nonparametric estimators for the circular regression function accounting for measurement error. The proposed estimators are intrinsically connected through some deconvoluting operator that is exploited differently in different estimators. Moreover, a new bandwidth selection method is developed that is more computationally efficient than an existing method well-received in the context of tuning parameter selection in the presence of measurement error. The efficacy of these new estimators and their relative strengths are demonstrated through a thorough investigation of their asymptotic properties and extensive empirical study of their finite-sample performance. 
\end{abstract}

\begin{keyword}[class=MSC]
\kwd[Primary ]{62G08}
\kwd[; secondary ]{62H11}
\end{keyword}

\begin{keyword}
\kwd{Bandwidth selection}
\kwd{complex error}
\kwd{deconvolution} 
\kwd{measurement error}
\end{keyword}

\end{frontmatter}
\tableofcontents

\section{Introduction}

\label{sec:intro}

Circular data are data of a periodic nature that arise in a host of applications. Examples of circular data include data on the direction of animal movements that are of interest in zoology and ecology \citep{batschelet1981circular, rivest2016general}, and wind direction data as a subject of investigation in meteorology \citep{kamisan2011distribution}. Times of certain events, such as hospital visits and crimes, as observed on the 24-hour clock can be transformed to circular data, so are dates of certain events on a calendar, such as births and migration. In these applications, researchers are often interested in studying the association between a circular response and linear covariates. For instance,  \citet{birkett2012animal} considered regressing elephant's movements on rainfall;  \citet{hodel2022circular} analyzed the association between a moving animal's turn‐angle and step‐length; \citet{garcia2013exploring} formulated a regression model for wind direction and $\text{SO}_2$ concentration; \citet{gill2010circular} developed a circular regression model for the timing of domestic terrorism events, with linear covariates such as the number of attackers and death count. 

Early developments of circular-linear regression analysis led to parametric models that usually assume a von Mises distribution \citep{mardia2000directional} for the circular response given covariates. Frequently referenced regression models of this type include those proposed by \citet{gould1969regression}, \citet{Johnson}, and \citet{fisherlee}. More recent parametric models that adopt different distributional assumptions on the circular response include those considered in \cite{wang2013directional}, \cite{scealy2019scaled}, and  \cite{paine2020spherical},  among many others. A complication in formulating such regression models is the choice of a link function that relates predictors in the Euclidean space to the mean direction in a circular space. There is also the concern that the actual circular response may violate parametric assumptions implied by the chosen circular distribution. Nonparametric circular-linear regression models have been proposed to avoid choosing a link function or imposing stringent assumptions on the circular response. One of the first notable developments along this line is the local polynomial regression of a circular response introduced by \citet{di2013non}. \citet{meilan2020nonparametric} generalized this work by allowing a multivariate linear covariate. 

Oftentimes covariates of scientific interest cannot be measured precisely, either due to human error or imprecise measuring instruments. For example,  rainfall measurements from a rain gauge are affected by the exposure conditions of the rain gauge; the measurement accuracy of $\text{SO}_2$ concentration is also highly device-dependent. This motivates our study of circular regression models with a linear covariate prone to measurement error. The topic of regression analysis with error-prone covariates has been studied extensively. Comprehensive reviews of existing methodolgies for dealing with error-in-covariate can be found in \cite{carroll2006measurement}, \cite{fuller2009measurement}, \cite{buonaccorsi2010measurement}, \cite{yi2017statistical}, and
\cite{Yi}. Despite the existing extensive research on regression models with error-in-covariate, parallel problems in the context of regression models for circular responses have received little attention. Our study contributes in this underinvestigated area.  Concurrently, \cite{di2023kernel} considered regression models for a response that can be circular or linear, with error-prone covariates that also can be circular or linear. 

Focusing on circular-linear regression models, we propose in this study three methods based on local linear estimation adapted for a circular response while accounting for covariate measurement error. To prepare for the methodology development, Section~\ref{sec:regnome} briefly reviews local polynomial regression of a circular response with error-free covariates proposed by \cite{di2013non}. In Section~\ref{sec:regwithme} we develop three strategies to revise their method to account for measurement error in the linear covariate. Section~\ref{sec:asymp} summarizes asymptotic properties of our proposed estimators. Bandwidth selection required for local polynomial regression is especially challenging when covariates are prone to measurement error, which is an important issue not addressed in \cite{di2023kernel}. We present in Section~\ref{sec:band} two bandwidth selection methods utilized in our proposed estimation methods. Section~\ref{sec:simu} reports simulation studies where the proposed estimators' finite-sample performance is inspected and compared with a naive estimator and a benchmark estimator. We implement these methods in a real-life application in Section~\ref{sec:real}. Lastly, Section~\ref{sec:disc1} summarizes contributions of our study and discuss extensions of the study to different relevant settings. 

\section{Local polynomial circular-linear regression}
\label{sec:regnome}
\subsection{The circular mean function}
Consider $n$ independent bivariate observations, $\{(\Theta_j, X_j)\}_{j=1}^n$, identically distributed according to the joint distribution of $(\Theta, X)$ supported on $[-\pi, \pi) \times \mathbb{R}$. Given the linear covariate $X=x$, the distribution of the circular response $\Theta$ is specified by 
\begin{align}
\Theta & =\{m(x)+\epsilon\}(\text{mod }2\pi), 
\label{eq:thetagivenX}
\end{align}
where $\epsilon$ is an angular error following an unspecified distribution with a circular mean of zero, and $m(x)$ is the circular mean of $\Theta$, i.e., the circular regression function that maps from a linear space to a circular space. The circular mean of a circular random variable $\Theta$ is defined as the angle $\alpha\in [-\pi, \pi)$ that minimizes the mean cosine dissimilarity between $\Theta$ and $\alpha$, $D(\alpha)=\IE\{1-\cos(\Theta-\alpha)\}$, which can be shown to be $\argmin D(\alpha)=\text{atan2}[\IE(\sin \Theta), \IE(\cos \Theta)]$ \citep[][Chapter 2]{mardia2000directional}. Here,   $\text{atan2}[y, x]$ is the angle between the $x$-axis and the vector from the origin to the point $(x,y)$ in the Cartesian coordinate system. 

Because $\text{atan2}[y, x]$ depends on $(x, y)$ only via the ratio $y/x$, one may infer the circular mean function $m(x)$ via inferring $m_1(x)/m_2(x)$, where $m_1(x)=\IE(\sin \Theta|X=x)$ and $m_2(x)=\IE(\cos \Theta|X=x)$, both functions mapping to a linear space as in the traditional regression setting.

\subsection{Local polynomial estimators}
Given the response data $\bTheta=(\Theta_1, \ldots,  \Theta_n)^\top$ and covariate data $\bX=(X_1, \ldots,$ $ X_n)^\top$, \cite{di2013non} proposed kernel-based estimators for $m(x)$  in the form of 
\begin{equation}\label{eq:mtilde}
    \tilde{m}(x)=\text{atan}2[\tilde{g}_{1}(x),\tilde{g}_{2}(x)], 
\end{equation}
where
\begin{equation}
    \tilde{g}_{1}(x)=\frac{1}{n}\sum_{j=1}^{n}\sin (\Theta_{j})\mathcal{W}(X_{j}-x), \,
    \, \tilde{g}_{2}(x)=\frac{1}{n}\sum_{j=1}^{n}\cos (\Theta_{j})\mathcal{W}(X_{j}-x), 
    \label{eq:g1g2tilde}
\end{equation}
in which $\mathcal{W}(t)$ is a weight function. For example, if $\mathcal{W}(t)=K_h(t)$, where $K_h(t)=K(t/h)/h$, $K(t)$ is a kernel function, and $h$ is the bandwidth, then  
\begin{align*}
\tilde g_1(x)  & = \frac{1}{n}\sum_{j=1}^n \sin (\Theta_j) K_h(X_j-x)\\
 & = \frac{\sum_{j=1}^n \sin (\Theta_j) K_h(X_j-x)}{\sum_{j=1}^n K_h(X_j-x)} \times \frac{1}{n}\sum_{j=1}^n K_h(X_j-x).
\end{align*}
Now one can see that the first factor in the preceding expression is a local constant estimator \citep{Nadaraya} for $m_1(x)$, and the second factor is a traditional kernel density estimator for the probability density function of $X$ evaluated at $x$, denoted by $f_X(x)$. Hence, with $\mathcal{W}(t)=K_h(t)$, $\tilde g_1(x)$ is a sensible estimator for $g_1(x)=m_1(x)f_X(x)$; similarly, $\tilde g_2(x)$ is an estimator for $g_2(x)=m_2(x)f_X(x)$.

The estimator $\tilde m(x)$ in \eqref{eq:mtilde} acknowledges that the traditional local polynomial estimation developed for a linear response is inadequate for a circular response, but remains applicable for estimating $g_1(x)$
 and $g_2(x)$, both functions mapping  from $\mathbb{R}$ onto another linear space. This allows estimation of the ratio $m_1(x)/m_2(x)=g_1(x)/g_2(x)$ based on two rounds of local polynomial estimation. Besides the local constant weight, \cite{di2013non} also considered using the following weight function that depends on all covariate data $\bX$, 
 \begin{equation}
 \label{eq:idealweight}
 \begin{aligned}
  \mathcal{W}(t)= &\ K_h(t)\times \frac{1}{n}  \sum_{k=1}^n \left(\frac{X_k-x}{h}\right)^2 K_h(X_k-x)-\\
  &\ \frac{t}{h} K_h(t) \times  \frac{1}{n}\sum_{k=1}^n \left(\frac{X_k-x}{h}\right) K_h(X_k-x).
  \end{aligned}
 \end{equation}
 This more complicated weight function relates to local linear estimators for the regression function of a linear response $Y$, $n^{-1}\sum_{j=1}^n Y_j \mathcal{W}^*(X_j-x)$, with the weight given by \citep{Fan}
 \begin{equation}
 \label{eq:completeweight}
 \begin{aligned}
    &\ \mathcal{W}^*(X_j-x) \\
    = &\ \mathcal{W}(X_j-x)\left[\left\{\frac{1}{n}\sum_{k=1}^n K_h(X_k-x)\right\}\left\{\frac{1}{n}\sum_{k=1}^n \left(\frac{X_k-x}{h}\right)^2K_h(X_k-x)\right\}-\right.\\
    &\ \left. \left\{\frac{1}{n}\sum_{k=1}^n \left(\frac{X_k-x}{h}\right)K_h(X_k-x)\right\}^2\right]^{-1},
\end{aligned}
\end{equation}
where $\mathcal{W}(X_j-x)$ is equal to \eqref{eq:idealweight} evaluated at $t=X_j-x$. We thus call $\mathcal{W}(X_j-x)$ the local linear weight, and refer to $\mathcal{W}^*(X_j-x)$ as the normalized local linear weight. Plugging in \eqref{eq:g1g2tilde} the local linear weight, one has $\tilde g_\ell(x)=C(\bX)\tilde m_\ell(x)$, where $C(\bX)$ is the term inside the square brackets in  \eqref{eq:completeweight}, and $\tilde m_\ell(x)$ is a local linear estimator of $m_\ell(x)$, for $\ell=1, 2$. Consequently, $\tilde m(x)$ in \eqref{eq:mtilde} is equal to $\text{atan2}[\tilde m_1(x), \tilde m_2(x)]$ since $\tilde g_1(x)/\tilde g_2(x)=\tilde m_1(x)/\tilde m_2(x)$. Under mild conditions {\revise that guarantee the consistency of a local linear estimator for the regression function of a linear response \citep[][Theorem 3.1]{Fan} and the consistency of a kernel density estimator \citep[][Section 6.2]{scott2015multivariate}}, $\tilde g_\ell(x)$ consistently estimates $g_\ell(x)=m_\ell(x)f^2_X(x)\mu_2 $, for $\ell=1, 2$, where $\mu_2=\int t^2 K(t) \, dt$. Unless otherwise stated, all integrals in this article are over the entire real line $\mathbb R$.

Following the above two examples of $\mathcal{W}(t)$, one can easily generalize the construction of $\tilde m(x)$ by adopting weight functions that relate to traditional local polynomial estimators of higher orders. For a more concrete exposition of methodology development and theoretical derivations, we focus on $\tilde m(x)$ with the local linear weight as a benchmark estimator that other estimators brought forth in the next section relate to and compare with. 

\section{Circular-linear regression with error-in-covariate}
\label{sec:regwithme}
Suppose that, instead of $\bX$, error-contaminated covariate data $\bW=(W_{1},\ldots,$ $W_{n})^\top$ are observed, where
\begin{align}
W_{j} & =X_{j}+U_{j}, 
\label{eq:WgivenX}
\end{align}
in which $U_j$ is a mean-zero measurement error following a known distribution with variance $\sigma_u^{2}$, and $U_j$ is independent of $(\Theta_j, X_j, \epsilon_j)$, for $j=1, \ldots, n$. Had one ignored measurement error, one would substitute $X_j$ with $W_j$ in \eqref{eq:g1g2tilde} for $j=1, \ldots, n$, leading to a naive estimator of $m(x)$, denoted by $ \hat m^*(x)$. Because $\hat m^{*}(x)$ is merely a sensible estimator for the circular mean of $\Theta$ given $W=x$, denoted by $m^*(x)$, a function  typically different from $m(x)$, so naive estimation of $m(x)$ using $\hat m^*(x)$ is usually misleading. 

In what follows, we develop three strategies to modify $\hat m^*(x)$ in order to account for covariate measurement error when inferring $m(x)$. To contrast with the naive estimator $\hat m^*(x)$, we call $\tilde m(x)$ in \eqref{eq:mtilde} an ideal estimator of $m(x)$ to stress that $\tilde m(x)$ is only available in the ``ideal'' situation when $\sigma_u^2=0$ and thus $\bW=\bX$. 

\subsection{The deconvoluting kernel estimator}
\label{sec:DKest}
Since the naive substitution of $X_j$'s with $W_j$'s only occurs in the weight, we first propose to correct $\hat m^*(x)$ for measurement error by correcting the naive local linear weight, 
\begin{equation*}
 \begin{aligned}
 \mathcal{W}(W_j-x)= &\ \frac{1}{n} K_h(W_j-x) \sum_{k=1}^n \left(\frac{W_k-x}{h}\right)^2 K_h(W_k-x)-  \\
 &\ \frac{1}{n}\left(\frac{W_j-x}{h}\right) K_h(W_j-x)\sum_{k=1}^n \left(\frac{W_k-x}{h}\right) K_h(W_k-x).
 \end{aligned}
\end{equation*}
More specifically, we replace the above naive weight by 
\begin{equation}
    \label{eq:DKweight}
\begin{aligned}
\mathcal{L}(W_j-x)=&\ \frac{1}{n}L_{0,h}(W_j-x)\sum_{k=1}^n \left(\frac{W_k-x}{h}\right)^2 L_{2,h}(W_k-x)-\\
&\ \frac{1}{n}\left(\frac{W_j-x}{h}\right)L_{1,h}(W_j-x)\sum_{k=1}^{n}\left(\frac{W_k-x}{h}\right)L_{1,h}(W_k-x),
\end{aligned}
\end{equation}
where, for $\ell=0, 1, 2$, $L_{\ell, h}(x)=L_\ell(x/h)/h$, and 
\begin{align}
L_\ell(x)=i^{-\ell}x^{-\ell}\frac{1}{2\pi}\int e^{-itx}\frac{\phi_{K}^{(\ell)}(t)}{\phi_{U}(-t/h)}dt, 
\label{eq:LDK}
\end{align} 
in which $i=\sqrt{-1}$, $\phi_U(t)$ is the characteristic function of the measurement error $U$, $\phi_K(t)$ is the Fourier transform of the kernel, and $\phi_{K}^{(\ell)}(t) = (\partial^{\ell}/ \partial t^\ell)\phi_{K}(t)$. In this paper, we use $\phi_A(t)$ to denote the characteristic function of $A$ if $A$ is a random variable, and to denote the Fourier transform of $A$ if $A(\cdot)$ is a function. Setting $\ell=0$ in \eqref{eq:LDK} gives the deconvoluting kernel used in the deconvoluting kernel density estimator estimator for $f_X(x)$ based on $\bW$ \citep{stefanski1990deconvolving}. \cite{Delaigle} generalized the deconvoluting kernel by introducing \eqref{eq:LDK} for $\ell=0, 1, 2, \ldots$, and used it to construct local polynomial estimators for the regression function of a linear response with error-in-covariate. 

Under suitable conditions (to be stated in Section~\ref{sec:asymp}), \cite{Delaigle} showed that 
\begin{equation}
\IE\{(W-x)^\ell L_{\ell, h}(W-x)|X\}=(X-x)^\ell K_h(X-x), \text{ for } \ell=0,1,2. \label{eq:DKmean}
\end{equation}
Because \{$(X-x)^\ell K_h(X-x),\, \ell=0, 1, 2\}$ are building blocks of the local linear weight $\mathcal{W}(X_j-x)$, \eqref{eq:DKmean} sheds light on the rationale behind the new weight $\mathcal{L}(W_j-x)$ in \eqref{eq:DKweight}: instead of naively using $\{(W-x)^\ell K_h(W-x), \, \ell=0, 1, 2\}$ to substitute these building blocks, one uses $\{(W-x)^\ell L_{\ell,h}(W-x), \, \ell=0, 1, 2\}$ as unbiased estimators of them. In this line of arguments, as in \eqref{eq:DKmean}, we treat $X$, or $X_j$'s, as if they were unknown parameters, and we estimate functions of them using $\bW$. 

Using the new weight $\mathcal{L}(W_j-x)$ in \eqref{eq:DKweight}, we have our first proposed estimator of $m(x)$ referred to as the deconvoluting kernel estimator, $\hat{m}_{\text{DK}}(x)=\text{atan2}[\hat{g}_{1,\text{DK}}(x),\hat{g}_{2,\text{DK}}(x)],$
where $\hat g_{1,\text{DK}}(x) = n^{-1}\sum_{j=1}^n\sin (\Theta_j) \mathcal{L}(W_j-x)$ and $\hat g_{2,\text{DK}}(w)=n^{-1}\sum_{j=1}^n \cos (\Theta_j)\mathcal{L}(W_j-x)$ are estimators for $g_1(x)= \scalebox{.9}{$m_1(x) f^2_X(x)\mu_2$}$ and $g_2(x)= m_2(x)f^2_X(x)\mu_2$, respectively. Instead of correcting the naive local linear weight for measurement error as we do here, the estimator proposed by \citet[][see Section 4.1.2]{di2023kernel} in the same context results from using \eqref{eq:LDK} with $\ell=0$ to correct the naive local constant weight. 

\subsection{The complex error estimator}
\label{sec:CEest}
Following a similar direction leading to $\hat{m}_{\text{DK}}(x)$, we propose a second strategy where one corrects the naive normalized local linear weight, as the naive counterpart of $\mathcal{W}^*(X_j-x)$ in \eqref{eq:completeweight}, for measurement error. This new strategy is motivated by the result in Lemma~\ref{lem:addce2} given next, which we prove in Appendix A. Denote by $\mathbb{C}$ the set of complex numbers, by $\mathbb{N}$ the set of natural numbers, and define $\mathbb{N}_0=\mathbb{N}\cup \{0\}$. 
\begin{lemma}
\label{lem:addce2}
     For an entire function $g: \mathbb{C}^n\to \mathbb{C}$, if $\IE(\bV^{\bell})=0$,  $\forall \bell \in \mathbb{N}_0^n$ such that $|\bell|>0$, then $\IE\{g(\bt+\bV)\}=g(\bt)$, where $\bt=(t_1, \ldots, t_n)^\top\in \mathbb{R}^n$, and $\bV=(V_1, \ldots, V_n)^\top$ is an $n \times 1$ random vector. Here, for $\bell =(\ell_1, \ldots, \ell_n)\in \mathbb{N}_0^n$, $\bV^{\bell}=V_1^{\ell_1}\ldots V_n^{\ell_n}$, and $|\bell|=\ell_1+\ldots+\ell_n$. 
\end{lemma}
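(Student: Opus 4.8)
The plan is to expand $g$ in its multivariate Taylor series about the point $\bt$ and then integrate term by term. Since $g$ is entire on $\mathbb{C}^n$ and $\bt\in\mathbb{R}^n\subset\mathbb{C}^n$, for every $\bz=(z_1,\ldots,z_n)^\top\in\mathbb{C}^n$ one has the absolutely convergent expansion $g(\bt+\bz)=\sum_{\bell\in\mathbb{N}_0^n}\{g^{(\bell)}(\bt)/\bell!\}\,\bz^{\bell}$, where $\bell!=\ell_1!\cdots\ell_n!$ and $g^{(\bell)}(\bt)$ denotes the mixed partial derivative of order $\bell$ of $g$ evaluated at $\bt$. Substituting $\bz=\bV$, the heart of the argument is to justify exchanging $\IE$ with the infinite sum, after which
\[
\IE\{g(\bt+\bV)\}=\sum_{\bell\in\mathbb{N}_0^n}\frac{g^{(\bell)}(\bt)}{\bell!}\,\IE(\bV^{\bell})=g(\bt),
\]
since the hypothesis annihilates every term with $|\bell|>0$ while the $\bell=\bzero$ term contributes $g(\bt)\,\IE(\bV^{\bzero})=g(\bt)$ because $\bV^{\bzero}\equiv 1$.

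To make the interchange rigorous I would apply dominated convergence (equivalently Fubini's theorem) to the partial sums. Because $g$ is entire, its Taylor coefficients about $\bt$ decay fast enough that $\sum_{\bell\in\mathbb{N}_0^n}|g^{(\bell)}(\bt)|\,\rho^{|\bell|}/\bell!<\infty$ for every $\rho>0$; hence, as soon as the coordinates of $\bV$ are almost surely bounded (or, more generally, whenever the joint absolute moments of $\bV$ do not grow too quickly relative to $\rho^{-|\bell|}$), the random series $\sum_{\bell}|g^{(\bell)}(\bt)|\,|\bV^{\bell}|/\bell!$ admits a deterministic integrable majorant and the exchange is licensed. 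This integrability of $\bV$ --- needed in any case for the moments $\IE(\bV^{\bell})$ to be well defined --- is the only analytic input beyond bookkeeping; for $n>1$ one may alternatively strip off one coordinate at a time and run the scalar version of the argument inductively, avoiding multi-index manipulations altogether.

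The step I expect to be the main obstacle is precisely this interchange of expectation and summation, since it is where the (possibly tacit) integrability and growth control on $\bV$ must enter, and it is also where the complex-valued nature of $\bV$ becomes essential: for a real random vector the hypothesis forces $\bV=\bzero$ almost surely (as $\IE(V_j^2)=0$), so the content of the lemma lives entirely in the complex setting --- e.g.\ a coordinate uniform on the unit circle has every positive-order moment equal to zero. This is exactly the ``complex error'' device that powers the estimator of Section~\ref{sec:CEest}, and the write-up should make plain that nothing beyond term-by-term vanishing of the moments is needed.
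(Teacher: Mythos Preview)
Your approach is essentially the same as the paper's: expand $g(\bt+\bV)$ in its multivariate Taylor series about $\bt$, take expectations term by term, and use the moment hypothesis to kill every term with $|\bell|>0$. If anything, you are more careful than the paper, which simply writes $\IE\{g(\bt+\bV)\}=g(\bt)+\sum_{|\bell|>0}\{D^{\bell}g(\bt)/\bell!\}\IE(\bV^{\bell})$ without discussing the interchange of $\IE$ and the infinite sum.
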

This result generalizes a similar result for a univariate entire function $g(\cdot)$ that was exploited in \cite{stefanski1989unbiased}, \cite{novick2002corrected}, and \cite{stefnovick} in measurement error problems. {\revise An entire function is a complex-valued function that is complex differentiable on the whole complex plane \citep{levin1996lectures}.} By Lemma~\ref{lem:addce2}, if measurement errors $\{U_j\}_{j=1}^n$ in \eqref{eq:WgivenX} are independent normal errors, then $\IE\{g(\bX+\bU+i\sigma_u \bZ)|\bX)=g(\bX)$, where $\bU=(U_1, \ldots, U_n)^\top$ and $\bZ=(Z_1, \ldots, Z_n)^\top$, in which $\{Z_j\}_{j=1}^n$ are independent standard normal errors that are also  independent of $\bX$ and $\bU$. This is because, now with $\bV=\bU+i \sigma_u \bZ$, $\IE(\bV^{\bell})=\prod_{j=1}^n\IE\{(U_j+i\sigma_u Z_j)^{\ell_j}\}=0$ for all $\bell\in \mathbb{N}_0^n$ as long as $|\bell|\ne 0$ in this case \citep{stefanski1989unbiased}. In conclusion, with normal measurement errors in \eqref{eq:WgivenX}, $g(\bW+i\sigma_u\bZ)$ is an unbiased estimator of $g(\bX)$. 

Now return to our goal of formulating a new weight using $\bW$ that estimates $\mathcal{W}^*(X_j-x)$ unbiasedly. Since \eqref{eq:completeweight} depends on all true covariate data, we re-write $\mathcal{W}^*(X_j-x)$ as $\mathcal{W}^*(X_j-x; \bX)$, viewed as a function of $\bX$, and we estimate this function in its entirety. {\revise In other words, we have 
\begin{equation*}
 \begin{aligned}
    \mathcal{W}^*(t; \bX) 
    = &\ \mathcal{W}(t)\left[\left\{\frac{1}{n}\sum_{k=1}^n K_h(X_k-x)\right\}\left\{\frac{1}{n}\sum_{k=1}^n \left(\frac{X_k-x}{h}\right)^2K_h(X_k-x)\right\}-\right.\\
    &\ \left. \left\{\frac{1}{n}\sum_{k=1}^n \left(\frac{X_k-x}{h}\right)K_h(X_k-x)\right\}^2\right]^{-1},
\end{aligned}
\end{equation*}
where $\mathcal{W}(t)$ is given in (4), and $t$ depends on $\bX$ via one of its entry.} Assuming normal measurement errors and that $\mathcal{W}^*(X_j-x; \bX)$ is an entire function, by Lemma~\ref{lem:addce2}, 
 $\mathcal{W}^*(W_j^*-x; \bW^*)$ is an unbiased estimator of $\mathcal{W}^*(X_j-x; \bX)$, where $W_j^*=W_j+i\sigma_u Z_j$, for $j=1, \ldots, n$, and $\bW^*=(W_1^*, \ldots, W_n^*)^\top$. Recognizing that both $\bW$ and $\bZ$ contribute to the variability of this unbiased estimator, we employ an average of it across $B$ realizations of $\bZ$ to obtain a less variable unbiased estimator for $\mathcal{W}^*(X_j-x)$,  leading to the new weight,
\begin{align}
\mathcal{L}^*(W_j-x) = &\ \frac{1}{B^*} \sum_{b=1}^{B^*} \mathcal{W}^*(W_{j,b}^*-x; \bW^*_b),
\label{eq:CEweight}
\end{align}
where $W^*_{j,b}=W_j+i \sigma_u Z_{j,b}$, for $j=1, \ldots, n$, $\bW^*_b=(W^*_{1,b}, \ldots, W^*_{n,b})^\top$, for $b=1, \ldots, B$, and $\{Z_{j,b}, \, j=1, \ldots, n\}_{b=1}^B$ are independent standard normal errors.

Using the new weight in \eqref{eq:CEweight} yields our second proposed estimator of $m(x)$ referred to as the complex error estimator, $\hat{m}_{\text{CE}}(x)=\text{atan2}[\hat{g}_{1,\text{CE}}(x),\hat{g}_{2,\text{CE}}(x)],$
where $\hat g_{1,\text{CE}}(x) = n^{-1}\sum_{j=1}^n\sin (\Theta_j) \mathcal{L}^*(W_j-x)$ estimates $m_1(x)$, and $\hat g_{2,\text{CE}}(w)=n^{-1}\sum_{j=1}^n \cos (\Theta_j)\mathcal{L}^*(W_j-x)$ estimate $m_2(x)$. {\revise In a separate study reported in \citet{Woolsey}, we extended this idea to account for non-normal measurement error by introducing hypercomplex quantities.}

\subsection{The one-step correction estimator}
\label{sec:OSest}
The common thread running through the first two strategies of correcting $\hat m^*(x)$ for measurement error is to adopt some weight that considers measurement error. Each new weight leads to non-naive estimators for some functions $g_1(x)$ and $g_2(x)$ such that $g_1(x)/g_2(x)=m_1(x)/m_2(x)$. Deviating from this theme of weight correction, we now propose a third strategy that exploits a one-step correction of a naive estimator for $g_\ell(x)$ as a whole, for $\ell\in \{1, 2\}$. This one-step correction is motivated by Lemma~\ref{lem:1steppf} that we prove in Appendix B. A similar result was also utilized by \cite{huangzhou} for local polynomial regression of a linear response with error-in-covariate. 
\begin{lemma} 
\label{lem:1steppf}
Define $g_\ell(x)=m_\ell(x)f_X(x)$ and  $g^*_\ell(x)=m^*_\ell(x)f_W(x)$, for $\ell=1, 2$, where $m^*_1(x)=\IE(\sin \Theta|W=x)$, $m^*_2(x)=\IE(\cos \Theta|W=x)$, and $f_W(x)$ is the probability density of $W$ evaluated at $x$. Then, assuming all integrals are well-defined, 
\begin{equation*}
g_\ell(x)=\frac{1}{2\pi} \int e^{-itx} \frac{\phi_{g_\ell^*}(t)}{\phi_U(t)}\, dt, \text{ for } \ell=1, 2. 
\end{equation*}
\end{lemma}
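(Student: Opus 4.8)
The plan is to show that $g^*_\ell$ is precisely the convolution of $g_\ell$ with the measurement-error density $f_U$; once this is established, the claimed identity drops out immediately from the convolution theorem for Fourier transforms together with Fourier inversion.

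First I would write $g^*_\ell(x)=m^*_\ell(x)f_W(x)$ as an integral against the joint density of $(\Theta,W)$: for $\ell=1$,
\begin{equation*}
g^*_1(x)=\int \sin\theta\, f_{\Theta,W}(\theta,x)\,d\theta,
\end{equation*}
and analogously with $\cos\theta$ in place of $\sin\theta$ for $\ell=2$. Next, since $W=X+U$ with $U$ independent of $(\Theta,X)$, the joint density factors through a convolution in its second argument, $f_{\Theta,W}(\theta,x)=\int f_{\Theta,X}(\theta,s)\,f_U(x-s)\,ds$. Substituting this and interchanging the order of integration — justified under the standing assumption that the relevant integrals converge absolutely — the inner integral over $\theta$ reproduces exactly $m_\ell(s)f_X(s)=g_\ell(s)$, leaving
\begin{equation*}
g^*_\ell(x)=\int g_\ell(s)\,f_U(x-s)\,ds=(g_\ell * f_U)(x).
\end{equation*}

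Then I would invoke the convolution theorem under the Fourier-transform convention fixed in the paper (so that $\phi_U$ is simultaneously the characteristic function of $U$ and the Fourier transform of $f_U$), which gives $\phi_{g^*_\ell}(t)=\phi_{g_\ell}(t)\,\phi_U(t)$. Dividing by $\phi_U(t)$, legitimate wherever $\phi_U(t)\neq 0$ (a condition implicit in the well-definedness hypothesis), yields $\phi_{g_\ell}(t)=\phi_{g^*_\ell}(t)/\phi_U(t)$, and applying the inversion formula $g_\ell(x)=(2\pi)^{-1}\int e^{-itx}\phi_{g_\ell}(t)\,dt$ produces the stated expression.

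The substantive content is entirely the convolution identity $g^*_\ell=g_\ell * f_U$, which is a direct consequence of the independence of $U$ from $(\Theta,X)$; everything afterward is standard Fourier analysis. The only real obstacles are technical bookkeeping that the lemma deliberately sidesteps with ``assuming all integrals are well-defined'': verifying that Fubini's theorem applies to the $\theta$–$s$ swap, that $g_\ell$ (equivalently $\phi_{g^*_\ell}/\phi_U$) is integrable so that pointwise inversion is valid, and that $\phi_U$ is non-vanishing so the division makes sense.
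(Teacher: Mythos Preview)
Your proposal is correct and follows essentially the same route as the paper's proof: both establish the convolution identity $g^*_\ell = g_\ell * f_U$ from the independence of $U$ and $(\Theta,X)$, then pass to Fourier transforms and invert. The only cosmetic difference is that the paper works through the conditional expectation $\IE(Y\mid W=w)$ and computes the Fourier transform of the convolution by hand, whereas you write the joint density $f_{\Theta,W}$ directly and invoke the convolution theorem; the mathematical content is identical.
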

Motivated by Lemma~\ref{lem:1steppf}, we propose to estimate $g_\ell(x)$ by first obtaining a naive estimator of it that essentially estimates $g^*_\ell(x)$, then transforming the naive estimator using the integral transform suggested in Lemma~\ref{lem:1steppf}. In this article, we use a local linear estimator denoted by $\hat {m}_\ell^{*}(x)$ to estimate $m_\ell^*(x)$, for $\ell=1, 2$, and use the traditional kernel density estimator to estimate $f_W(x)$, denoted by $\hat f_W(x)$, leading to a naive estimator of $g_\ell(x)$ given by $\hat g_\ell^*(x)=\hat m^*_\ell(x) \hat f_W(x)$. Our third proposed estimator of $m(x)$ referred to as the one-step correction estimator is then given by $\hat m_{\text{OS}}(x)=\mbox{atan2}[\hat g_{1, \text{OS}}(x), \hat g_{2, \text{OS}}(x)]$, where
\begin{align}
    \hat g_{\ell,\text{OS}}(x) & =\frac{1}{2\pi} \int e^{-itx} \frac{\phi_{ \hat g_\ell^*}  (t)}{\phi_U(t)}\, dt 
    \label{eq:1steppf}
\end{align}
is an estimator of $g_\ell(x)=m_\ell(x)f_X(x)$, for $\ell=1, 2$. Henceforth, we use $\mathcal{T}_U(A)(x)$ to refer to the integral transform in Lemma~\ref{lem:1steppf} of a function $A:\mathbb{R}\to \mathbb{R}$, i.e., 
\begin{equation}
  \mathcal{T}_U(A)(x)=\frac{1}{2\pi} \int e^{-itx} \frac{\phi_{A}(t)}{\phi_U(t)} dt.
  \label{eq:Tu}
\end{equation}
Then \eqref{eq:1steppf} is equivalent to $\hat g_{\ell, \text{OS}}(x)=\mathcal{T}_U(\hat g^*_\ell)(x)$, for $\ell=1, 2$. 

\subsection{Comparisons between proposed estimators}
\label{sec:compare}
With three estimators of the circular regression function $m(x)$ accounting for measurement error proposed, some remarks on comparisons between them are in order. Among the three, $\hat m_{\text{CE}}(x)$ stands out as the only one that is developed under the assumption of normal measurement error. Both $\hat m_{\text{DK}}(x)$ and $\hat m_{\text{OS}}(x)$ are practically applicable as long as $\phi_U(t)$ never vanishes since  $\hat m_{\text{DK}}(x)$ depends on the integral  in \eqref{eq:LDK}, and $\hat m_{\text{OS}}(x)$ hinges on the integral in \eqref{eq:1steppf}, both integrals involve $\phi_U(\cdot)$ at the denominator of the integrand. 

By construction, $\hat m_{\text{OS}}(x)$ corrects naive estimation for measurement error at a higher level in the sense that, unlike $\hat m_{\text{DK}}(x)$ that accounts for measurement error by correcting the naive weight, $\hat m_{\text{OS}}(x)$ corrects naive estimation of $g_\ell (x)$ via transforming the naive estimator as a whole. Similar to $\hat m_{\text{DK}}(x)$, $\hat m_{\text{CE}}(x)$ results from correcting the normalized naive weight, but we correct it in its entirety instead of correcting it term-by-term as done in deriving the new weight for $\hat m_{\text{DK}}(x)$. In summary, we correct the naive estimator $\hat m^*(x)$ at an increasingly higher level as we progress from the first non-naive estimator to the third one in Section~\ref{sec:DKest}--\ref{sec:OSest}, i.e., from $\hat m_{\text{DK}}(x)$ to $\hat m_{\text{CE}}(x)$, then to $\hat m_{\text{OS}}(x)$.

On the other hand, $\hat m_{\text{DK}}(x)$ and $\hat m_{\text{OS}} (x)$ exploit integral transforms of a similar form, with the former using \eqref{eq:LDK} to correct a naive weight, while the latter utilizing \eqref{eq:1steppf} to correct a naive estimator of $g_\ell(x)$. Although no integral transform is involved in $\hat m_{\text{CE}}(x)$, it intrinsically relates to a similar integral transform because of Lemma~\ref{lem:addce2} that motivates it. To see this connection, consider the special case with $n=1$ in Lemma~\ref{lem:addce2} and let $V=U+A$, where $U$ is the measurement error. The key there to deriving an unbiased estimator of $g(X)$ based on  $W$ is to find a random variable $A$ such that all moments of $V=U+A$ are equal to zero, because then one has $\IE\{g(W+A)|X\}=\IE\{g(X+U+A)|X\}=\IE\{g(X+V)|X\}=g(X)$ according to Lemma~\ref{lem:addce2}. If one further assumes $A\perp U$, then  $\phi_V(t)=\phi_U(t)\phi_A(t)$, and thus $\phi_A(t)=\phi_V(t)/\phi_U(t)$. By the Fourier inversion theorem, the probability density function of $A$ is 
$f_A(a)=(2\pi)^{-1} \int e^{-ita} \phi_V(t)/\phi_U(t)\, dt$,
which is an integral transform similar to \eqref{eq:LDK} and \eqref{eq:1steppf}. Depending on the distribution of $U$, there may not exist such a $V$ (with all moments equal to zero); consequently, the so-obtained $f_A(a)$ may not be well-defined. But, it has been shown that, if $U\sim N(0, \sigma_u^2)$, then  letting $A=i\sigma_u Z$ gives rise to a desired $V$, with $Z\sim N(0, 1)$ and $Z\perp (U, X)$ \citep{stefanski1989unbiased}. In conclusion, all three proposed estimators have connections with some integral transform relating to deconvolution.  
After comparing the three proposed estimators in regard to assumptions, constructions, and rationales, we present more formally technical conditions imposed on these estimators next, under which we investigate their asymptotic properties.

\section{Asymptotic analysis}
\label{sec:asymp}
\subsection{Conditions and notations}
When studying properties of their proposed local polynomial estimators with error in covariates,  \cite{Delaigle} imposed different sets of conditions on the kernel $K(t)$ depending on the measurement error distribution characterized by $\phi_U(t)$. These conditions are also needed for our proposed estimators depending on the smoothness of $U$ \citep{fan1991asymptotic}. More specifically, if $\lim_{t\to \infty} t^\beta \phi_U(t)=c$ and $\lim_{t\to \infty} t^{\beta+1} \phi'_U(t)=-c\beta$ for some constants $c>0$ and $\beta>1$, then we say that (the distribution of) $U$ is ordinary smooth of order $\beta$. The Laplace distribution is an example of ordinary smooth distribution of order 2. If, for some positive constants $d_{0}$, $d_1$, $\gamma$, and $\beta$, and some real-valued constants $\beta_0$ and $\beta_1$,  
$d_{0}|t|^{\beta_{0}}\exp(-|t|^{\beta}/\gamma)\leq |\phi_U(t)|\leq d_{1}|t|^{\beta_{1}}\exp(-|t|^{\beta}/\gamma)$, as $t\to \infty$, then we say that $U$ is super smooth of order $\beta$. A normal distribution is super smooth of order 2. 

Because all proposed estimators are of the form $\hat{m}_{\cdot}(x)=\scalebox{.98}{$\text{atan2}[\hat{g}_{1,\cdot}(x),\hat{g}_{2,\cdot}(x)]$}$, where ``$\cdot$'' generically refers to an acronym (``DK'',  ``CE'', or ``OS'') relating to a proposed estimator, we first study asymptotic properties of $\hat{g}_{1,\cdot}(x)$ and $\hat{g}_{2,\cdot}(x)$. We then use these findings  to establish the asymptotic moments and distribution of $\hat m_{\cdot}(x)$ through a Taylor expansion of the atan2 function. Conditions frequently referenced in theorems presented next are listed below. Similar conditions are included in \cite{Delaigle} and \cite{huangzhou} in their asymptotic analyses.
\begin{itemize}
\item[]
\hspace{-0.35in}{\textbf{Condition O}}:  For $\ell= 0, 1, 2, 3$, $\|\phi_K^{(\ell)}(t)\|_\infty<\infty$ and $\int (|t|^\beta+|t|^{\beta-1}) |\phi_K^{(\ell)}(t)|dt<\infty$. For  $0\le k, \ell \le 2$, $\int |t|^{2\beta}  |\phi_K^{(k)}(t)||\phi_K^{(\ell)}(t)|dt<\infty$, and $\|\phi_U'(t)\|_\infty<\infty$.
\item[]
\hspace{-0.35in}{\textbf{Condition S}}: For $\ell= 0,1,2$, $\|\phi_K^{(\ell)}(t)\|_\infty<\infty$; $\phi_K(t)$ is supported on $[-1, 1]$.
\end{itemize}
{\revise The first set of conditions is imposed for asymptotic analysis when $U$ is \underline{o}rdinary smooth, and the latter set is imposed when $U$ is \underline{s}uper smooth.   For example, the characteristic function of a mean-zero Laplace distribution with variance $\sigma_u^2$, $\phi_U(t)=1/(1+\sigma_u^2 t^2/2)$, satisfies {\bf Condition O}. As for the kernel $K(\cdot)$, the Gaussian kernel satisfies constraints under {\bf Condition O} because, with $\phi_K(t)=e^{-t^2/2}$, $\phi_K(t)$ is continuously differentiable with bounded derivatives, and both integral constraints relate to integrals that define moments of normal or half normal distributions, which are well-defined at all orders. An example of kernels satisfying {\bf Condition S} that we use in our simulation experiments has its Fourier transform given by $\phi_K(t)=(1-t^2)^3\boldsymbol{1}_{\{-1 \le t\le 1\}}$.} Lastly, for $\ell\in \mathbb{N}_0$, define $\mu_\ell = \int t^\ell K(t) dt$ and $\nu_\ell = \int t^\ell K^2(t) dt$.

\subsection{Asymptotic results}
We derive the asymptotic bias and variance of  $\hat m_{\text{DK}} (x)$,  $\hat m_{\text{CE}} (x)$, and  $\hat m_{\text{OS}} (x)$ in Appendices D, E, and F, respectively. For ease of comparison, we subsume results relating to the asymptotic bias of these estimators in Theorem~\ref{thm:asympbias}, followed by results regarding their asymptotic variance summarized in Theorem~\ref{thm:asympvar}. 
\begin{theorem}
\label{thm:asympbias}
When $U$ is ordinary smooth of order $\beta$, then under Condition O, 
if $nh^{1+2\beta}\to \infty$ as $n\to \infty$ and $h\to 0$, 
\begin{enumerate}[(i)]
\item 
\begin{align*}
  &\ \text{Bias}
  \{\hat{m}_{\text{DK}}(x)|\bX\} \\ 
  = &\ \frac{h^{2}\mu_{2}}{2}\left\{m^{(2)}(x)+2m'(x)\frac{m_{1}(x)m'_{1}(x)+m_{2}(x)m'_{2}(x)}{m_{1}^{2}(x)+m_{2}^{2}(x)}\right\}+o(h^{2})\\
    &\ + \frac{f_W(x)}{nh^{1+2\beta}f^2_X(x)}\bigg[\frac{\{\eta(0,2)-\eta(1,1)\}\{m_1^*(x)m_2(x)-m_1(x)m_2^*(x)\}}{\mu_2 \left\{m_1^2(x)+m_2^2(x)\right\}}\\
    &\ - \left. \eta(0,0) \frac{m_1(x)m_2(x)\left\{\xi^*_1(x)-\xi^*_2(x)\right\}+\psi^*(x) \left\{ m_2^2(x)-m_1^2(x)\right\}}{\{m_1^2(x)+m_2^2(x)\}^2} \right]\\
    &\ +o_p\left(\frac{1}{nh^{1+2\beta}}\right),
\end{align*}
where $\xi_1^*(x)=\IE\{(\sin \Theta)^2|W=x\}$, $\xi_2^*(x)=\IE\{(\cos \Theta)^2|W=x\}$, \\$\psi^*(x)=\IE(\sin \Theta \cos \Theta|W=x)$, and, for $k, \ell\in \{0, 1, 2\}$, 
 $\eta(k, \ell)= \{i^{-(k+\ell)}/(2\pi c^2)\}$ \\$\int t^{2\beta} \phi_K^{(k)}(t) \phi_K^{(\ell)}(-t) dt$, in which $c$ is the positive constant appearing in the definition of ordinary smooth $U$;
 \item 
\begin{align*}
&\ \text{Bias}\{ \hat m_{\text{OS}}(x)|\bX \} \\
= &\ \frac{h^2 \mu_2}{2}\left[\frac{m_2(x)\mathcal{T}_U(M_1)(x)-m_1(x) \mathcal{T}_U(M_2)(x)}{f_X(x) \left\{m^2_1(x)+m^2_2(x) \right\}}\right] +o\left(h^2\right) \\
&\ +\frac{\eta(0,0)f_W(x)}{nh^{1+2\beta}f_X^2(x) \left\{m_1^2(x)+m^2_2(x)\right\}^2} \bigg[m_1(x)m_2(x) \left\{ \sigma^{*2}_1(x)-\sigma^{*2}_2(x)\right\} \\
&\ +\left\{m^2_2(x)-m^2_1(x)\right\}\left\{\psi^*(x)-m_1^*(x)m_2^*(x) \right\}\bigg]+o_p\left(\frac{1}{nh^{1+2\beta}} \right), 
\end{align*}
where $\mathcal{T}_U$ is the integral transform defined in \eqref{eq:Tu}, \\$M_\ell(x) = m^*_\ell(x)f^{(2)}_W(x)+m^{*(2)}_\ell(x)f_W(x)$, for $\ell=1, 2$, $\sigma^{*2}_1(x)=\text{Var}(\sin \Theta|W=x)$, and $\sigma^{*2}_2(x)=\text{Var}(\cos \Theta|W=x)$. 
\end{enumerate}
When $U$ is super smooth of order $\beta$, if  $nh^{1-2\beta_2}\exp(-2h^{-\beta}/\gamma)\to \infty$ as $n\to \infty$ and $h\to 0$, where $\beta_2=\beta_0I(\beta_0<0.5)$, then, under Condition S,
\begin{enumerate}[(i)]
\item 
\begin{align*}
&\ \text{Bias}\left\{\hat{m}_{\text{DK}}(x)|\bX\right\} \\
=&\ \frac{h^2 \mu_2}{2}\left\{m^{(2)}(x)+2m'(x) \frac{m_1(x)m'_1(x)+m_2(x)m_2'(x)}{m_1^2(x)+m_2^2(x)} \ \right\}+o(h^2)\\
  &\ + O_p\left(\frac{\exp(2h^{-\beta}/\gamma)}{nh^{1-2\beta_2}}\right);
\end{align*}
  \item
  \begin{align*}
&\ \text{Bias}\left\{\hat m_{\text{OS}}(x)|\bX\right\} \\
 = &\ \frac{h^2 \mu_2}{2}\left[ \frac{m_2(x)\mathcal{T}_U(M_1)(x)-m_1(x) \mathcal{T}_U(M_2)(x)}{f_X(x) \left\{m^2_1(x)+m^2_2(x) \right\}}\right] +o(h^2)\\
&\  +O_p\left( \frac{\exp(2h^{-\beta}/\gamma)}{nh^{1-2\beta_2}}\right);
 \end{align*}
 \item more specifically assuming $U\sim N(0, \sigma_u^2)$, and thus $\beta=2$, $\beta_2=0$, and $\gamma=2/\sigma_u^2$, then, if $B/n$ tends to a positive constant as $n\to \infty$, 
\begin{align*}
&\ \text{Bias}\left\{\hat{m}_{\text{CE}}(x)|\bX\right\}\\
= &\ \frac{h^2 \mu_2}{2}\left\{m^{(2)}(x)+2m'(x) \frac{m_1(x)m'_1(x)+m_2(x)m_2'(x)}{m_1^2(x)+m_2^2(x)}\right\} \\
&\ +\frac{h^2}{f_X(x)\left\{m_1^2(x)+m_2^2(x)\right\}^2}\Big[ 2 (1-\mu_2) f_X'(x) m_1(x) m_2(x) \left\{m_1(x) m'_1(x)\right.\\
&\ \left. -m_2(x)m'_2(x)\right\}
+ \mu_2 \left\{f_X(x)- f'_X(x)\right\}\left\{m_1^2(x)-m_2^2(x) \right\}\left\{m_1(x)m'_2(x)\right.\\
&\ \left. +m'_1(x)m_2(x)\right\}\Big] +o(h^2)
+O_p\left(\frac{\exp(\sigma_u^2 h^{-2})}{nh}\right).
\end{align*}
\end{enumerate}
\end{theorem}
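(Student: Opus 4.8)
The plan is to exploit the common structure $\hat m_\cdot(x)=\text{atan2}[\hat g_{1,\cdot}(x),\hat g_{2,\cdot}(x)]$ and reduce everything to (a) the conditional-on-$\bX$ bias and covariance of the pair $(\hat g_{1,\cdot}(x),\hat g_{2,\cdot}(x))$ and (b) a second-order Taylor expansion of the $\text{atan2}$ map about $(g_1(x),g_2(x))$. Under the standing assumption $f_X(x)>0$ and $m_1^2(x)+m_2^2(x)>0$, the quantity $g_1^2(x)+g_2^2(x)$ equals a positive constant times $m_1^2(x)+m_2^2(x)$, so $\text{atan2}$ is smooth at that point and $\hat m_\cdot(x)-m(x)=\nabla\,\text{atan2}\cdot(\hat g_\cdot(x)-g(x))+\tfrac12(\hat g_\cdot(x)-g(x))^\top H\,(\hat g_\cdot(x)-g(x))+(\text{cubic remainder})$. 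Taking $\IE(\cdot\mid\bX)$, the first term gives $\nabla\,\text{atan2}\cdot\text{Bias}(\hat g_\cdot\mid\bX)$ (the $h^2$ leading bias) and the second gives $\tfrac12\,\text{tr}\{H\,\text{Cov}(\hat g_\cdot\mid\bX)\}$ (the slower bias of order $(nh^{1+2\beta})^{-1}$ or its super-smooth analogue). The explicit partials $\partial_a\text{atan2}[a,b]=b/(a^2+b^2)$, $\partial_b\text{atan2}[a,b]=-a/(a^2+b^2)$ evaluated at $(g_1,g_2)$, together with $g_1/g_2=m_1/m_2=\tan m$ and the chain-rule identities $m'=(m_1'm_2-m_1m_2')/(m_1^2+m_2^2)$ and $m''=(m_1''m_2-m_1m_2'')/(m_1^2+m_2^2)-2m'(m_1m_1'+m_2m_2')/(m_1^2+m_2^2)$, are exactly what convert the raw $\hat g_\ell$-biases into the $m,m',m''$ form displayed; the Hessian entries similarly turn the conditional variances and covariances of $\hat g_\ell$ into the $\xi^*_\ell,\sigma^{*2}_\ell,\psi^*,m^*_\ell$ expressions.

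For $\hat m_{\text{DK}}$, I would expand $\hat g_{\ell,\text{DK}}(x)=n^{-2}\sum_{j,k}r_{\ell,j}\{L_{0,h}(W_j-x)((W_k-x)/h)^2L_{2,h}(W_k-x)-((W_j-x)/h)L_{1,h}(W_j-x)((W_k-x)/h)L_{1,h}(W_k-x)\}$, with $r_{1,j}=\sin\Theta_j$, $r_{2,j}=\cos\Theta_j$, and split into off-diagonal ($j\ne k$) and diagonal ($j=k$) parts. On the off-diagonal, the unbiasedness identity \eqref{eq:DKmean}, together with $U_j\perp(\Theta_j,X_j)$ and conditional independence of the $W_k$, collapses the conditional mean to the building blocks $(X-x)^\ell K_h(X-x)$, so a change of variables and Taylor expansion reproduce the ideal estimator's structure and hence, after the $\text{atan2}$ expansion, the ideal bias $\tfrac{h^2\mu_2}{2}\{m^{(2)}+2m'(m_1m_1'+m_2m_2')/(m_1^2+m_2^2)\}$. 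The diagonal part carries the deconvolution cost: using $((W-x)/h)^\ell L_{\ell,h}(W-x)=(2\pi h)^{-1}i^{-\ell}\int e^{-it(W-x)/h}\phi_K^{(\ell)}(t)/\phi_U(-t/h)\,dt$ and Parseval, $\IE$ of the products $L_{k,h}L_{\ell,h}$ becomes a double Fourier integral whose leading asymptotics, under $\phi_U(t/h)\sim c(t/h)^{-\beta}$ and Condition~O (which makes the $|t|^{2\beta}$-weighted integrals of $\phi_K^{(k)}\phi_K^{(\ell)}$ finite), produce the $\eta(k,\ell)$ constants times $h^{-1-2\beta}$; the factors $f_W(x)$ and the conditional-on-$W$ moments $\xi^*_\ell(x),\psi^*(x),m^*_\ell(x)$ appear because the peaked deconvoluting kernels localize in $W$-space at $x$, so averages $n^{-1}\sum_j r^{(\cdot)}_{\ell,j}\times(\text{peaked function of }W_j)$ converge to $\IE\{r^{(\cdot)}\mid W=x\}f_W(x)$ times the appropriate scale. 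The hypothesis $nh^{1+2\beta}\to\infty$ is precisely what forces this term to vanish.

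For $\hat m_{\text{OS}}$ the route is shorter by linearity of $\mathcal{T}_U$: $\hat g_{\ell,\text{OS}}(x)=\mathcal{T}_U(\hat g^*_\ell)(x)$ with $\hat g^*_\ell=\hat m^*_\ell\hat f_W$, so $\IE\{\hat g_{\ell,\text{OS}}(x)\mid\bX\}=\mathcal{T}_U(\IE\{\hat g^*_\ell(\cdot)\mid\bX\})(x)$, and the naive quantity admits the textbook expansion $\IE\{\hat g^*_\ell(\cdot)\mid\bX\}=g^*_\ell+\tfrac{h^2\mu_2}{2}\{m^*_\ell f_W^{(2)}+m^{*(2)}_\ell f_W\}+o(h^2)+(\text{stochastic})$ from the local-linear bias of $\hat m^*_\ell$ and the kernel-density bias of $\hat f_W$; applying $\mathcal{T}_U$ sends $g^*_\ell$ to $g_\ell$ by Lemma~\ref{lem:1steppf} and $M_\ell$ to $\mathcal{T}_U(M_\ell)$, and the $(nh^{1+2\beta})^{-1}$ term again arises from $\tfrac12\,\text{tr}\{H\,\text{Cov}(\hat g_{\text{OS}}\mid\bX)\}$, the covariance inheriting the deconvolution amplification through $\mathcal{T}_U$ and the conditional-on-$W$ variances $\sigma^{*2}_\ell(x)$ and covariance $\psi^*(x)-m^*_1(x)m^*_2(x)$. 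The super-smooth cases (ii) for DK and OS run through the identical decomposition, but the amplification factor $h^{-1-2\beta}$ is replaced by $h^{-1+2\beta_2}\exp(2h^{-\beta}/\gamma)$ coming from the lower bound $|\phi_U(t)|\ge d_0|t|^{\beta_0}\exp(-|t|^\beta/\gamma)$ under Condition~S; because the attendant constant is not clean when $\phi_K$ is compactly supported, only an $O_p$ order is recorded.

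The part I expect to be the main obstacle is $\hat m_{\text{CE}}$, item~(iii). Lemma~\ref{lem:addce2} delivers the leading behaviour cleanly: with normal $U$, the perturbed weight $\mathcal{L}^*(W_j-x)$ in \eqref{eq:CEweight} is an unbiased estimator, over $(\bW,\bZ)$ given $\bX$, of $\mathcal{W}^*(X_j-x;\bX)$, so $\IE\{\hat g_{\ell,\text{CE}}(x)\mid\bX\}$ equals the conditional mean of the ideal normalized local-linear estimator, yielding the ideal bias line. The difficulty is the additional deterministic $h^2$ term and the $O_p(\exp(\sigma_u^2h^{-2})/(nh))$ remainder: the former requires handling the fact that $\mathcal{W}^*(\,\cdot\,;\bX)$ is not genuinely entire (its normalizing factor $C(\bX)^{-1}$ has poles), so Lemma~\ref{lem:addce2} applies only after localizing away from the zero set of $C(\bX)$, leaving an $O(h^2)$ correction that must be computed by expanding $C(\bX)$, $\hat f_W$ and $\hat f_W'$ about $f_W$, $f_W'$ and re-collecting through the $\text{atan2}$ Hessian; the latter requires bounding the extra variability injected by the complex Monte-Carlo perturbation, which carries the super-smooth deconvolution rate $\exp(\sigma_u^2h^{-2})$, damped by the $1/B$ from averaging over $B$ replicates (with $B$ of order $n$) and a further $1/n$ from $n^{-1}\sum_j$. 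The most delicate bookkeeping is verifying that all cross terms among the ideal noise, the perturbation noise, and the $O(h^2)$ biases are of strictly smaller order than the stated remainder under the rate $nh\exp(-\sigma_u^2h^{-2})\to\infty$, so that only the displayed contributions survive.
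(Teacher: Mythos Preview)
Your plan for $\hat m_{\text{DK}}$ and $\hat m_{\text{OS}}$ is essentially the paper's own route: a second-order Taylor expansion of $\text{atan2}$ about $(g_1,g_2)$, fed by the conditional bias and covariance of $(\hat g_{1,\cdot},\hat g_{2,\cdot})$. For DK the paper also splits the weight into the $j\ne k$ part (reduced to the ideal local-linear weight via \eqref{eq:DKmean}) and the $j=k$ part (whose conditional mean is evaluated via Lemmas~\ref{lem:DFCB3}--\ref{lem:DFCB9}, producing the $\eta(k,\ell)h^{-1-2\beta}$ constants and the $f_W(x)$, $m^*_\ell(x)$, $\xi^*_\ell(x)$, $\psi^*(x)$ factors). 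For OS the paper simply imports the $\hat g_{\ell,\text{OS}}$ moments from \citet{huangzhou} and plugs them into the same Taylor formula, just as you sketch.

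For $\hat m_{\text{CE}}$ you have correctly flagged the hard part but misdiagnosed the mechanism behind the extra deterministic $h^2$ term. The paper does \emph{not} attribute it to failure of entireness of $\mathcal{W}^*(\,\cdot\,;\bX)$; it explicitly assumes the entire-function hypothesis of Lemma~\ref{lem:addce2} throughout Appendix~E, so $\IE\{\hat g_{\ell,\text{CE}}(x)\mid\bX,\bTheta\}=\tilde g_\ell(x)$ exactly and the first-order Taylor term yields only the ideal bias. The additional $h^2$ piece comes entirely through the Hessian term, because---unlike DK and OS---$\text{Var}\{\hat g_{\ell,\text{CE}}(x)\mid\bX\}$ and $\text{Cov}\{\hat g_{1,\text{CE}}(x),\hat g_{2,\text{CE}}(x)\mid\bX\}$ have \emph{leading order} $O(h^2)$, not $o(1)$. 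The source is the cross-replicate moment $\IE\{\hat g^*_{\ell,a}(x)\hat g^*_{\ell,b}(x)\mid\bX\}$ for $a\ne b$: the complex perturbations $\bZ_a,\bZ_b$ are independent, but both $\hat g^*_{\ell,a}$ and $\hat g^*_{\ell,b}$ are built on the same $\bW$, so this cross-moment is \emph{not} $[\IE\{\tilde g_\ell(x)\mid\bX\}]^2$. The paper computes it by conditioning on $(\bW^*_a,\bW^*_b,\bX)$, writing the weighted-least-squares form $\be_1^\top(\bG_a^\top\bOmega_a\bG_a)^{-1}\bG_a^\top\bOmega_a\bSigma\,\bOmega_b\bG_b(\bG_b^\top\bOmega_b\bG_b)^{-1}\be_1$, and then reducing the inner expectations $\IE\{(W^*_{j,a}-x)^{\ell-1}K_h(W^*_{j,a}-x)\mid W_j\}$ to $h^{\ell-1}K_{U,\ell-1,h}(W_j-x)$ by Fourier inversion; the diagonal ($j=k$) contribution then carries the $O_p(\exp(\sigma_u^2 h^{-2})/(nh))$ rate, while the off-diagonal contribution minus $[\IE\{\tilde g_\ell\mid\bX\}]^2$ is the $O(h^2)$ residual that feeds the Hessian. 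Your entireness-localization idea would not produce these terms; the calculation you need is the cross-replicate second moment.
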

According to Theorem~\ref{thm:asympbias}, the dominating bias of each proposed estimator consists of a term of order $O(h^2)$ and a term of order depending on the smoothness of $U$. In particular, the term of order $O(h^2)$ in the dominating bias of $\hat m_{\text{DK}}(x)$ coincides with that of $\tilde m(x)$ \citep[see Theorem 4 in][]{di2013non}, which depends on the curvature of $m(x)$. The term that depends on the smoothness of $U$ can involve naive counterparts of $m_\ell(x)$ and other functionals, which is dominated by the first term provided that $n\to \infty$ much faster than $h\to 0$, and the requirement on $n$ to achieve this is more demanding when  $U$ is super smooth than when it is ordinary smooth, or when the order of smoothness $\beta$ is larger. 
\begin{theorem}
    \label{thm:asympvar}
When $U$ is ordinary smooth of order $\beta$, under Condition O, 
if $nh^{1+2\beta}\to \infty$ as $n\to \infty$ and $h\to 0$, then  
\begin{enumerate}[(i)]
    \item 
\begin{align*}
    &\ \text{Var}\left\{\hat{m}_{\text{DK}}(x)|\bX\right\} \\
    & = \frac{f_W(x)\left\{ m_1^2(x)\xi^*_2(x)+m_2^2(x)\xi_1^*(x)-2m_1(x)m_2(x) \psi^*(x)\right\}\eta(0,0)}{nh^{1+2\beta}f_X^2(x)    \left\{m_1^2(x)+m_2^2(x)\right\}^2}\\
    &\ +o_p\left(\frac{1}{nh^{1+2\beta}}\right);
    \end{align*}
    \item 
    \begin{align*}
    &\ \text{Var}\left\{\hat m_{\text{OS}}(x)|\bX\right\} \\
    = &\ \frac{f_W(x) \eta(0,0)}{nh^{1+2\beta} f^2_X(x) \left\{m^2_1(x)+m^2_2(x)\right\}^2}\bigg( m^2_1(x) \xi^*_2(x)+m^2_2(x) \xi^*_1(x)\\ 
    &\ -2m_1(x)m_2(x) \psi^*(x) 
-\left\{m_1(x)m^*_2(x)-m^*_1(x) m_2(x) \right\}^2\\ 
&\  +4m(x)\bigg[m_1(x)m_2(x) \left\{\sigma^{*2}_2(x)-\sigma^{*2}_1(x)\right\}+\phi^*(x) \{m^2_1(x)\\ 
&\ -m^2_1(x)\}\bigg]\bigg) +o_p\left( \frac{1}{nh^{1+2\beta}}\right),
\end{align*}
where $\phi^*(x)=\text{Cov}(\sin \Theta, \cos\Theta|W=x)$.
\end{enumerate}
When $U$ is super smooth of order $\beta$, if $nh^{1-2\beta_2}\exp(-2h^{-\beta}/\gamma)\to \infty$ as $n\to \infty$ and $h\to 0$, then under Condition S, 
\begin{enumerate}[(i)]
    \item $\text{Var}\{\hat m_{\text{DK}}(x)|\bX\}=O_p\left(\displaystyle{\frac{\exp(2h^{-\beta}/\gamma)}{nh^{1-2\beta_2}}}\right)$; 
    \item $\text{Var}\{\hat m_{\text{OS}}(x)|\bX\}=O_p\left(\displaystyle{\frac{\exp(2h^{-\beta}/\gamma)}{nh^{1-2\beta_2}}}\right)$; 
    \item more specifically assuming $U\sim N(0, \sigma_u^2)$, and thus $\beta=2$, $\beta_2=0$, and $\gamma=2/\sigma_u^2$, then, if $B/n$ tends to a positive constant as $n\to \infty$,
\begin{align*}
&\ \text{Var}\{\hat m_{\text{CE}}(x)|\bX\} \\
= &\ \scalebox{.98}{$2h^2\left\{(2\mu_2-1)\frac{f_X'(x)}{f_X(x)}-\mu_2\right\}\frac{m_1(x)m_2(x)\left\{m_1(x)m_2'(x)+m'_1(x)m_2(x)\right\}}{\left\{ m_1^2(x)+m^2_2(x)\right\}^2}$}\\
     &\ +o(h^2)+O_p\left(\frac{\exp\left(\sigma_u^2h^{-2}\right)}{nh}\right).
\end{align*}
\end{enumerate}
\end{theorem}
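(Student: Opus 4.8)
The plan is to treat all three estimators by a common device --- a second-order stochastic expansion of the $\text{atan2}$ map --- and then reduce everything to the conditional second moments of the numerator estimators $\hat g_{1,\cdot}(x)$ and $\hat g_{2,\cdot}(x)$, where ``$\cdot$'' denotes DK, OS, or CE. Each $\hat g_{\ell,\cdot}(x)$ is, conditionally on $\bX$, consistent for $\kappa(x)m_\ell(x)$ for a common positive factor $\kappa(x)$: $\kappa=f_X^2\mu_2$ for DK, $\kappa=f_X$ for OS, and $\kappa\equiv 1$ for CE. Writing $g_\ell(x)=\kappa(x)m_\ell(x)$ and $\delta_\ell=\hat g_{\ell,\cdot}(x)-g_\ell(x)$ (and suppressing the argument $x$), I would expand $\text{atan2}$ about $(g_1,g_2)$,
\begin{align*}
\hat m_{\cdot}(x)-m(x)
= &\ \frac{m_2\delta_1-m_1\delta_2}{\kappa\{m_1^2+m_2^2\}}\\
&\ -\frac{m_1m_2(\delta_1^2-\delta_2^2)-(m_2^2-m_1^2)\delta_1\delta_2}{\kappa^2\{m_1^2+m_2^2\}^2}+(\text{l.o.t.}),
\end{align*}
which on taking conditional variances gives, to leading order,
\begin{equation*}
\text{Var}\{\hat m_{\cdot}(x)\mid\bX\}
=\frac{m_2^2\,\text{Var}(\hat g_{1,\cdot}\mid\bX)+m_1^2\,\text{Var}(\hat g_{2,\cdot}\mid\bX)-2m_1m_2\,\text{Cov}(\hat g_{1,\cdot},\hat g_{2,\cdot}\mid\bX)}{\kappa^2\{m_1^2+m_2^2\}^2}+(\text{l.o.t.}).
\end{equation*}
The factor $\kappa$ cancels, which is why it never appears in the statement; for OS and CE the quadratic term of the expansion must be retained, because it turns out to contribute at the order of the displayed leading term. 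The task is thereby reduced to the conditional second moments of the pairs $\{\hat g_{1,\cdot},\hat g_{2,\cdot}\}$.

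For $\hat m_{\text{DK}}$ I would first note that the internal sample averages in the weight $\mathcal{L}(W_j-x)$ of \eqref{eq:DKweight} concentrate, by \eqref{eq:DKmean}, around $f_X(x)\mu_2$ and $hf_X'(x)\mu_2$, so that up to lower-order terms $\mathcal{L}(W_j-x)=\mu_2\{f_X(x)L_{0,h}(W_j-x)-f_X'(x)(W_j-x)L_{1,h}(W_j-x)\}$ and $\hat g_{\ell,\text{DK}}(x)$ becomes an average of conditionally (given $\bX$) independent terms. Its second moments are then controlled by those of the deconvoluting kernels: under Condition~O and the ordinary-smooth tail $\phi_U(t)\sim ct^{-\beta}$, a Parseval computation gives $\IE\{L_{0,h}^2(W-x)\mid X\}=\eta(0,0)h^{-1-2\beta}f_{W\mid X}(x)\{1+o(1)\}$, with the $L_{1,h}$ and cross terms of smaller order $h^{-2\beta}$; here $f_{W\mid X}$ is the conditional density of $W$ given $X$. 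Averaging over $j$ and converting $X$-conditional to $W$-conditional moments via $\Theta\perp W\mid X$ --- which turns $\IE\{(\sin\Theta)^2\mid X\}f_{W\mid X}(x)$ into $\xi^*_1(x)f_W(x)$, $\IE\{(\cos\Theta)^2\mid X\}f_{W\mid X}(x)$ into $\xi^*_2(x)f_W(x)$, and the cross term into $\psi^*$ --- yields $\text{Var}(\hat g_{\ell,\text{DK}}\mid\bX)$ and $\text{Cov}(\hat g_{1,\text{DK}},\hat g_{2,\text{DK}}\mid\bX)$; substitution into the $\text{atan2}$ formula with $\kappa=f_X^2\mu_2$ produces part~(i). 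For super-smooth $U$ the same route works under Condition~S, except that $\int_{-1}^1|\phi_K(t)|^2/|\phi_U(-t/h)|^2\,dt$ is, by the super-smooth lower bound on $|\phi_U|$, at most a constant times $h^{2\beta_2}\exp(2h^{-\beta}/\gamma)$, so the second moments, and hence the variances, are $O(h^{2\beta_2-1}\exp(2h^{-\beta}/\gamma))$ --- exactly the stated order, with the hypothesis $nh^{1-2\beta_2}\exp(-2h^{-\beta}/\gamma)\to\infty$ ensuring this is the dominant term.

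For $\hat m_{\text{OS}}$, since $\hat g_{\ell,\text{OS}}(x)=\mathcal{T}_U(\hat g^*_\ell)(x)$ with $\mathcal{T}_U$ the linear operator \eqref{eq:Tu} and $\hat g^*_\ell=\hat m^*_\ell\hat f_W$ is, to leading order, an average of kernel weights, $\hat g_{\ell,\text{OS}}(x)$ is again essentially an average of conditionally independent deconvoluting-type terms and the DK computation transfers to give the leading variance orders $(nh^{1+2\beta})^{-1}$ and $O(h^{2\beta_2-1}\exp(2h^{-\beta}/\gamma))$. The extra terms in part~(ii) come from the finer structure: $\hat m^*_\ell$ is a ratio and $\hat f_W$ is shared across $\ell=1,2$, so $\hat g^*_1$ and $\hat g^*_2$ carry correlation beyond their shared response weights, and the quadratic $\text{atan2}$ term does not average against it; tracking these and using $\xi^*_\ell=m^{*2}_\ell+\sigma^{*2}_\ell$, $\psi^*=m^*_1m^*_2+\phi^*$ recasts the answer into the parenthetical in part~(ii), in which the ``systematic'' piece $\{m_1m^*_2-m^*_1m_2\}^2$ present in the DK variance is cancelled and replaced by $m_1^2\sigma^{*2}_2+m_2^2\sigma^{*2}_1-2m_1m_2\phi^*$ together with the cross contribution $4m(x)\{m_1m_2(\sigma^{*2}_2-\sigma^{*2}_1)+\phi^*(m_1^2-m_2^2)\}$. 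For $\hat m_{\text{CE}}$ (normal $U$) I would peel off the randomness in stages. Conditioning first on $(\bX,\bW,\bZ)$ isolates the $\bTheta$-driven variance $n^{-2}\sum_j\text{Var}\{\sin\Theta_j\text{ (resp. }\cos\Theta_j)\mid X_j\}|\mathcal{L}^*(W_j-x)|^2$, whose expectation is dominated by the second moment of the complex-error weight: under Condition~S the kernel is band-limited, so $K((a+i\sigma_u z)/h)$ grows like $\exp(\sigma_u|z|/h)$ in the imaginary direction and averaging $|{\cdot}|^2$ over $z\sim N(0,1)$ costs a factor of order $\exp(\sigma_u^2 h^{-2})$; this gives the remainder $O_p(\exp(\sigma_u^2 h^{-2})/(nh))$, and the $B$-average only tempers the $\bZ$-part of this variability by $1/B$, of the same order since $B/n\to$ const. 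The remaining contribution is $\text{Var}_{\bW\mid\bX}\{n^{-1}\sum_j m_\ell(X_j)\bar{\mathcal{W}}_j(\bW)\}$ with $\bar{\mathcal{W}}_j(\bW)=\IE_{\bZ}\{\mathcal{W}^*(W_j+i\sigma_u Z_j-x;\bW+i\sigma_u\bZ)\mid\bW\}$; by Lemma~\ref{lem:addce2}, $\IE_{\bW\mid\bX}\bar{\mathcal{W}}_j(\bW)=\mathcal{W}^*(X_j-x;\bX)$, and expanding the entire-in-$\bW$ map $\bar{\mathcal{W}}_j$ about $\bX$, weighting by $m_\ell(X_j)$, averaging over $j$, and invoking the local-linear moment structure of $\mathcal{W}^*$ --- whose normalization is being corrected in its entirety --- leaves a fluctuation of order $h$, whose variance is the $O(h^2)$ leading term of part~(iii); this mirrors the extra $O(h^2)$ bias in Theorem~\ref{thm:asympbias}(iii) and is the price of correcting the normalized weight as a whole.

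The step I expect to be the main obstacle is the $\hat m_{\text{CE}}$ analysis: extracting the $\bW$-driven residual that survives $B\to\infty$ and showing it contributes precisely the stated $O(h^2)$ term, while keeping the exponentially large complex-kernel second moments controlled so that the remainder is no worse than advertised. For $\hat m_{\text{OS}}$ the difficulty is essentially bookkeeping --- not discarding the quadratic $\text{atan2}$ contribution or the shared-$\hat f_W$ covariance that generate the extra terms; for $\hat m_{\text{DK}}$ the only real care lies in the Parseval estimate of the deconvoluting-kernel second moments and the passage from $X$- to $W$-conditional moments, both of which closely parallel the arguments of \cite{Delaigle}.
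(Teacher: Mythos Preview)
Your overall architecture---Taylor expand $\text{atan2}$ and reduce to the conditional second moments of $(\hat g_{1,\cdot},\hat g_{2,\cdot})$---is exactly the paper's, and for $\hat m_{\text{DK}}$ and $\hat m_{\text{OS}}$ the details line up closely. Two minor points: the paper obtains the variance by separately Taylor-expanding $\hat m_\cdot^2$ and computing $\IE(\hat m_\cdot^2\mid\bX)-\{\IE(\hat m_\cdot\mid\bX)\}^2$ rather than taking the variance of the linearization directly (equivalent bookkeeping); and for $\hat m_{\text{OS}}$ the paper does not reargue the second moments of $\hat g_{\ell,\text{OS}}$ but imports them from \cite{huangzhou}, so the ``ratio structure and shared $\hat f_W$'' issues you mention are handled there, not redone.

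For $\hat m_{\text{CE}}$ the paper takes a genuinely different route from your iterated-conditioning plan. Instead of conditioning on $(\bX,\bW,\bZ)$, it conditions on $(\bX,\bTheta)$ and invokes Lemma~\ref{lem:addce2} to get $\IE\{\hat g^*_{1,b}(x)\mid\bX,\bTheta\}=\tilde g_1(x)$ and $\IE\{\hat g^{*2}_{1,b}(x)\mid\bX,\bTheta\}=\tilde g_1^2(x)$ exactly, so that $\text{Var}\{\hat g^*_{1,b}\mid\bX\}=\text{Var}\{\tilde g_1\mid\bX\}$. The variance of the $B$-average then splits into a within-$b$ piece $B^{-1}\text{Var}\{\tilde g_1\mid\bX\}$ and a cross-$b$ covariance $\IE\{\hat g^*_{1,a}\hat g^*_{1,b}\mid\bX\}-[\IE\{\tilde g_1\mid\bX\}]^2$ for $a\ne b$. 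The cross term is the crux, and the paper's key technical identity is
\[
\IE\left\{(W_{j,a}^*-x)^{\ell-1}K_h(W_{j,a}^*-x)\,\middle|\, W_j\right\}=h^{\ell-1}K_{U,\ell-1,h}(W_j-x),
\]
proved by a Fourier computation: averaging the complex-shifted kernel over $Z$ recovers \emph{exactly} the deconvoluting kernel. This reduces the diagonal ($j=k$) part of the cross term to products $K_{U,\ell-1,h}(W_j-x)K_{U,q-1,h}(W_j-x)$, whose $\bX$-conditional expectation is precisely the object handled in the DK analysis via Lemma~\ref{lem:DFCB9}, yielding the $O_p(\exp(\sigma_u^2 h^{-2})/(nh))$ remainder. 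The off-diagonal part is then a standard local-polynomial moment calculation that produces the $O(h^2)$ leading term.

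Your conditioning scheme is not wrong, but to make it go you would still have to compute $\bar{\mathcal{W}}_j(\bW)=\IE_{\bZ}[\mathcal{W}^*(W_j+i\sigma_uZ_j-x;\bW+i\sigma_u\bZ)]$ and its second $\bW\mid\bX$-moment, and the only clean way to do that is the same Fourier identity above. Without it, the assertions that the complex-kernel second moment costs ``a factor of order $\exp(\sigma_u^2 h^{-2})$'' and that the $\bW$-fluctuation is $O(h)$ are not established. The paper's route buys you a direct, explicit link between CE and DK at the level of second moments and lets you reuse the deconvoluting-kernel machinery rather than rebound it.
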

According to Theorem~\ref{thm:asympvar}, having super smooth $U$ tends to lead to a more variable $\hat m_{\cdot}(x)$. When $U$ is ordinary smooth, the dominating variance of $\hat m_{\text{OS}} (x)$ is equal to that of $\hat m_{\text{DK}}(x)$ plus additional terms depending on $m(x)$, $m_\ell(x)$, and other functions relating to naive inference. Having these additional terms implies that the variability of $\hat m_{\text{OS}}(x)$ is more affected by discrepancies between functionals of actual interest when drawing inference and their naive counterparts, such as how $m_\ell(x)$ differs from $m^*_\ell(x)$. Our asymptotic results regarding $\hat m_{\text{DK}}(x)$ elaborate on the impacts of measurement error when compared with findings in \citet[][see Result 5 in Section 4.1.2]{di2023kernel} for a similar estimator {\revise that uses a local constant weight in place of our local linear weight. Similar to the comparison between the Nadaraya-Watson estimator and a local linear estimator of a regression function for a linear response \citep[see Table 2.1 in][]{Fan}, the dominating variance of the estimator in \citet{di2023kernel} is of the same order as that of $\hat m_{\text{DK}}(x)$; and, although of the same order (of $O(h^2)$), the dominating bias of the former has an extra term given by $2m'(x)f_X'(x)/f_X(x)$ compared with the multiplier of $h^2\mu_2/2$ in the asymototic bias of $\hat m_{\text{DK}}(x)$.} {\revise Besides the results summarized in Theorems~\ref{thm:asympbias} and \ref{thm:asympvar}, we also derived in Appendices E and F more general results for $\hat m_{\text{CE}}(x)$ and $\hat m_{\text{OS}}(x)$ when local polynomial weights of order $p$ are used in general.}

Finally, we establish asymptotic normality of $\hat{m}_{\cdot}(x)$ in Appendix G by exploiting established normality results in existing literature regarding the same type of estimators as $\hat{g}_{1, \cdot}(x)$ and $\hat{g}_{2, \cdot}(x)$ involved in our estimators. 

\section{Bandwidth selection}
\label{sec:band}

\subsection{Bandwidth selection in the absence of measurement error}
\label{sec:bandnome}
The performance of local polynomial estimators is sensitive to the bandwidth $h$. 
Compared to deriving a plug-in type of bandwidth, a more practically feasible and widely applicable bandwidth selection method is cross validation (CV) based a risk function for assessing estimation quality, such as $k$-fold or leave-one-out CV. We focus on 5-fold CV in this article for illustration purposes. 

In the context of estimating a circular regression function $m(x)$ in the absence of measurement error, a sensible risk function associated with the estimator $\tilde m(x)$ is the mean cosine dissimilarity, $\IE\{1-\cos(\Theta-\tilde m (X))\}$. To highlight the dependence of $\tilde m(x)$ on $h$ and the data used to construct the estimator, we re-express the estimator in \eqref{eq:mtilde} as $\tilde m(x; \bTheta, \bX, h)$.
A 5-fold CV entails first randomly splitting the raw data into five data sets of (approximately) equal size, then computing the following loss function formulated using cosine dissimilarity at each candidate $h$,  
\begin{equation}
    D(\tilde m(X); \bX, h) = \sum_{k=1}^5 \frac{1}{|I_k|} \sum_{j\in I_k} \left\{ 1- \cos ( \Theta_j -\tilde m(X_j; \bTheta^{(-k)}, \bX^{(-k)}, h))\right\},    \label{eq:idealCV}
\end{equation}
where $I_k$ is the index set of the $k$-th subsample of size $|I_k|$, and $(\bTheta^{(-k)}, \bX^{(-k)})$ include all data in the original sample except for those in the $k$-th subsample, for $k=1, \ldots, 5$. 
Then $h_{\text{true}}=\argmin_{h\in \mathcal{H}}D(\tilde m(X); \bX, h)$ is the chosen bandwidth based on true covariate data, where $\mathcal{H}$ is the set of candidate values of $h$.

In the presence of covariate measurement error, an immediate hurdle in implementing the traditional CV outlined above is that, for a non-naive estimator of the regression function $\hat m_{\cdot}(x)$, the cosine dissimilarity between $\Theta_j$ and $\hat m_{\cdot}(X_j)$ cannot be computed now that $X_j$ is unobserved but $W_j$ is instead. Naively substituting $X_j$ with $W_j$ again does not solve the problem here because the cosine dissimilarity between $\Theta_j$ and $\hat m_{\cdot}(W_j)$ is a misrepresentation of the actual loss associated with the $j$-th observation. We present two approaches to overcome the hurdle next. 
\subsection{Bandwidth selection using SIMEX}
\label{sec:bandsimex}
The first approach is proposed by \cite{delaiglehall} as a combination of simulation-extrapolation \citep[SIMEX,][Chapter 5]{carroll2006measurement} and CV. Ideally, one would follow the traditional CV to find 
\begin{equation}
\label{eq:idealh}
h_{\text{ideal}} =\argmin_{h \in \mathcal{H}}\sum_{k=1}^5 \frac{1}{|I_k|} \sum_{j\in I_k} \left\{ 1- \cos ( \Theta_j -\hat m_{\cdot}(X_j; \bTheta^{(-k)}, \bW^{(-k)}, h))\right\},
\end{equation}
where we stress that $\hat  m_{\cdot}(x)$ is a non-naive estimator of $m(x)$ constructed using error contaminated data, but we evaluate $x$ at $X_j$'s when computing the loss in \eqref{eq:idealh}. Compared with \eqref{eq:idealCV}, a major distinction in the ideal CV in the presence of measurement error is the mismatch between the training data and the validation data exhibited in \eqref{eq:idealh}: while {\it error-contaminated} data $(\bTheta^{(-k)}, \bW^{(-k)})$ are used to train the estimator $\hat m_{\cdot}(x)$, the {\it error-free} data $(\bTheta^{(k)}, \bX^{(k)})$ are used to validate the estimator. Clearly, the ``ideal'' bandwidth $h_{\text{ideal}}$ is not attainable with $\bX$ now unobserved. Naive CV yields 
$$
h_{\text{naive}} 
 =\argmin_{h \in \mathcal{H}}\sum_{k=1}^5 \frac{1}{|I_k|} \sum_{j\in I_k} \left\{ 1- \cos ( \Theta_j -\hat m_{\cdot}(W_j; \bTheta^{(-k)}, \bW^{(-k)}, h))\right\},
$$
which is expected to distort $h_{\text{ideal}}$ due to evaluating $x$ in $\hat m_{\cdot}(x)$ at $W_j$'s instead of $X_j$'s. To learn how $h_{\text{ideal}}$ takes into consideration that, even though the estimator $\hat m_{\cdot}(x)$ is constructed based on error-contaminated data, $\hat m_{\cdot}(x)$ should be evaluated at error-free data when computing the loss, we ``simulate'' twice the ideal CV mimicking \eqref{eq:idealh} to obtain two bandwidths. 

The first simulated version of ideal bandwidth is  
\begin{equation}
\label{eq:h1}
h_1 =\argmin_{h \in \mathcal{H}} \frac{1}{B}\sum_{b=1}^B \sum_{k=1}^5 \frac{1}{|I_k|} \sum_{j\in I_k} \left\{ 1- \cos ( \Theta_j -\hat m^*_{\cdot}(W_j; \bTheta^{(-k)}, \bW^{*(-k)}_b, h))\right\},
\end{equation}
where, for $b=1, \ldots, B$, $\bW^*_b=(W^*_{1,b}, \ldots, W^*_{n,b})^\top=\bW+\bU^*_b$, {\revise in which $\bU^*_b=(U^*_{1, b}, \ldots,$ $ U^*_{n,b})^\top$ is a random sample generated from the distribution that errors in $\bU=(U_1, \ldots, U_n)^\top$ follow}, and $\hat m^*_{\cdot}(x)$ is the same type of estimator as $\hat m_{\cdot}(x)$ but for estimating $m^*(x)$. For example, when selecting a bandwidth used in the deconvoluting kernel estimator, the estimator in \eqref{eq:idealh} is $\hat m_{\text{DK}}(x)$, which is constructed based on $(\bTheta, \bW)$ to estimate $m(x)$ as the circular mean of $\Theta$ given $X=x$. Accordingly, the estimator in \eqref{eq:h1} is $\hat m^*_{\text{DK}}(x)$ as the deconvoluting kernel estimator constructed based on $(\bTheta, \bW^*_b)$ to estimate $m^*(x)$ as the mean of $\Theta$ given $W=x$. The CV leading to $h_1$ is parallel to that leading to $h_{\text{ideal}}$ in the sense that, like $\bW$ being a noisy surrogate of $\bX$ in \eqref{eq:idealh}, $\bW^*_b$ is also a noisy surrogate of $\bW$ in \eqref{eq:h1} with the same severity of error contamination since $\text{Var}(W^*_{j,b}|W_j)=\text{Var}(W_j|X_j)=\sigma_u^2$; and, more importantly, \eqref{eq:h1} also has a mismatch between training data and validation data, with the former involving the (further) contaminated data $\bW_b^{*(-k)}$, whereas the latter being the data before (further) contamination, $\bW^{(k)}$. Similarly, the second simulated bandwidth results from yet another round of CV parallel to that producing $h_1$, 
$$h_2 =\argmin_{h \in \mathcal{H}} \frac{1}{B}\sum_{b=1}^B \sum_{k=1}^5 \frac{1}{|I_k|} \sum_{j\in I_k} \left\{ 1- \cos ( \Theta_j -\hat m^{**}_{\cdot}(W^*_{j,b}; \bTheta^{(-k)}, \bW^{**(-k)}_b, h))\right\},$$
where $\bW_b^{**}=\bW^*_b+\bU^{**}_b$, $\bU^{**}_b$ is generated in the same way as $\bU^*_b$, with $\bU^{**}_b$, $\bU^*_b$, and $\bU$ all independent of each other, for $b=1, \ldots, B$, and $\hat m^{**}_{\cdot}(\cdot)$ is the same type of estimator as $\hat m_{\cdot}(\cdot)$, but for estimating the regression function when regressing $\Theta_j$'s on $W^*_{j,b}$'s. 

The simulation step of SIMEX ends with outputting $h_1$ and $h_2$. At the extrapolation step of SIMEX, one attempts to recover $h_{\text{ideal}}$ by learning from the connection between $h_1$ and $h_2$ because, with the parallel design in the two rounds of ideal CV leading to $h_1$ and $h_2$, how $h_1$ compares with $h_2$ should be similar to how $h_{\text{ideal}}$ compares with $h_1$. In particular, \cite{delaiglehall} showed that $\log h_{\text{ideal}}-\log h_1 \approx \log h_1 -\log h_2$ when $\sigma_u$ is small, and thus proposed to approximate $h_{\text{ideal}}$ by $h_{\text{SIMEX}}=h_1^2/h_2$. This completes the process of bandwidth selection that we refer to as CV-SIMEX that yields $h_{\text{SIMEX}}$ as the selected bandwidth. An obvious concern of this method is the computational burden. Even the traditional CV without the measurement error complication is computationally demanding in general; and this method requires $2\times B$ rounds of such CV based on data noisier than the original data.  

\subsection{Bandwidth selection using complex error}
\label{sec:bandce}
Acknowledging that $h_{\text{ideal}}$ in \eqref{eq:idealh} is not attainable solely because the loss function there cannot be computed based on observed data $(\bTheta, \bW)$, we propose a second approach for bandwidth selection where we first estimate this loss function, then we choose a bandwidth by minimizing the estimated loss. 

Recall that the loss function in \eqref{eq:idealh} is 
\begin{equation}
D(\hat m_{\cdot}(X); \bW, h)=\sum_{k=1}^5 \frac{1}{|I_k|} \sum_{j\in I_k} \left\{ 1- \cos ( \Theta_j -\hat m_{\cdot}(X_j; \bTheta^{(-k)}, \bW^{(-k)}, h))\right\}.
\label{eq:idealD}
\end{equation}
With observed data $(\bTheta, \bW)$ plugged in, along with a candidate $h$, \eqref{eq:idealD} is unknown purely due to its dependence on $\bX$. If we view this unknown loss as a function of $\bX$, like $g(\bX)$ in Lemma~\ref{lem:addce2}, then we can formulate an unbiased estimator of this loss function when $U\sim N(0, \sigma_u^2)$. This yields an estimated loss given by 
\begin{equation}
\label{eq:Dest}
\begin{aligned}
&\ 
\hat D(\hat m_{\cdot}(W^*); \bW, h) \\
= &\ \frac{1}{B}\sum_{b=1}^B\sum_{k=1}^5 \frac{1}{|I_k|} \sum_{j\in I_k} \left\{ 1- \cos ( \Theta_j -\hat m_{\cdot}(W^*_{j,b}; \bTheta^{(-k)}, \bW^{(-k)}, h))\right\},
\end{aligned}
\end{equation}
where $\{W^{*}_{j,b}=W_j+i\sigma_u Z_{j,b}, \, j=1, \ldots, n\}_{b=1}^B$ are generated in the same way as the complex-valued covariate data in \eqref{eq:CEweight}. If $D(\hat m_{\cdot}(X); \bW, h)$ in \eqref{eq:idealD} is an entire function of $\bX$, and $\bW$ results from additive normal error contamination of $\bX$, then, by Lemma~\ref{lem:addce2}, $\hat D(\hat m_{\cdot}(W^*); \bW, h)$ in \eqref{eq:Dest} is an unbiased estimator of $D(\hat m_{\cdot}(X); \bW, h)$. Minimizing $\hat D(\hat m_{\cdot}(W^*); \bW, h)$ with respect to $h$ gives a bandwidth that is expected to improve over $h_{\text{naive}}$ and be closer to $h_{\text{ideal}}$. Compared with CV-SIMEX, this bandwidth selection procedure is less cumbersome but it still demands heavy computation because, for each of the five validation data sets, one estimates $m(x)$ based on the corresponding training data at $|I_k|\times B$ complex-valued $x$'s, i.e., at $\{W^*_{j,b}, \, j\in I_k\}_{b=1}^B$. Next, we propose a revised procedure to drastically lighten the computational burden. The idea is to move the averaging in \eqref{eq:Dest} ``$B^{-1}\sum_{b=1}^B$'' further inside the summand so that one can use the trick inspired by Lemma~\ref{lem:addce2} to estimate $\hat m_{\cdot}(X_j; \bTheta^{(-k)}, \bW^{(-k)}, h)$ in \eqref{eq:idealD} instead of estimating the loss function as a whole. 

Take the complex error estimator $\hat m_{\text{CE}}(x)$ as an example. In \eqref{eq:Dest}, the estimate of \\ $\hat m_{\text{CE}}(X_j; \bTheta^{(-k)}, \bW^{(-k)}, h)$ is, for each $b\in \{1, \ldots, B\}$,  
\begin{equation}
\label{eq:CEest4CV}
\begin{aligned}
&\ \hat m_{\text{CE}}(W^*_{j,b}; \bTheta^{(-k)}, \bW^{(-k)}, h) \\
= &\ \text{atan2}\left[\hat g_{1,\text{CE}}(W^*_{j,b}; \bTheta^{(-k)}, \bW^{(-k)}, h), \, \hat g_{2,\text{CE}}(W^*_{j,b}; \bTheta^{(-k)}, \bW^{(-k)}, h)\right], 
\end{aligned}
\end{equation}
where $\hat g_{\ell,\text{CE}}(W^*_{j,b}; \bTheta^{(-k)}, \bW^{(-k)}, h)$ is the estimate $\hat g_{\ell,\text{CE}}(W^*_{j,b})$ based on data $(\bTheta^{(-k)}, \bW^{(-k)})$ with the bandwidth set at $h$, for $\ell=1, 2$. By moving the outer averaging in \eqref{eq:Dest} towards $\hat m_{\text{CE}}(\cdot)$ and further passing through the atan2 function in \eqref{eq:CEest4CV}, we have a different estimate of the loss function given by 
\begin{equation}
\label{eq:Dest2}
\begin{aligned}
&\ \hat D^*(\hat m_{\text{CE}}(W^*); \bW, h)\\
=&\ \sum_{k=1}^5 \frac{1}{|I_k|} \sum_{j\in I_k} \left\{ 1- \cos \left( \Theta_j - \text{atan2}\left[\frac{1}{B}\sum_{b=1}^B \hat g_{1,\text{CE}}(W^*_{j,b}; \bTheta^{(-k)}, \bW^{(-k)}, h),  \right.\right.\right. \\
&\ \left. \left. \left. \frac{1}{B}\sum_{b=1}^B \hat g_{2,\text{CE}}(W^*_{j,b}; \bTheta^{(-k)}, \bW^{(-k)}, h)\right]\right) \right\}, 
\end{aligned}
\end{equation}
where, using the weight function $\mathcal{L}^*(\cdot)$ defined in \eqref{eq:CEweight}, the first argument of $\text{atan2}[\cdot, \cdot]$ is
\begin{align}
 &\ \frac{1}{B}\sum_{b=1}^B \hat g_{1,\text{CE}}(W^*_{j,b}; \bTheta^{(-k)}, \bW^{(-k)}, h) \nonumber \\
 = &\ \frac{1}{B}\sum_{b=1}^B \frac{1}{|I^{(-k)}|} \sum_{\ell \in I^{(-k)}} \sin (\Theta_\ell) \mathcal{L}^* (W _\ell-W^*_{j,b}) \nonumber \\
= &\  \frac{1}{|I^{(-k)}|} \sum_{\ell \in I^{(-k)}} \sin (\Theta_\ell)\overline{\mathcal{L}^*}_{\ell,j}, \label{eq:avg1}
\end{align}
in which $I^{(-k)}$ is the index set of the training data $(\bTheta^{(-k)}, \bW^{(-k)})$ of size $|I^{(-k)}|$, and $\overline{\mathcal{L}^*}_{\ell,j}=B^{-1}\sum_{b=1}^B \mathcal{L}^* (W_\ell-W^*_{j,b})$. In other words, \eqref{eq:avg1} is an estimate of the same form as $\hat g_{1,\text{CE}}(\bW^*_{j,b}; \bTheta^{(-k)}, \bW^{(-k)}, h)$ in \eqref{eq:CEest4CV}, but uses a different weight, $\overline{\mathcal{L}^*}_{\ell,j}$. Similarly, the second argument of $\text{atan2}[\cdot, \cdot]$ in \eqref{eq:Dest2} is an estimate of the same form as $\hat g_{2,\text{CE}}(\bW^*_{j,b}; \bTheta^{(-k)}, \bW^{(-k)}, h)$ in \eqref{eq:CEest4CV} but uses the weight $\overline{\mathcal{L}^*}_{\ell,j}$. Define $\hat g_{1,\text{CE}}(\widetilde \bW^*_j; \bTheta^{(-k)}, \bW^{(-k)}, h)$ as the estimate in \eqref{eq:avg1}, and similarly define $\hat g_{2,\text{CE}}(\widetilde \bW^*_j; \bTheta^{(-k)}, \bW^{(-k)}, h)$, where $\widetilde \bW^*_j=(W^*_{j,1}, \ldots, W^*_{j,B})^\top$; and using these two estimates we have the corresponding estimate of the regression function, denoted by $\hat m_{\text{CE}}(\widetilde \bW^*_j; \bTheta^{(-k)}, \bW^{(-k)}, h)$. Now the new estimated loss in \eqref{eq:Dest2} can be re-expressed as
\begin{equation}
\label{eq:Dest2short}
\begin{aligned}
&\ \hat D^*(\hat m_{\text{CE}}(W^*); \bW, h) \\
= &\ \sum_{k=1}^5 \frac{1}{|I_k|} \sum_{j\in I_k} \left\{ 1- \cos ( \Theta_j -\hat m_{\text{CE}}(\widetilde \bW^*_j; \bTheta^{(-k)}, \bW^{(-k)}, h))\right\}.
\end{aligned}
\end{equation}
It is clear from \eqref{eq:Dest2short} that $\hat D^*(\hat m_{\text{CE}}(W^*); \bW, h)$ is similar to $D(\hat m_{\text{CE}}(X); \bW, h)$ in \eqref{eq:idealD} in terms of computation burden, and clearly much less burdensome than the initial estimated loss $\hat D( \hat m_{\text{CE}}(W^*); \bW, h)$ in \eqref{eq:Dest}.   

Admittedly, by passing the averaging operation through nonlinear functions like we did to modify $\hat D( \hat m_{\text{CE}}(W^*); \bW, h)$, we introduce bias as an estimator of $D( \hat m_{\text{CE}}(X); \bW, h)$. This is a small price we pay for a substantial computational gain. Regardless, by construction, $\hat D^*( \hat m_{\text{CE}}(W^*); \bW, h)$ is a sensible estimator that consistently estimates $D( \hat m_{\text{CE}}(X); \bW, h)$ given $(\bTheta, \bW)$ as $B\to \infty$. Consequently, $h_{\text{CE}}=\argmin_{h\in \mathcal{H}}\hat D^*( \hat m_{\text{CE}}(W^*); \bW, h)$ still tends to be closer to $h_{\text{ideal}}$ than  $h_{\text{naive}}$ is. We call this proposed bandwidth selection method that involves complex error CV-CE, henceforth.  

{\revise All bandwidth selection methods considered here require a pre-specified set of candidate values $\mathcal{H}$. A practically effective approach to specify $\mathcal{H}$ is to first find an optimal bandwidth $h_0$ for a simpler (to compute) estimator, such as the naive estimator. Then one creates a range of bandwidths surrounding $h_{0}$, for instance, $[0.8h_{0}, \, 1.3h_{0}]$. If it is always, say, the upper bound that is selected in a cross-validation for a non-naive estimator, which can happen in the presence of severe error contamination in the covariate data, we recommend adjusting the initial search window by enlarging the upper bound slightly. We find this practice of specifying $\mathcal{H}$ via trial-and-error more effective and feasible than alternative approaches, such as those based on minimizing the mean integrated squared error (MISE) of a non-naive estimator. As we show in Appendix H, even when the asymptotic MISE can be derived in some model settings, unknown functionals related to $m(x)$ that the asymptotic MISE depends on make it a practically difficult to use quantity for the search of suitable bandwidths.}   


\section{Simulation study}
\label{sec:simu}
\subsection{Design of simulation}
\label{sec:simudesign}
We are now in the position to inspect the finite sample performance of our proposed non-naive estimators for the circular regression function $m(x)$. Simulation experiments presented in this section are designed to address three issues: (i) {\revise comparisons between four non-naive estimators, $\hat m_{\text{DK}}(x)$, $\hat m_{\text{CE}}(x)$, $\hat m_{\text{OS}}(x)$, and the deconvoluting kernel estimator with a local constant weight proposed by \citet{di2023kernel}}, and the naive estimator $\hat m^*(x)$, using the ideal estimator $\tilde m(x)$ as a benchmark; (ii) comparisons between different bandwidth selection methods, using some optimal bandwidth $h_{\text{opt}}$ to be defined momentarily as a benchmark; (iii) impacts of the measurement error distribution on different estimators. 

{\revise For each Monte Carlo replicate, we first generate true covariate data \scalebox{.99}{$\{X_j\}_{j=1}^n$} of size $n\in\{50, 100, 250\}$ from a covariate distribution; then we generate circular responses $\{\Theta_j\}_{j=1}^n$ according to \eqref{eq:thetagivenX} with a regression function $m(x)$ we design and $\epsilon_j\sim \text{von Mises}(0, 3)$; lastly, error-contaminated covariate data $\{W_j\}_{j=1}^n$ are generated according to \eqref{eq:WgivenX} with $U$ following a mean-zero normal distribution or Laplace distribution, and $\sigma_u^2$ set at a value to achieve a reliability ratio $\lambda\in \{0.8, 0.9\}$, where $\lambda=\text{Var}(X)/\text{Var}(W)$. In particular, two covariate distributions are considered, $\text{uniform}(-5, 5)$ and $N(0, 4)$; two regression functions are used, $m(x)=2\text{atan} (x)$ and $m(x)=2\text{atan}(1/x)$. The first regression function is simpler from the perspective of model fitting because it is smooth and monotone; the second regression function brings forth a more challenging model due to the singularity of $m(x)$ at $x=0$ that creates a jump discontinuity.}

 At each simulation setting specified by {\revise $m(x)$, the covariate distribution,} the value of $n$, and the distribution of $U$, we generate 100 data sets, based on each of which we implement estimation procedures determined by the choice of estimator and the bandwidth selection method. For each estimation procedure, we estimate $m(x)$ at a predetermined grid of values evenly spaced within the support of $X$, $x_1, \ldots, x_n$, over which $m(x)$ exhibits major features such as different degrees of curvature. The metric employed to assess the quality of estimation using $\hat m_{\cdot}(x)$ is the empirical mean cosine dissimilarity as an empirical risk function, $D_n (\hat m_{\cdot}(x); \bW, h)=1-n^{-1}\sum_{j=1}^n \cos (m(x_j)-\hat m_{\cdot}(x_j))$. The aforementioned optimal bandwidth $h_{\text{opt}}$ minimizes $D_n (\hat m_{\cdot}(x); \bW, h)$. 

Finally, we comment on some details imperative for the actual implementation of each estimation procedure. When computing the complex-valued weight $\mathcal{L^*}(W_j-x)$ in \eqref{eq:CEweight} for $\hat m_{\text{CE}}(x)$, we set $B^*=250$, and we extract the real part of $\mathcal{L^*}(W_j-x)$ as the actual weight because, by construction, the imaginary part of $\mathcal{L^*}(W_j-x)$ has a mean of zero. {\revise We choose for all three proposed estimators the kernel  
$K(x) = 96\{x(x^2-15)\cos x+3(5-2x^2)\sin x\}/(2\pi x^{7})$, with 
$\phi_{K}(t) = (1-t^{2})^{3} \boldsymbol{1}_{\{-1\leq t \leq 1\}}$. Although this particular choice of the kernel is made to satisfy the more stringent {\bf Condition S} to safeguard against ill-defined integral transforms in \eqref{eq:LDK} and \eqref{eq:1steppf} when $U$ is super smooth, computing these integrals with this $\phi_{K}(t)$ also tends to be more numerically stable when $U$ is ordinary smooth. More specifically, we employ the fractional fast Fourier transform \citep{bailey} to compute the integral transforms in \eqref{eq:LDK} and \eqref{eq:1steppf} as described in detail in \cite{huangzhou}.} When implementing CV-SIMEX and CV-CE to select a bandwidth, we set $B=30$.

\subsection{Simulation results}

{\revise Figure~\ref{fig:box} provides boxplots of the empirical risk associated with each considered estimation procedure as $n$ varies when $m(x)=2\text{atan}(1/x)$, $X\sim N(0, 4)$, and $U\sim N(0, \sigma^2_u)$. It is evident from Figure~\ref{fig:box} that all non-naive estimation procedures outperform naive estimation. Among the non-naive estimators when $h_{\text{opt}}$ is used, $\hat{m}_{\text{CE}}(x)$ outperforms the rest in this more challenging model fitting exercise. This can be thanks to the (corrected) local linear weight used in $\hat{m}_{\text{CE}}(x)$ that is unbiased, in contrast to the other non-naive estimators with weights that are biased (although consistent) estimators for their error-free counterparts. Between the two types of deconvoluting kernel estimators, our estimator $\hat{m}_{\text{DK}}(x)$ that uses the local linear weight improves over the estimator with the local constant weight. The two bandwidth selection methods, CV-SIMEX and CV-CE, for choosing a bandwidth in $\hat{m}_{\text{CE}}(x)$ produce mostly comparable results. However, CV-CE substantially shortens the computation time.} {\reviseagain This is evidenced in Table \ref{timing} that presents computation times for implementing different bandwidth selection methods, with $\mathcal{H}$ containing 50 candidate bandwidths, on a computer with an Intel Core i5 12600KF processor and a core speed of 4.5 GHz, followed by estimating $m(x)$ via various estimators based on one Monte Carlo replicate data set of size $n \in \{50, 100, 250\}$.} Although time-consuming, CV-SIMEX substantially outperforms the naive CV as evidenced in a simulation study presented in Appendix I where we fitted the same regression model using $\hat m_{\text{DK}}(x)$ with bandwidths chosen by CV-SIMEX and the naive CV.  

{\revise Figure~\ref{fig:box3} provides results under the same simulation setting as that for Figure~\ref{fig:box}, except that $U$ follows a Laplace distribution and we focus on a reliability ratio of $0.9$. Most observed patterns in Figure~\ref{fig:box} in how different estimators compare are similarly observed here. Interestingly, even though the construction of $\hat{m}_{\text{CE}}(x)$ imposes the normality assumption on $U$, violating this assumption does not noticeably compromise its performance, and it still tends to perform better than the other non-naive estimators.}

{\revise Figure~\ref{fig:box4} repeats the simulation experiment summarized in Figure~\ref{fig:box} using a simpler model with $m(x)=2\text{arctan}(x)$ and $X\sim \text{uniform}(-5, 5)$. Now the non-naive estimations are more similar to each other than when fitting a more challenging model as in Figures~\ref{fig:box} and \ref{fig:box3}.}
Figure~\ref{fig:box2} presents {\reviseagain (in the top row)} boxplots  of fitted values for {\revise $m(x)=2\text{atan}(x)$} based on the estimated circular regression functions from various estimation procedures for a grid of $x$'s, along with circular boxplots {\reviseagain (in the bottom row)} of the corresponding errors defined as the fitted value minus the truth of $m(x)$, when $n=100$ and $h_{\text{SIMEX}}$ is used for non-naive estimation. The former collection of boxplots shows that our proposed estimators substantially improve over $\hat m^*(x)$ in capturing the curvature of $m(x)$. The circular boxplots, on the other hand, reveal that many more errors from naive estimation {\reviseagain (see the fourth panel in the bottom row of Figure~\ref{fig:box2})}, $\hat m^*(x)-m(x)$, deviate from zero and approach $\pm \pi$, whereas such drastic deviation from zero is much less observed for our proposed estimates {\reviseagain (see the first three panels in the bottom row of Figure~\ref{fig:box2})}.

\begin{figure}
\centering
    \includegraphics[scale=.6]{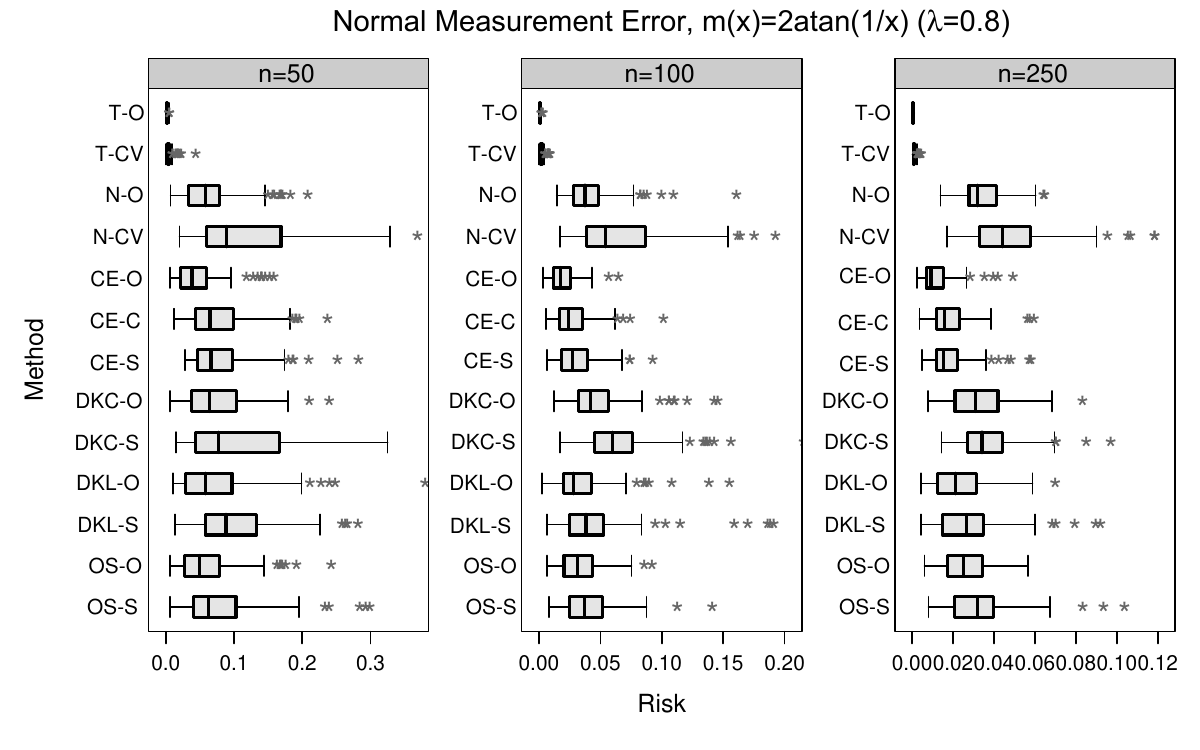}
    \includegraphics[scale=.6]{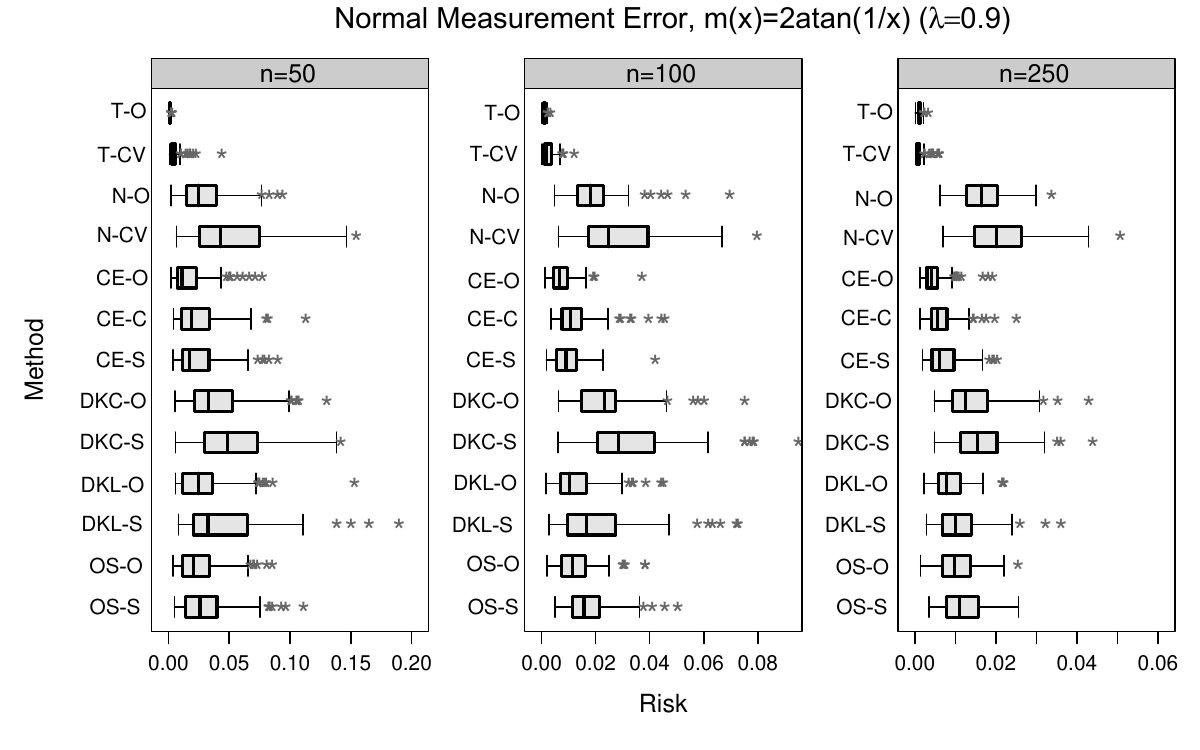}
    \caption{{\revise Boxplots of empirical risk across 100 Monte Carlo replicates at each level of $n\in\{50,100,250\}$ when $X\sim N(0, 4)$,  $m(x)=2\text{atan}(1/x)$, and $U\sim N(0, \sigma^2_u)$ with $\lambda=0.8$ (top panel) and 0.9 (bottom panel) for thirteen estimation procedures, each specified by the estimator and the bandwidth selection method that are linked by a hyphen. Labels for estimators are: $T$ for the ideal estimator $\tilde m(x)$, $N$ for the naive estimator $\hat m^*(x)$, $CE$ for the complex error estimator $\hat m_{\text{CE}}(x)$, $DKC$ for the deconvoluting kernel estimator with the local constant weight, $DKL$ for our deconvoluting kernel estimator $\hat m_{\text{DK}}(x)$ (with the local linear weight), and $OS$ for the one-step correction estimator $\hat m_{\text{OS}}(x)$. Labels for bandwidth selection methods are: $O$ for the optimal bandwidth, 
    $CV$ for cross validation, $C$ for CV-CE, and $S$ for SIMEX.\label{fig:box}}}
\end{figure}

\begin{table}[]
\caption{\label{timing}Computation times (in seconds) different procedures take to estimate $m(x)$ based on a sample of size $n\in \{50, 100, 250\}$, including the time to choose a bandwidth $h$ out of 50 candidates in $\mathcal{H}$. The considered procedures are naive estimation via $\hat{m}^{*}(x)$ paired with cross validation (N-CV), estimation via $\hat{m}_{\text{CE}}(x)$ paired with CV-CE (CE-C), and with CV-SIMEX (CE-S), estimation via $\hat{m}_{\text{DK}}(x)$ using the local constant weight paired with CV-SIMEX (DKC-S), via $\hat{m}_{\text{DK}}(x)$ using the local linear weight paired with CV-SIMEX (DKL-S), and estimation via $\hat{m}_{\text{OS}}(x)$ paired CV-SIMEX (OS-S).}
\begin{tabular}{lllllll}
\hline 
$n$ & N-CV & CE-C  & CE-S & DKC-S & DKL-S & OS-S   \\
\hline 50 & 0.06 & 66.04 & 219 & 34.7 & 113 & 2470 \\
100 & 0.16 & 130 & 625 & 127 & 371 & 4975 \\
250 & 0.21 & 188 & 1250 & 192 & 433 & 12754 \\
  \hline
\end{tabular}
\end{table}
\begin{figure}
\centering
    \includegraphics[scale=.6]{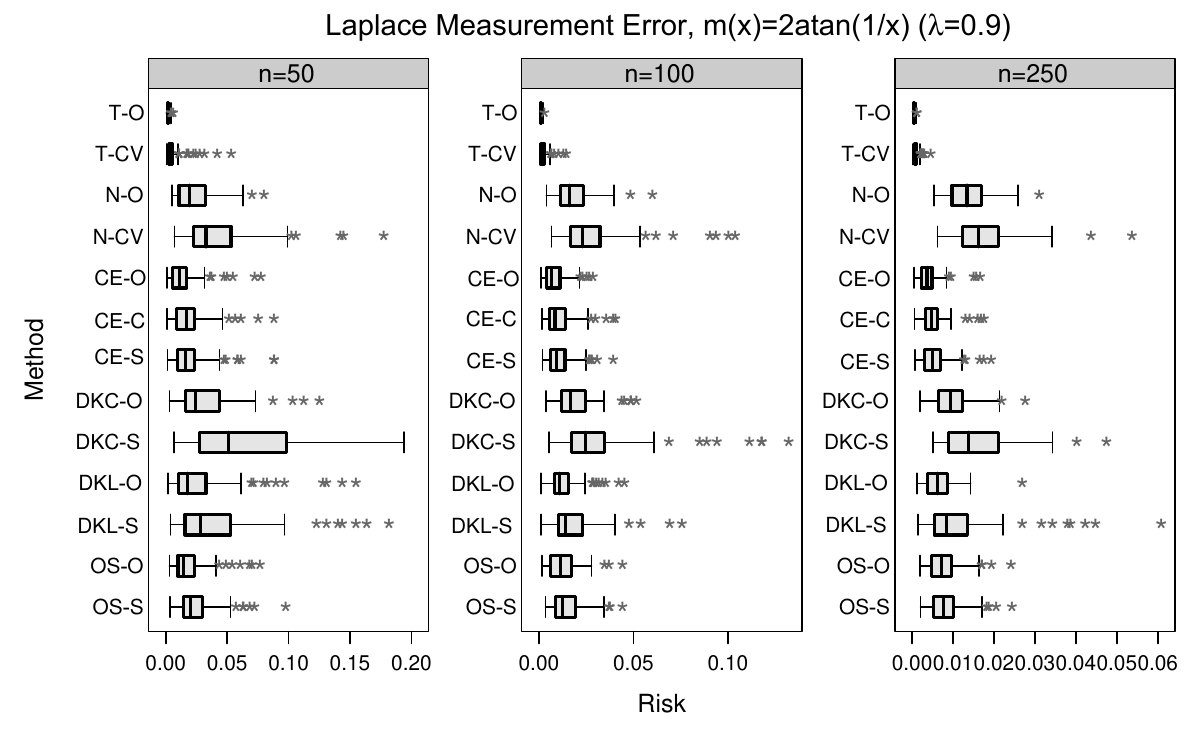}
    \caption{{\revise Boxplots of empirical risk across 100 Monte Carlo replicates at each level of $n\in\{50,100,250\}$ when $X\sim N(0, 4)$,  $m(x)=2\text{atan}(1/x)$, and $U$ follows a mean-zero Laplace distribution with $\lambda=0.9$ for thirteen estimation procedures, each specified by the estimator and the bandwidth selection method that are linked by a hyphen. Labels for estimators are: $T$ for the ideal estimator $\tilde m(x)$, $N$ for the naive estimator $\hat m^*(x)$, $CE$ for the complex error estimator $\hat m_{\text{CE}}(x)$, $DKC$ for the deconvoluting kernel estimator with the local constant weight, $DKL$ for our deconvoluting kernel estimator $\hat m_{\text{DK}}(x)$ (with the local linear weight), and $OS$ for the one-step correction estimator $\hat m_{\text{OS}}(x)$. Labels for bandwidth selection methods are: $O$ for the optimal bandwidth, 
    $CV$ for cross validation, $C$ for CV-CE, and $S$ for SIMEX.}\label{fig:box3}}
\end{figure}
\begin{figure}
\includegraphics[scale=.6]{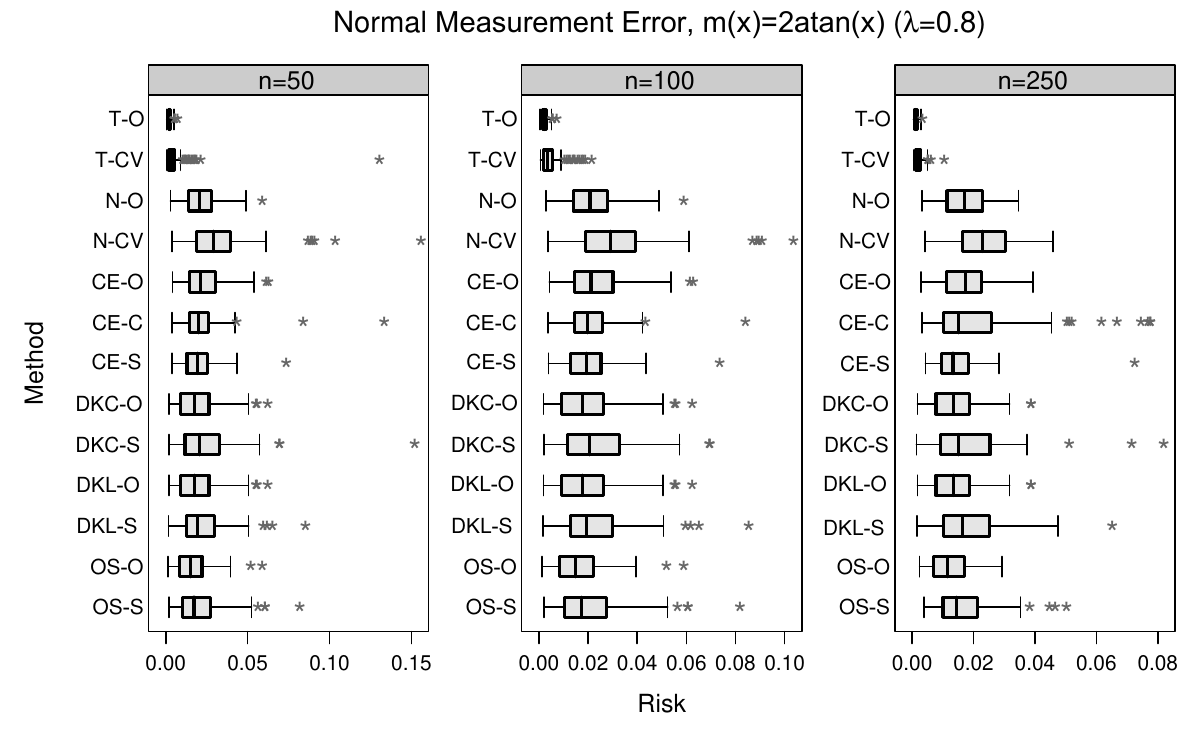}
\includegraphics[scale=.6]{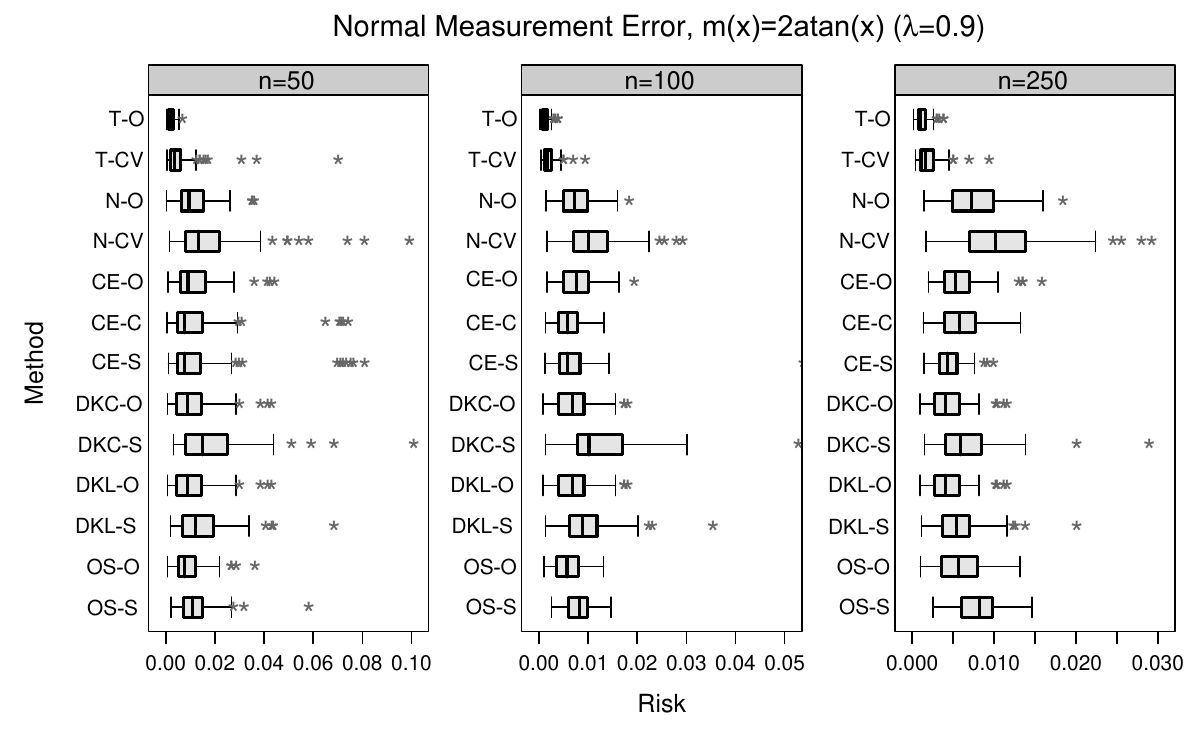}
    \caption{{\revise Boxplots of empirical risk across 100 Monte Carlo replicates at each level of $n\in\{50,100,250\}$ when $X\sim \text{Uniform}(-5, 5)$,  $m(x)=2\text{atan}(x)$, and $U\sim \text{N}(0,100(1-\lambda)/12\lambda)$ with $\lambda=0.8$ (top panel) and 0.9 (bottom panel) for thirteen estimation procedures, each specified by the estimator and the bandwidth selection method that are linked by a hyphen. Labels for estimators are: $T$ for the ideal estimator $\tilde m(x)$, $N$ for the naive estimator $\hat m^*(x)$, $CE$ for the complex error estimator $\hat m_{\text{CE}}(x)$, $DKC$ for the deconvoluting kernel estimator with the local constant weight, $DKL$ for our deconvoluting kernel estimator $\hat m_{\text{DK}}(x)$ (with the local linear weight), and $OS$ for the one-step correction estimator $\hat m_{\text{OS}}(x)$. Labels for bandwidth selection methods are: $O$ for the optimal bandwidth, 
    $CV$ for cross validation, $C$ for CV-CE, and $S$ for SIMEX.\label{fig:box4}}}
\end{figure}
\begin{landscape}
\begin{figure}
    \includegraphics[scale=.6]{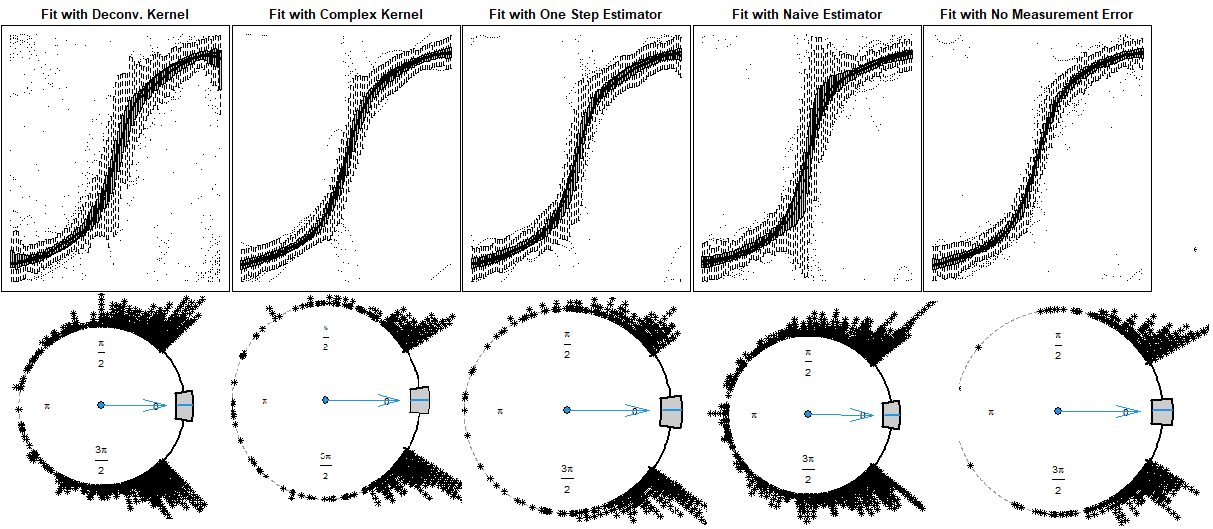}\caption{Boxplots of each fitted point (top panels) and circular boxplots of errors (bottom panels) for each of four methods: $\hat m_{\text{DK}}(x)$, $\hat m_{\text{CE}}(x)$, $\hat m_{\text{OS}}(x)$, $\hat m^*(x)$, and $\tilde m(x)$, based on 100 Monte Carlo replicates when $X\sim \text{uniform}(-5, 5)$, $m(x)=2\text{atan}(x)$. An error here is the fitted value minus the true value of $m(x)$.\label{fig:box2}}
\end{figure}
\end{landscape}
\section{Application to wind data}
\label{sec:real}
\subsection{Texas wind data}
\label{sec:texaswind}
Now we turn to a real-life application and analyze wind direction data from the Texas Natural Resources Conservation Commission \citep{TNRCC}. The data, available at  \url{https://doi.org/10.26023/ZQEZ-ENSF-T09}, contains $n=1756$ records of wind directions $\bTheta$ and times of day $\bX$ collected by a single weather vein from May 20 to July 31 in 2003. The understanding of tendencies for wind direction relative to the time of day is crucial in meteorological studies in order to detect aberrations and predict large-scale weather events. For illustration purposes, we contaminate $\bX$ with additive normal measurement errors to produce surrogate measures, $\bW$, such that an estimated reliability ratio, given by $S_{x}^{2}/S_{w}^{2}$, is equal to 0.9, where $S_x^2$ and $S_w^2$ are the sample variance of $\bX$ and the sample variance of $\bW$, respectively.

We first fit the regression model in \eqref{eq:thetagivenX} using data $(\bTheta, \bX)$ to obtain the ideal estimate $\tilde m(x)$ with the bandwidth selected via the traditional 5-fold CV. Then we use data $(\bTheta, \bW)$ to carry out naive estimation to obtain $\hat m^*(x)$ using $h_{\text{naive}}$. Lastly, we compute three non-naive estimates, $\hat m_{\text{DK}}(x)$, $\hat m_{\text{CE}}(x)$, and $\hat m_{\text{OS}}(x)$, based on data $(\bTheta, \bW)$ using $h_{\text{SIMEX}}$. The resultant five estimated circular regression functions are plotted in Figure~\ref{fig:wind}.

Contrasting the five estimated circular regression functions depicted in Figure \ref{fig:wind}, we see that the naive estimate $\hat m^*(x)$ has a diminished signal strength, flattening the relationship captured by the ideal estimate $\tilde m(x)$. Each of the proposed estimators improves on this by mimicking the pattern of $\tilde m(x)$ more closely, with $\hat m_{\text{CE}}(x)$ being the closest to $\tilde m(x)$ among the four estimates based on error-contaminated data. Like in the simulation study, we set $B=100$ in the complex-valued weight in \eqref{eq:CEweight} when obtaining $\hat m_{\text{CE}}(x)$, which does display some instability, resulting in the estimated regression function less smooth than those from the other two proposed estimators. This may suggest that a larger number of Monte Carlo replicates $B$ in the weight function for $\hat m_{\text{CE}}(x)$ is needed, which is consistent with our asymptotic analysis for this estimator (see Theorems~\ref{thm:asympbias} and \ref{thm:asympvar}), where it is assumed that $B$ grows as $n$ increases.
\begin{figure}[!h]
    \centering
    \includegraphics[scale=.59]{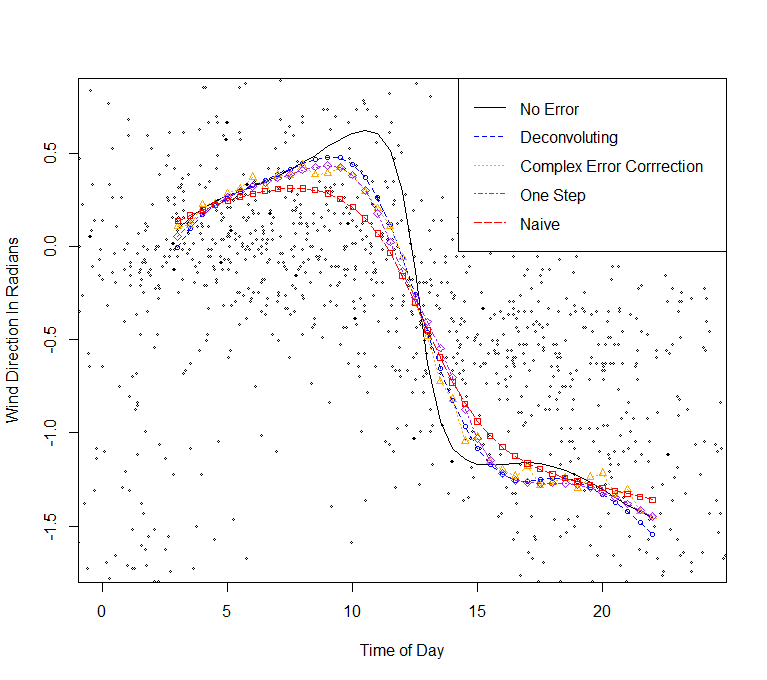}
    \caption{Five estimated circular mean of the wind direction given time of day including the three proposed estimators, $\hat m_{\text{DK}}(x)$ (blue line running through circles $\color{blue}\circ$), $\hat m_{\text{CE}}(x)$ (orange line running through triangles $\color{orange}\triangle$),
    and $\hat m_{\text{OS}}(x)$  (purple line running through  diamonds $\color{purple} \diamond$), the naive estimate $\hat m^*(x)$ (red line passing through squares $\color{red} \square$), and the ideal estimate $\tilde m(x)$ 
    (black line).
    \label{fig:wind}}
\end{figure}

\subsection{Santiago de  Compostela wind data}
The Galician Meteorological Agency of Spain collects climate data, available at the website of the 
Ministry for the Ecological Transition and the Demographic Challenge (\url{https://www.miteco.gob.es/es.html}). To more closely mirror our simulation settings, we downloaded $n=100$ records of wind direction in radians from this website along with measurements of gust wind speed above average in miles per hour from a weather vein in Santiago de Compostela, Spain. The quality of gust wind speed measurements depends on the response characteristics of anemometers and the effects of signal processing, besides other external factors that can introduce errors \citep{pirooz2020effects}. 
The goal of our analysis is to regress the wind direction ($\Theta$) on the gust wind speed above average ($X$), while acknowledging that the measured gust wind speed ($W$) is contaminated by unknown measurement errors. 

The proposed methods for estimating the regression function $m(x)$ remain applicable when the measurement error distribution is unknown but replicate measures of the true covariate are available. In particular, because the complex error estimator, $\hat m_{\text{CE}}(x)$, is formulated under the assumption of normal measurement error, it depends on the error distribution only via $\sigma_u^2$, which can be easily estimated based on replicates measurements \citep[as in equation (4.3),][]{carroll2006measurement}. The deconvoluting kernel estimator, $\hat m_{\text{DK}}(x)$, and the one-step correction estimator, $\hat m_{\text{OS}}(x)$,  involve the measurement error characteristic function $\phi_{U}(\cdot)$. Under the assumption of a symmetric distribution, \cite{delaigle2008deconvolution} provided an estimator for $\phi_U(\cdot)$ based on repeated measurements, whereas \citet{comte2015density} proposed an estimator without assuming symmetry of $U$. These estimated characteristic functions can be used in place of $\phi_U(\cdot)$ in $\hat m_{\text{DK}}(x)$ and $\hat m_{\text{OS}}(x)$. The study of asymptotic properties of the resultant estimators for $m(x)$ is beyond the scope of this manuscript. We conjecture that, with an estimated $\phi_U(\cdot)$ plugged in, finite sample performance of $\hat m_{\text{DK}}(x)$ and $\hat m_{\text{OS}}(x)$ will deteriorate as reported in previous empirical study in a similar context with linear response data \citep{huangzhou}. 

Without repeated measures of the gust wind speed at any given occasion when the wind direction was recorded in this application, we resort to a sensitivity analysis, by fitting the nonparametric regression model under different assumed levels of error contamination corresponding to four levels of reliability ratio, $\lambda\in \{0.75,0.8,0.85,0.9\}$, and two assumed error distributions: normal and Laplace errors. Figure \ref{fig:wind2} shows the estimated regression functions using our proposed methods, contrasting with the naive estimate when normal measurement errors are assumed. Besides using CV-SIMEX to select bandwidths in our estimates, we also repeated the complex error estimation using CV-CE to select the bandwidth in $\hat m_{\text{CE}}(x)$. The two variants of $\hat m_{\text{CE}}(x)$ are similar to each other but are more distinct from the other estimates, especially when a more severe error contamination is assumed. It is at the lower levels of (assumed) $\lambda$ where one witnesses the non-naive estimates differ from the naive one more noticeably. The abrupt jump of $\hat m_{\text{CE}}(x)$  near the lower bound of the covariate when $\lambda=0.75, 0.8$ should not be interpreted as a phenomenon of discontinuity because the response is circular. With this in mind, $\hat m_{\text{CE}}(x)$ appears to be more sensitive to what one assumes for the level of error contamination. The other two estimates, $\hat m_{\text{DK}}(x)$ and $\hat m_{\text{OS}}(x)$, are closer to the naive estimate $\hat m^*(x)$ in comparison, especially at a higher assumed value of $\lambda$, although they both exhibit slightly more curvature than $\hat m^*(x)$.  
\begin{figure}[!h]
\centering
\includegraphics[scale=.59]{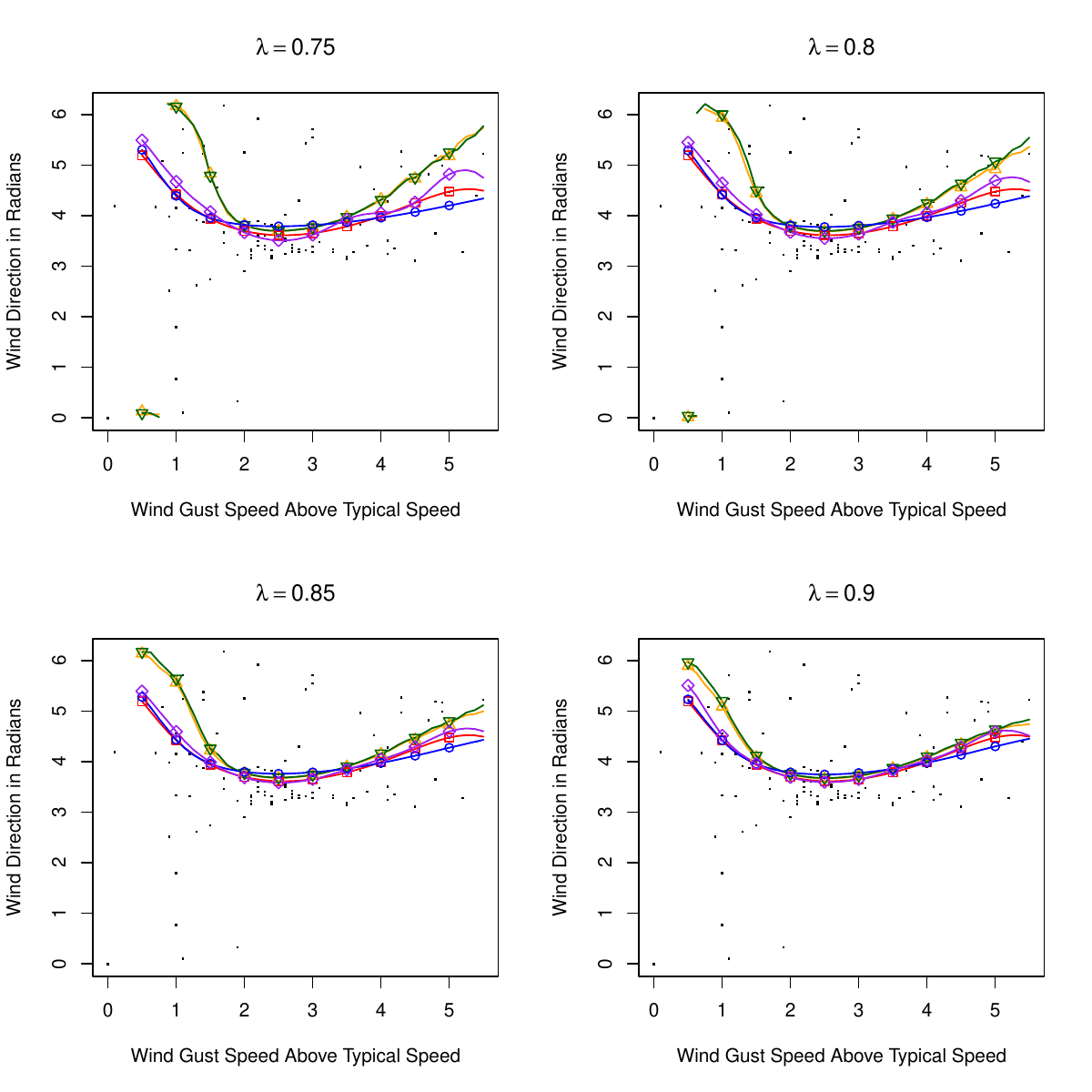}
\caption{Five estimated circular mean of the wind direction given the gust wind speed above average including the three proposed estimators assuming normal measurement error, $\hat m_{\text{DK}}(x)$ (blue line running through circles $\color{blue}\circ$), $\hat m_{\text{CE}}(x)$ using the bandwidth chosen by CV-SIMEX (orange line running through upright triangles $\color{orange}\triangle$), $\hat{m}_{\text{CE}}(x)$ using the bandwidth chosen by CV-CE (green line running through down facing triangles $\color{green}\nabla$) 
    and $\hat m_{\text{OS}}(x)$  (purple line running through  diamonds $\color{purple} \diamond$), and the naive estimate $\hat m^*(x)$ (red line passing through squares $\color{red} \square$).
    \label{fig:wind2}}
\end{figure}

Figure \ref{laplacewind} provides the counterpart estimates when assuming Laplace measurement errors, which are mostly similar to those in Figure~\ref{fig:wind2}. The sensitivity analysis thus suggests that the proposed estimators are more reliant on the assumed level of error contamination and relatively robust to the assumed error distribution. {\reviseagain Although one cannot conclude which estimated regression function is closest to the (nonexistent) ideal estimate $\tilde m(x)$ or the (unknown) truth $m(x)$ in this application, some lessons can be learnt from this exercise. First, unless one has strong data or scientific evidence indicating non-Gaussian measurement error, we recommend using the complex error estimator $\hat m_{\text{CE}}(x)$ that is most computationally convenient to obtain, especially when paired with CV-CE bandwidth selection. Second, when confronted with serious concern or doubt regarding the normality assumption for measurement error, the one-step correction estimator $\hat m_{\text{OS}}(x)$ is the next in line as a contender that tends to be numerically more stable than the deconvoluting kernel estimator $\hat m_{\text{DK}}(x)$. Third, even when using $\hat m_{\text{OS}}(x)$ or $\hat m_{\text{DK}}(x)$, one may consider using the Laplace characteristic function for $\phi_U(\cdot)$ when computing the estimator for a smoother implementation of the deconvoluting operator.}
\begin{figure}[!h]
    \centering
    \includegraphics[scale=.59]{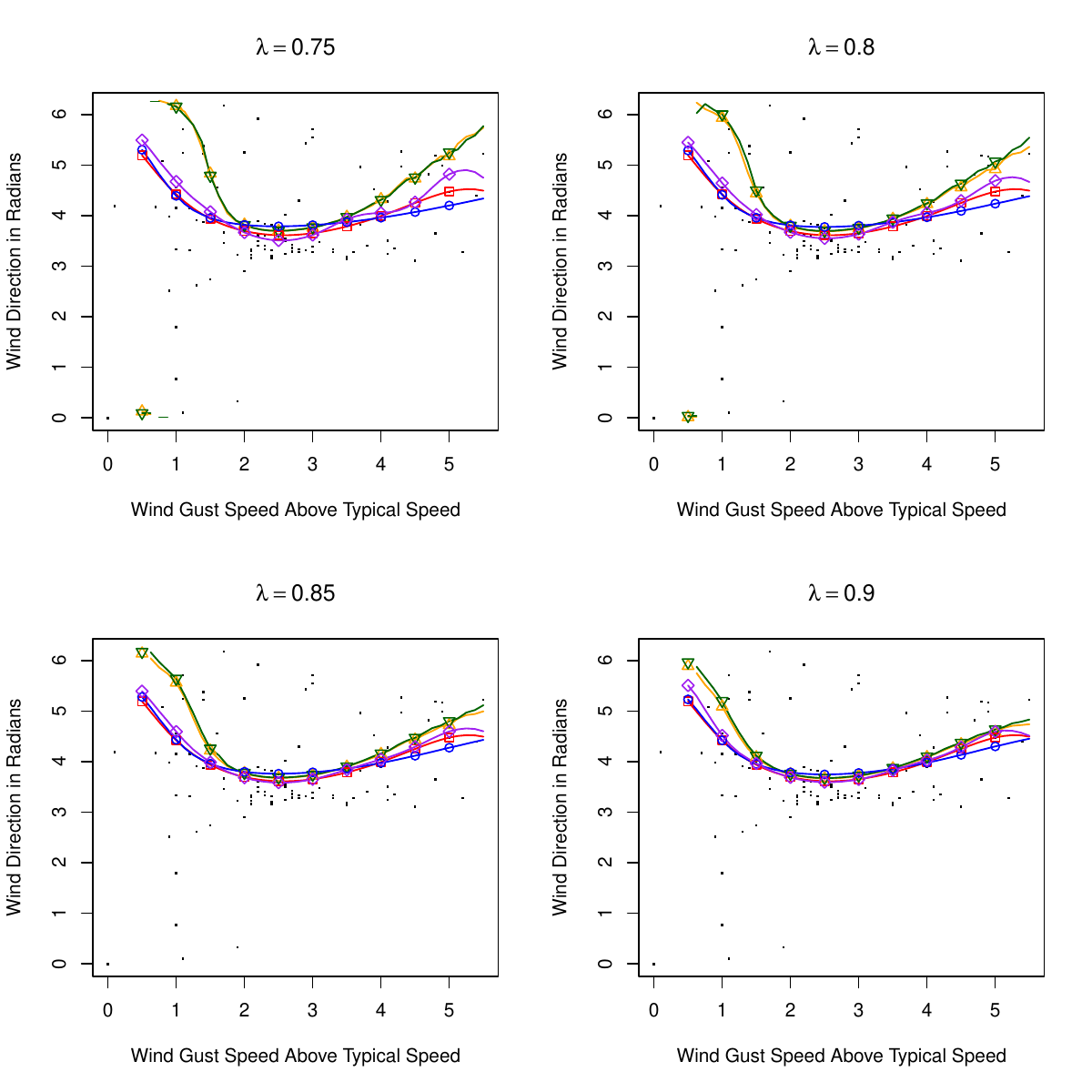}
    \caption{Five estimated circular mean of the wind direction given the gust wind speed above average including the three proposed estimators assuming Laplace measurement error, $\hat m_{\text{DK}}(x)$ (blue line running through circles $\color{blue}\circ$), $\hat m_{\text{CE}}(x)$ using the bandwidth chose by CV-SIMEX (orange line running through upright triangles $\color{orange}\triangle$), $\hat{m}_{\text{CE}}(x)$ using the bandwidth chosen by CV-CE (green line running through down facing triangles $\color{green}\nabla$) 
    and $\hat m_{\text{OS}}(x)$  (purple line running through  diamonds $\color{purple} \diamond$), and the naive estimate $\hat m^*(x)$ (red line passing through squares $\color{red} \square$).
    \label{laplacewind}}
\end{figure}

\section{Discussion}
\label{sec:disc1}
In this study, we develop three methods to extend local polynomial regression of a linear response to local polynomial regression of a circular response while accounting for measurement error in a linear covariate. An \texttt{R} package  \texttt{NPMEDD} for implementing these methods is available to download at \url{https://github.com/nwwoolsey/NPMEDD}, where the data and the \texttt{R} code for reproducing results in Section \ref{sec:real} are also available.
 The proposed methods lead to novel estimators for the circular regression function that use three different strategies to achieve the common goal of correcting naive estimators for covariate measurement error. A common thread running through the novel estimators is a certain form of integral transforms relating to a deconvoluting operation. Besides adapting CV-SIMEX to the circular regression model to select bandwidths in the novel estimators, we propose the CV-CE procedure to enhance the computational efficiency and feasibility of bandwidth selection in the presence of measurement error. Moreover, we thoroughly study the asymptotic properties of the proposed estimators, and establish consistency and asymptotic normality under suitable conditions. Our theoretical investigation provides a generic recipe for studying nonparametric estimators of a circular regression function with error-in-covariate. 

Generalization of the proposed methods to incorporate multiple covariates with some prone to measurement error can be realized by using the multivariate generalization for deconvoluting kernels as outlined in \citet[][Section 10.7.4]{Yi}. In the real-life application in Section~\ref{sec:texaswind}, the time of day can be transformed into a directional covariate as the time on the 24-hour clock, and the proposed methods can be revised by the simple adaptation of a directional kernel as opposed to a linear kernel as in \cite{di2013non} and \cite{di2023kernel}. Bandwidth selection in the presence of covariate measurement error remains a challenging task. Although CV-CE saves a tremendous amount of computation time compared to CV-SIMEX, we acknowledge room for improvement for CV-CE by improving the estimated loss function. Recently, in the context of local polynomial regression of a linear response with error-in-covariate, \cite{dong2022bandwidth} proposed bandwidth selection procedures that also strive for estimating a loss function, which is the mean squared prediction error, by accounting for measurement error in the estimated regression function and also the estimated density of error-prone covariates. Our preliminary experiments on their methods suggest an increased computational intensity and instability, without noticeable gain in the final estimation quality. Lastly, we assume fully known measurement error distribution throughout the study to avoid identifiability issues for measurement error models. In practice, one would have to count on replicate data or external/validation data to estimate the error distribution, or to conduct sensitivity analysis if such data are unavailable to estimate $\sigma_u^2$ or $\phi_U(t)$. 

\begin{appendix}
\renewcommand*{\theequation}{A.\arabic{equation}}
\setcounter{equation}{0}
\section*{Appendix A: Proof of Lemma~\ref{lem:addce2}}
\begin{proof}
    By the multivariate version of Taylor's theorem, one has a Taylor expansion of $g(\bt+\bV)$ around $\bV=\boldsymbol{0}$ given by   
    \begin{equation*}
        g(\bt+\bV)=\sum_{|\bell|=0}^\infty \frac{D^{\bell} g (\bt)}{\bell !} \bV^{\bell}, 
    \end{equation*}
    where $\bell !=\ell_1 !\ldots \ell_n !$ and $D^{\bell} g(\bt)=(\partial ^{\ell_1} \ldots \partial^{\ell_n} / \partial t_1^{\ell_1} \ldots \partial t_n^{\ell_n })g(\bt)$.
    It follows that 
    \begin{equation*}        
    \IE \{g(\bt+\bV)\}=g(\bt)+\sum_{|\bell|>0}^
    \infty \frac{D^{\bell} g (\bt)}{\bell !} \IE(\bV^{\bell}),
    \end{equation*}
    which reduces to $g(\bt)$ if $\IE(\bV^{\bell})=0$, $\forall \bell$ with $|\bell|>0$. This completes the proof.
\end{proof}

\renewcommand*{\theequation}{B.\arabic{equation}}
\renewcommand*{\thelemma}{B.\arabic{lemma}}
\setcounter{equation}{0}
\section*{Appendix B: Proof of Lemma \ref{lem:1steppf}}
The proof is similar to that given in \citet[][Appendix A.2]{delaigle2014nonparametric}.
\begin{proof}
Define $Y=\sin \Theta$. Because $U\perp X$, we have $Y\perp W|X$, thus
\begin{align*}
E(Y|W=w) & = \int_{-1}^1 y f_{Y|W}(y|w)\, dy \\   
& =  \int_{-1}^1 y \cdot \frac{f_{Y,W}(y,w)}{f_W(w)}\, dy\\
& =  \frac{1}{f_W(w)}\int_{-1}^1 y \int_{-\infty}^\infty f_{Y,W|X}(y,w|x)f_X(x)\, dx\, dy \\
& =  \frac{1}{f_W(w)} \int_{-\infty}^\infty \int_{-1}^1 y  f_{Y|X}(y|x)f_{W|X}(w|x)\, dy f_X(x) \, dx \\
& =  \frac{1}{f_W(w)} \int_{-\infty}^\infty \int_{-1}^1 y  f_{Y|X}(y|x)\, dy f_X(x)f_U(w-x) \, dx \\
& = \frac{1}{f_W(w)} \int_{-\infty}^\infty m_1(x) f_X(x)f_U(w-x) \, dx. 
\end{align*}
Hence, 
\begin{align}
m^*_1(w)f_W(w)= \int_{-\infty}^\infty m_1(x) f_X(x)f_U(w-x) \, dx. \label{eq:naiveprod*}
\end{align}
Applying the Fourier transform on both sides of \eqref{eq:naiveprod*} yields
\begin{align*}
  \phi_{m^*_1f_W}(t) & = \int_{-\infty}^\infty e^{itw}  \int_{-\infty}^\infty m_1(x) f_X(x)f_U(w-x)  \,dx \,dw \\
  & = \int_{-\infty}^\infty m_1(x)  f_X(x) e^{itx}  \int_{-\infty}^\infty e^{it(w-x)}f_U(w-x)  \,dw \,dx \\
  & = \int_{-\infty}^\infty m_1(x)  f_X(x) e^{itx}  \int_{-\infty}^\infty e^{itu}f_U(u)  \,du \,dx \\ 
  & = \phi_U(t)\int_{-\infty}^\infty m_k(x)  f_X(x) e^{itx}   \,dx \\
  & = \phi_{m_1f_X}(t)\phi_U(t),
\end{align*}
where $\phi_{m^*_1f_W}(t)$ is the Fourier transform of $m^*_1(w)f_W(w)$, and $\phi_{m_1f_X}(t)$ is the Fourier transform of $m_1(x)f_X(x)$. Finally, applying the inverse Fourier transform on both sides of $\phi_{m_1f_X}(t)=\phi_{m^*_1f_W}(t)/\phi_U(t)$ gives $$m_1(x)f_X(x)=\frac{1}{2\pi} \int e^{-itx} \frac{\phi_{m_1^* f_W}(t)}{\phi_U(t)}\, dt.$$
A similar result can be derived for $m^*_2(w)=\IE(\cos \Theta|W=w)$. This completes the proof of Lemma~\ref{lem:1steppf}.
\end{proof}
Even though we consider $Y=\sin \Theta$ and $Y=\cos \Theta$ in this proof, the arguments leading up to \eqref{eq:naiveprod*} hold for any linear random variable $Y$ as long as all relevant integrals are well-defined. This point is important for later development, e.g, in Appendix D.4, where we consider $Y$ to be other linear random variables. To recap, provided that all relevant integrals (including expectations) are well-defined and assuming a classical additive measurement error model relating $W$ and $X$, then we have, for any linear random variable $Y$, that 
$$\IE(Y|W=w)f_W(w)=\int \IE(Y|X=x)f_X(x)f_U(w-x)\, dx,$$ 
or, equivalently, 
\begin{equation}
    \IE\{\IE(Y|X)f_U(w-X)\}=\IE(Y|W=w)f_W(w).
    \label{eq:anyYprod}
\end{equation}

\renewcommand*{\theequation}{C.\arabic{equation}}
\setcounter{equation}{0}
\renewcommand*{\thesubsection}{C.\arabic{subsection}}
\renewcommand*{\thelemma}{C.\arabic{lemma}}
\setcounter{subsection}{0}
\section*{Appendix C: Notations and important results}
\subsection{Notations}
For easy reference, we provide a list of notations next. 
\begin{itemize}
\item For $\ell=1, 2$,  $\tau_\ell(x)=m_\ell(x)f_X(x)$, $\tau^*_\ell(x)=m^*_\ell(x)f_W(x)$, where $m_1(x)=\IE(\sin \Theta |X=x)$, $m_2(x)=\IE(\cos \Theta |X=x)$, $m^*_1(x)=\IE(\sin \Theta |W=x)$, and $m^*_2(x)=\IE(\cos \Theta |W=x)$.
\item $\xi_1(x)=\IE\{(\sin \Theta)^2|X=x\}$, $\xi_2(x)=\IE\{(\cos \Theta)^2|X=x\}$, \item $\psi(x)=\IE\{(\sin \Theta)( \cos \Theta)|X=x\}$.
\item For $\ell \in \mathbb{N}_0$,  $\mu_\ell = \int t^\ell K(t) dt$, $\nu_\ell = \int t^\ell K^2(t) dt$.
\item  For $\ell \in \mathbb{N}_0$, $K_{U, \ell, h}(x)=K_{U \text{\revise ,}\ell}(x/h)/h$, where 
\begin{equation}
    K_{U,\ell}(x)=x^\ell L_{\ell}(x)=i^{-\ell}\frac{1}{2\pi}\int e^{-itx}\frac{\phi_K^{(\ell)}(t)}{\phi_U(-t/h)}dt,
    \label{eq:KUlh}
\end{equation} 
in which $L_{\ell}(x)$ is given in Equation \eqref{eq:LDK}. 
\end{itemize}

\subsection{Important lemmas}
Several results that we heavily rely on in our asymptotic analyses are listed below. Some of these results are established in existing literature, while others generalize existing results.  
\begin{lemma}
\label{lem:Fan1991}
If a sequence of Borel functions $\{G_n(t)\}$ which satisfy the limit $\lim_{n\to \infty} G_n(t)=G(t)$ and $\sup_n |G_n(t)|\le G^*(t)$, where $G^*(t)$ is an integrable function satisfying \\ $\lim_{t\to \infty} | G^*(t)|=0$, then $$\lim_{n\to \infty} \int G_n(t)f(x-ht)dt=f(x)\int G(t)dt,$$ where $h\to 0$ as $n\to \infty$, and $x$ is a continuity point of function $f(\cdot)$.
\end{lemma}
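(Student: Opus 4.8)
The plan is to read the conclusion as a dominated‑convergence (approximate‑identity) statement: writing $h=h_n$ for the bandwidth, the integrand $G_n(t)\,f(x-h_nt)$ converges pointwise in $t$ to $G(t)f(x)$ as $n\to\infty$, and it is bounded, uniformly in $n$, by a fixed integrable function of $t$, so one may interchange limit and integral.

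First I would verify the pointwise convergence of the integrand. Fix $t\in\mathbb R$. Since $h_n\to0$ we have $x-h_nt\to x$, and since $x$ is a continuity point of $f$ this gives $f(x-h_nt)\to f(x)$; combined with the hypothesis $G_n(t)\to G(t)$ we obtain $G_n(t)f(x-h_nt)\to G(t)f(x)$ for every $t$. Letting $n\to\infty$ in $\sup_n|G_n(t)|\le G^*(t)$ also shows $|G(t)|\le G^*(t)$, so $G$ and each $G_n f(x-h_n\cdot)$ are integrable and $\int G(t)\,dt$ is finite. It then remains to invoke the dominated convergence theorem to get $\int G_n(t)f(x-h_nt)\,dt\to\int G(t)f(x)\,dt=f(x)\int G(t)\,dt$, which is the assertion.

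The one point that needs attention --- rather than a deep obstacle --- is that the majorant for $G_n(t)f(x-h_nt)$ must be free of $n$, even though the integrand depends on $n$ through the factor $f(x-h_nt)$ as well as through $G_n$. Here one uses that $f$ is bounded, which is in force throughout the paper, where $f$ is a bounded probability density or a bounded multiple of one under the stated regularity conditions, so that $|G_n(t)f(x-h_nt)|\le\|f\|_\infty\,G^*(t)\in L^1(\mathbb R)$ uniformly in $n$. If one preferred not to assume global boundedness of $f$, the same conclusion follows by truncating the integral at a level $T$, passing to the limit on the compact set $\{|t|\le T\}$ exactly as above after splitting off the constant $f(x)$ and using the uniformity of the continuity of $f$ at $x$ over that compact set, and then letting $T\to\infty$; it is precisely this last step that uses both $G^*\in L^1(\mathbb R)$ and the decay $\lim_{t\to\infty}|G^*(t)|=0$ to render the tail contribution negligible. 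I expect this truncation estimate to be the only delicate ingredient; under the boundedness the paper's conditions supply, the plain dominated‑convergence argument of the preceding paragraph already suffices.
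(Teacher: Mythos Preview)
The paper does not actually prove this lemma; it merely cites it as Lemma~2.1 in Fan (1991). Your dominated-convergence argument is the standard proof of this approximate-identity statement and is correct under the boundedness of $f$ that the paper's applications supply (see, e.g., Lemma~C.3, where $f$ is explicitly assumed bounded). One minor remark: your fallback truncation argument for unbounded $f$ is a little optimistic, since without any control on $f$ the tail $\int_{|t|>T}G_n(t)f(x-h_nt)\,dt$ need not vanish uniformly in $n$; but as you yourself note, the bounded case is all that is needed here, and there your argument is complete.
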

This is Lemma 2.1 in \cite{fan1991asymptotic}. 

\begin{lemma}
\label{lem:DFCB3}
If $U$ is ordinary smooth of order $\beta$ with a non-vanishing characteristic function $\phi_U(t)$,  $\|\phi^{(\ell)}_K(t)\|_\infty <\infty$, and $\int |t|^\beta \phi_K^{(\ell)}(t)dt<\infty$, then 
$$\lim_{n\to \infty} h^\beta K_{U, \ell}(x) = \frac{i^{-\ell}}{2\pi c} \int e^{-itx} |t|^\beta \phi^{(\ell)}_K(t)dt,$$
where $c=\lim_{t\to \infty} t^\beta \phi_U(t)$.
\end{lemma}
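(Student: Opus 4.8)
The plan is to reduce the claim to a single application of the dominated convergence theorem after rescaling, following the line of argument in Delaigle, Fan and Carroll. Recalling that $n\to\infty$ is equivalent to $h\to 0$ here, start from the definition \eqref{eq:KUlh} and write
\[
h^{\beta}K_{U,\ell}(x)=\frac{i^{-\ell}}{2\pi}\int e^{-itx}\,\phi_K^{(\ell)}(t)\,R_h(t)\,dt,\qquad R_h(t):=\frac{h^{\beta}}{\phi_U(-t/h)},
\]
so that the only $h$-dependence inside the integral sits in the factor $R_h(t)$. First I would establish the pointwise limit $\lim_{h\to 0}R_h(t)=|t|^{\beta}/c$ for every $t\neq 0$. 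This uses ordinary smoothness: since $\phi_U(-t/h)=\overline{\phi_U(t/h)}$ and the argument $t/h$ tends to $\pm\infty$ according to the sign of $t$, the hypothesis $s^{\beta}\phi_U(s)\to c$ with $c>0$ real, together with $|\phi_U(s)|=|\phi_U(-s)|$, gives $|t/h|^{\beta}\phi_U(t/h)\to c$; hence $\phi_U(-t/h)=\overline{\phi_U(t/h)}=(c+o(1))h^{\beta}/|t|^{\beta}$ and $R_h(t)\to|t|^{\beta}/c$. The value at $t=0$ is irrelevant, and there $R_h(0)=h^{\beta}\to 0$ anyway, so the integrand converges pointwise to $e^{-itx}\phi_K^{(\ell)}(t)|t|^{\beta}/c$.

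The crux is producing an $h$-uniform integrable majorant for $|\phi_K^{(\ell)}(t)R_h(t)|=|\phi_K^{(\ell)}(t)|\,h^{\beta}/|\phi_U(-t/h)|$. Here I would exploit that $|\phi_U|$ is continuous and, by assumption, never vanishes, while $|s|^{\beta}|\phi_U(s)|\to c>0$ as $|s|\to\infty$: a compactness argument on $\{1\le|s|\le M\}$ for $M$ large, combined with the tail estimate, yields $m_0:=\inf_{|s|\ge 1}|s|^{\beta}|\phi_U(s)|>0$, and likewise $m_1:=\inf_{|s|\le 1}|\phi_U(s)|>0$. Fix any $h_0>0$ and restrict to $h\le h_0$. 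On $\{|t/h|\ge 1\}$ one has $h^{\beta}/|\phi_U(-t/h)|\le|t|^{\beta}/m_0$; on $\{|t/h|<1\}$, which forces $|t|<h\le h_0$, one has $h^{\beta}/|\phi_U(-t/h)|\le h_0^{\beta}/m_1$. Hence, for all $h\le h_0$,
\[
\bigl|\phi_K^{(\ell)}(t)R_h(t)\bigr|\le|\phi_K^{(\ell)}(t)|\Bigl(\tfrac{|t|^{\beta}}{m_0}+\tfrac{h_0^{\beta}}{m_1}\mathbf{1}_{\{|t|\le h_0\}}\Bigr)=:G(t),
\]
and $G\in L^{1}(\mathbb{R})$ because $\int|t|^{\beta}|\phi_K^{(\ell)}(t)|\,dt<\infty$ by hypothesis while the second term is bounded by $2\|\phi_K^{(\ell)}\|_\infty h_0^{\beta+1}/m_1<\infty$ on a bounded interval.

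Dominated convergence then yields
\[
\lim_{h\to 0}h^{\beta}K_{U,\ell}(x)=\frac{i^{-\ell}}{2\pi}\int e^{-itx}\phi_K^{(\ell)}(t)\frac{|t|^{\beta}}{c}\,dt=\frac{i^{-\ell}}{2\pi c}\int e^{-itx}|t|^{\beta}\phi_K^{(\ell)}(t)\,dt,
\]
which is the assertion. The main obstacle, as indicated, is the construction of the majorant $G$: converting the purely asymptotic relation $\phi_U(s)\sim c/|s|^{\beta}$ into a genuine global lower bound on $|s|^{\beta}|\phi_U(s)|$ over $\{|s|\ge 1\}$ (which is exactly where the non-vanishing assumption on $\phi_U$ enters), and correctly handling the regime $|t|<h$ by restricting the limit to $h\le h_0$. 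Everything else is routine.
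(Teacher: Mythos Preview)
Your argument is correct and is exactly the standard dominated-convergence proof; the paper does not give its own proof of this lemma but simply cites it as Lemma~B.3 in \cite{Delaigle}, whose argument your proposal reproduces. One small remark: your majorant $G$ requires $\int |t|^{\beta}\,|\phi_K^{(\ell)}(t)|\,dt<\infty$ (with absolute value on $\phi_K^{(\ell)}$), which is what the paper's Condition~O actually assumes and what Delaigle et al.\ assume; the lemma statement's omission of the absolute value is a harmless typo, not a gap in your reasoning.
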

This is Lemma B.3 in \cite{Delaigle}.

\begin{lemma}
\label{lem:DFCB4}
For $\bell=(\ell_1, \ldots, \ell_Q)^\top\in \mathbb{N}^Q_0$ with $Q\ge 2$, suppose the kernel $K(\cdot)$ satisfies $\|\phi_K^{(\ell_q)}(t)\|_\infty <\infty$, $\|\phi_K^{(\ell_q+1)}(t)\|_\infty <\infty$, $\int |t|^\beta \phi_K^{(\ell_q)}(t) dt<\infty$, and $\int (|t|^\beta+|t|^{\beta-1}) \{|\phi_K^{(\ell_q)}(t)|+|\phi_K^{(\ell_q+1)}(t)|\} dt<\infty$, for $q=1, \ldots, Q$, also, $U$ is ordinary smooth of order $\beta$ with $\phi_U(t)\ne 0$, for all $t$, and $\|\phi'_U(t)\|_\infty< \infty$, then 
$$\lim_{n\to \infty} h^{Q\beta} \int \left\{ \prod_{q=1}^Q K_{U, \ell_q}(v) \right\}  f(x-hv)  dv=f(x)\eta(\ell_1, \ldots, \ell_Q),$$
where $h\to 0$ as $n\to \infty$, $x$ is a continuity point of a bounded function $f(\cdot)$, and 
\begin{align*}
 &\ \eta(\ell_1, \ldots, \ell_Q) \\
 = &\ \frac{i^{-|\ell|}}{c^Q(2\pi)^{Q-1}}\int\ldots \int \left\vert \left(\prod_{q=1}^{Q-1} t_q\right)\left(\sum_{q=1}^{Q-1} t_q\right) \right\vert^\beta \left\{ \prod_{q=1}^{Q-1} \phi_K^{(\ell_q)}(t_q)\right\}\times \\
 &\ \phi_K^{(\ell_Q)}\left(-\sum_{q=1}^{Q-1} t_q\right) dt_1\ldots dt_{Q-1},
\end{align*}
where $c=\lim_{t\to \infty} t^\beta \phi_U(t)$.
\end{lemma}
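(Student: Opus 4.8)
The plan is to read the left‑hand side as an instance of the dominated limit in Lemma~\ref{lem:Fan1991}, and then to evaluate the resulting constant by a Fourier‑transform identity. Write $\bar K_\ell(v) = \frac{i^{-\ell}}{2\pi c}\int e^{-itv}|t|^\beta \phi_K^{(\ell)}(t)\,dt$ for the pointwise limit of $h^\beta K_{U,\ell}(v)$ supplied by Lemma~\ref{lem:DFCB3}, so that, with $h=h_n\to 0$, the sequence $G_n(v) := h^{Q\beta}\prod_{q=1}^Q K_{U,\ell_q}(v) = \prod_{q=1}^Q \big(h^\beta K_{U,\ell_q}(v)\big)$ converges at every fixed $v$ to $\prod_{q=1}^Q \bar K_{\ell_q}(v)$, while boundedness and continuity of $f$ at $x$ give $f(x-hv)\to f(x)$ with $|f|\le\|f\|_\infty$. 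Hence, once a fixed integrable envelope $G^*$ with $|G_n|\le G^*$ and $G^*(v)\to 0$ at infinity is produced, Lemma~\ref{lem:Fan1991} applies with this $G_n$ and $G=\prod_q\bar K_{\ell_q}$, and gives that $h^{Q\beta}\int\{\prod_{q=1}^Q K_{U,\ell_q}(v)\}f(x-hv)\,dv$ converges to $f(x)\int\prod_{q=1}^Q\bar K_{\ell_q}(v)\,dv$. Everything then reduces to (a) producing the envelope and (b) computing $\int\prod_q\bar K_{\ell_q}(v)\,dv$.

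The envelope is the step I expect to be the main obstacle: I will show that for all $h$ below some fixed $h_0>0$, $h^\beta|K_{U,\ell}(v)| \le C/(1+|v|)$, which yields $|G_n(v)| \le \big(C/(1+|v|)\big)^Q$, an envelope that is integrable precisely because $Q\ge 2$ and that vanishes at infinity, so one may take $G^*(v) = \|f\|_\infty\big(C/(1+|v|)\big)^Q$. The bound comes from two uniform estimates on $K_{U,\ell}(v) = \frac{i^{-\ell}}{2\pi}\int e^{-itv}\phi_K^{(\ell)}(t)/\phi_U(-t/h)\,dt$. First, ordinary smoothness together with continuity and non‑vanishing of $\phi_U$ gives $|\phi_U(s)|\ge c'(1+|s|)^{-\beta}$ for all $s$, hence $\sup_{h\le h_0} h^\beta/|\phi_U(-t/h)| \le C(1+|t|)^\beta$; combined with $\int(1+|t|)^\beta|\phi_K^{(\ell)}(t)|\,dt<\infty$ this bounds $h^\beta|K_{U,\ell}(v)|$ uniformly in $v$. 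Second, for decay in $v$, I integrate by parts in $t$ — the boundary terms vanish because $\phi_K^{(\ell)}(t)\to 0$ as $|t|\to\infty$, a consequence of $\phi_K^{(\ell)}$ being integrable near infinity and $\phi_K^{(\ell+1)}$ bounded — to obtain $|v|\,h^\beta|K_{U,\ell}(v)| \le \frac{1}{2\pi}\int h^\beta|\frac{d}{dt}\{\phi_K^{(\ell)}(t)/\phi_U(-t/h)\}|\,dt$; the derivative splits into $\phi_K^{(\ell+1)}(t)/\phi_U(-t/h)$, controlled as before by $\int(1+|t|)^\beta|\phi_K^{(\ell+1)}(t)|\,dt<\infty$, plus $\phi_K^{(\ell)}(t)\phi_U'(-t/h)/\{h\,\phi_U(-t/h)^2\}$, for which the conditions $\lim_{t\to\infty}t^\beta\phi_U(t)=c$, $\lim_{t\to\infty}t^{\beta+1}\phi_U'(t)=-c\beta$, and $\|\phi_U'\|_\infty<\infty$ give $\sup_{h\le h_0}h^\beta|\phi_U'(-t/h)|/\{h\,|\phi_U(-t/h)|^2\}\le C(1+|t|)^{\beta-1}$, which is integrable against $|\phi_K^{(\ell)}|$ by $\int(1+|t|)^{\beta-1}|\phi_K^{(\ell)}(t)|\,dt<\infty$. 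This is exactly why the hypotheses carry $|t|^\beta$ as well as $|t|^{\beta-1}$ weights on both $\phi_K^{(\ell_q)}$ and $\phi_K^{(\ell_q+1)}$, and why $\|\phi_U'\|_\infty<\infty$ is assumed. Combining the two estimates, $h^\beta|K_{U,\ell}(v)| \le C\min(1,|v|^{-1}) \le 2C/(1+|v|)$.

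For step (b), note that $\bar K_{\ell_q} = \mathcal{F}^{-1}[G_{\ell_q}]$ with $G_{\ell_q}(t) := (i^{-\ell_q}/c)|t|^\beta\phi_K^{(\ell_q)}(t)\in L^1$ by the stated integrability, and $\prod_q\bar K_{\ell_q}\in L^1$ by the envelope just obtained ($Q\ge 2$). The Fourier transform of a product of $Q$ such functions equals $(2\pi)^{-(Q-1)}$ times the $Q$‑fold convolution $G_{\ell_1}*\cdots*G_{\ell_Q}$; evaluating at frequency zero, i.e.\ integrating $\prod_q\bar K_{\ell_q}$ over $v$, gives $(2\pi)^{-(Q-1)}c^{-Q}i^{-|\ell|}$ times the $(Q-1)$‑fold integral of $\big(\prod_{q=1}^{Q-1}|t_q|^\beta\phi_K^{(\ell_q)}(t_q)\big)|\sum_{q=1}^{Q-1}t_q|^\beta\phi_K^{(\ell_Q)}(-\sum_{q=1}^{Q-1}t_q)$, which is exactly $\eta(\ell_1,\ldots,\ell_Q)$ after using $(\prod_{q<Q}|t_q|^\beta)|\sum_{q<Q}t_q|^\beta = |(\prod_{q<Q}t_q)(\sum_{q<Q}t_q)|^\beta$ and $\prod_q i^{-\ell_q} = i^{-|\ell|}$. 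Two routine points remain: the interchanges of order of integration used in the Fourier representations and in the convolution identity are legitimate by the $L^1$ bounds above (Fubini), and the argument $-t/h$ inside $\phi_U$ causes no trouble because only $|\phi_U|$ and $|\phi_U'|$ enter the envelope and $\phi_U(-s)=\overline{\phi_U(s)}$. As a consistency check, for $Q=1$ the envelope $1/(1+|v|)$ fails to be integrable, matching the hypothesis $Q\ge 2$.
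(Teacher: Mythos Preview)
Your proof is correct and follows the same overall architecture as the paper's: invoke Lemma~\ref{lem:DFCB3} for the pointwise limit of $h^{Q\beta}\prod_q K_{U,\ell_q}(v)$, appeal to Lemma~\ref{lem:Fan1991} to pass the limit through the integral against $f(x-hv)$, and then identify the resulting constant via Fourier analysis. The paper simply invokes Lemma~\ref{lem:Fan1991} without verifying the dominating envelope $G^*$, whereas you supply the missing uniform bound $h^\beta|K_{U,\ell}(v)|\le C/(1+|v|)$ via integration by parts in $t$; this is a genuine strengthening of the paper's argument and explains the role of the $|t|^{\beta-1}$ and $\phi_K^{(\ell_q+1)}$ conditions that the paper lists but never uses explicitly. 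For the constant, the paper writes out the $Q$-fold product of Fourier integrals and collapses one variable with the Fourier integral theorem, while you phrase the same computation as evaluating the $(Q-1)$-fold convolution $G_{\ell_1}*\cdots*G_{\ell_Q}$ at zero; these are equivalent.
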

This is a generalization of Lemma B.4 in \cite{Delaigle} that we prove next. 

\begin{proof}
Firstly, by Lemma~\ref{lem:DFCB3},
$$\lim_{n \to \infty} h^{Q\beta} \prod_{q=1}^Q K_{U, \ell_q}(v)=i^{-\sum_{q=1}^Q \ell_q} \frac{1}{(2\pi c)^Q}\prod_{q=1}^Q \int e^{-itv}|t|^\beta \phi_K^{(\ell_q)}(t) dt.$$
Secondly, by Lemma~\ref{lem:Fan1991}, 
\begin{align}
   &\ \lim_{n\to \infty} h^{Q\beta} \int \left\{ \prod_{q=1}^Q K_{U, \ell_q}(v) \right\}  f(x-hv)  \, dv \nonumber \\
   = &\ f(x) \times i^{-\sum_{q=1}^Q \ell_q} \frac{1}{(2\pi c)^Q}\int \prod_{q=1}^Q \int e^{-its}|t|^\beta \phi_K^{(\ell_q)}(t) \, dt \, ds \nonumber \\
  = &\ i^{-\sum_{q=1}^Q \ell_q} \frac{f(x)}{(2 \pi c)^Q} \int \int \ldots \int e^{-it_1s} |t_1|^\beta \phi_K^{(\ell_1)}(t_1) \ldots e^{-it_Qs} |t_Q|^\beta \times \nonumber \\
  &\ \phi_K^{(\ell_Q)}(t_Q)\, dt_1\ldots dt_Q \, ds \nonumber \\
   = &\ \frac{f(x)}{c^Q}\frac{i^{-\sum_{q=1}^Q \ell_q}}{(2 \pi)^{Q-1}} 
   \int\ldots \int \prod_{q=1}^{Q-1} \left\{| t_q|^\beta  \phi_K^{(\ell_q)}(t_q) \right\} \times \nonumber \\
   &\ \frac{1}{2\pi} \int \int e^{-is(t_Q+\sum_{q=1}^{Q-1}t_q)}|t_Q|^\beta \phi^{(\ell_Q)}_K(t_Q) \, dt_Q  \, ds\, dt_1 \ldots d_{t_{Q-1}}.\label{eq:lasttwo}
\end{align}
Thirdly, the inner double integral in (\ref{eq:lasttwo}) can be simplified using the Fourier integral theorem. Recall that the Fourier integral theorem \citep[][Section 3.2.1]{lewis2021advanced}
states that, if a function $g(v)$ satisfies the Dirichlet conditions in every finite interval centering at zero, and $\int |g(v)|dv<\infty$, then 
$$g(v)=\frac{1}{2\pi} \int \int e^{-is(t-v)} g(t)  dt ds.$$ 
By the Fourier integral theorem,  
\begin{equation*}
    \frac{1}{2\pi} \int \int e^{-is(t_Q+\sum_{q=1}^{Q-1}t_q)}|t_Q|^\beta \phi^{(\ell_Q)}_K(t_Q) \, dt_Q  \, ds = \left|\sum_{q=1}^{Q-1} t_q\right|^\beta \phi_K^{(\ell_Q)}\left(-\sum_{q=1}^{Q-1} t_q\right).
\end{equation*}
Using this result in (\ref{eq:lasttwo}) gives the result stated in Lemma~\ref{lem:DFCB4}.
\end{proof}

\begin{lemma}
\label{lem:DFCB9}
For $(\ell_1, \ldots, \ell_Q)\in \mathbb{N}^Q_0$ with $Q\ge 2$, suppose that $\phi_K(t)$ is supported on $[-1, 1]$,  $\|\phi_K^{(\ell_q)}(t)\|_\infty <\infty$, for $q=1, \ldots, Q$, and $U$ is super smooth of order $\beta$ with $\phi_U(t)\ne 0$, for all $t$, then 
$$\left\vert \int \prod_{q=1}^Q K_{U, \ell_q}(v) dv \right\vert \le C h^{Q\beta_2}\exp\left(Qh^{-\beta}/\gamma\right),$$
where $C$ is some positive constant and $\beta_2=\beta_0I(\beta_0<0.5)$.
\end{lemma}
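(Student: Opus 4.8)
The plan is to reduce $\int\prod_{q=1}^Q K_{U,\ell_q}(v)\,dv$ to a $(Q-1)$-fold integral over the Fourier domain and then bound that integral crudely, using (i) that $\phi_K$, hence each $\phi_K^{(\ell_q)}$, is bounded and supported on $[-1,1]$, and (ii) the super-smooth lower bound on $|\phi_U|$ together with its non-vanishing.

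For the reduction step I would insert the representation \eqref{eq:KUlh}, $K_{U,\ell_q}(v)=i^{-\ell_q}(2\pi)^{-1}\int e^{-it_q v}\phi_K^{(\ell_q)}(t_q)/\phi_U(-t_q/h)\,dt_q$, into the product, interchange the integrations (each $t\mapsto\phi_K^{(\ell_q)}(t)/\phi_U(-t/h)$ is bounded and compactly supported, since $\phi_U$ is continuous and non-vanishing and $\|\phi_K^{(\ell_q)}\|_\infty<\infty$), and evaluate the $v$-integral by the Fourier integral theorem exactly as in the concluding step of the proof of Lemma~\ref{lem:DFCB4}. This gives
\begin{equation*}
\int\prod_{q=1}^Q K_{U,\ell_q}(v)\,dv
=\frac{i^{-|\ell|}}{(2\pi)^{Q-1}}\int_{\mathbb{R}^{Q-1}}
\frac{\prod_{q=1}^{Q-1}\phi_K^{(\ell_q)}(t_q)\;\phi_K^{(\ell_Q)}\!\bigl(-\sum_{q=1}^{Q-1}t_q\bigr)}
{\prod_{q=1}^{Q-1}\phi_U(-t_q/h)\;\phi_U\!\bigl(\sum_{q=1}^{Q-1}t_q/h\bigr)}\,dt_1\cdots dt_{Q-1},
\end{equation*}
where $|\ell|=\sum_q\ell_q$. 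Since each $\phi_K^{(\ell_q)}$ vanishes outside $[-1,1]$, the integrand is supported in $[-1,1]^{Q-1}$ and its numerator is bounded in modulus by $M:=\prod_{q=1}^Q\|\phi_K^{(\ell_q)}\|_\infty<\infty$.

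It then remains to control $\prod_{q=1}^{Q-1}|\phi_U(-t_q/h)|^{-1}\,|\phi_U(\sum_q t_q/h)|^{-1}$ on $[-1,1]^{Q-1}$, and here I would split on whether $\beta_0<1/2$. When $\beta_0\ge0$ (in particular $\beta_0\ge1/2$), combining the super-smooth bound $|\phi_U(u)|\ge d_0|u|^{\beta_0}\exp(-|u|^\beta/\gamma)$ for $|u|$ large with the positivity of $\inf_{|u|\le U_0}|\phi_U(u)|$ yields a \emph{uniform} estimate $|\phi_U(u)|^{-1}\le C\exp(|u|^\beta/\gamma)\le C\exp(h^{-\beta}/\gamma)$ for $|u|\le1/h$ and $h$ small (the factor $|u|^{-\beta_0}\le\max(1,U_0^{-\beta_0})$ is absorbed). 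Multiplying $Q$ such factors and integrating over a domain of volume $\le 2^{Q-1}$ gives the asserted bound with $\beta_2=0$. When $\beta_0<1/2$, the same estimates instead give, for $0<|t_q|\le1$ and $h$ small, $|\phi_U(-t_q/h)|^{-1}\le C_1 h^{\beta_0}|t_q|^{-\beta_0}\exp(h^{-\beta}/\gamma)$ (near $t_q=0$ one uses the compact-set lower bound, and there $h^{\beta_0}|t_q|^{-\beta_0}\ge1$; away from $0$ one uses the super-smooth bound and $|t_q|\le1$); this produces the prefactor $h^{Q\beta_0}\exp(Qh^{-\beta}/\gamma)$ times the residual integral $\int_{[-1,1]^{Q-1}}\bigl(\prod_{q=1}^{Q-1}|t_q|^{-\beta_0}\bigr)\bigl|\sum_{q=1}^{Q-1}t_q\bigr|^{-\beta_0}\,d\mathbf{t}$, and one must check this integral is finite.

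That convergence check is the crux of the argument and the main obstacle. The integrand is homogeneous of degree $-Q\beta_0$, so integrability near the origin (in dimension $Q-1$) needs $Q\beta_0<Q-1$, i.e.\ $\beta_0<1-1/Q$, whereas near the coordinate hyperplanes $\{t_q=0\}$ and the hyperplane $\{\sum_q t_q=0\}$ (away from the origin) one only needs $\beta_0<1$. Over all $Q\ge2$ the binding case is $Q=2$, where the residual integral is $\int_{-1}^1|t_1|^{-2\beta_0}\,dt_1$ and finiteness forces $\beta_0<1/2$ — precisely the threshold recorded in $\beta_2=\beta_0\,I(\beta_0<1/2)$. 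Combining the two regimes with $\beta_2:=\beta_0 I(\beta_0<1/2)$ gives $\bigl|\int\prod_q K_{U,\ell_q}(v)\,dv\bigr|\le C h^{Q\beta_2}\exp(Qh^{-\beta}/\gamma)$. The remaining care is bookkeeping: making the interchange/Fourier-inversion step rigorous at fixed $h$, and assembling the bound on $|\phi_U(\cdot/h)|^{-1}$ so that the $h^{\beta_0}$ power and the $|t_q|^{-\beta_0}$ singularity emerge with correct constants across the small- and large-argument regimes; once those are in place the homogeneity counting is routine.
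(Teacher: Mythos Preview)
Your approach---Fourier reduction to a $(Q-1)$-fold integral over $[-1,1]^{Q-1}$ via the identity in the proof of Lemma~\ref{lem:DFCB4}, then a crude bound on each factor $|\phi_U(\cdot/h)|^{-1}$---is exactly what the paper intends; it does not give a proof but simply cites this as a generalization of Lemma~B.9 in \cite{Delaigle}, provable via their Lemma~B.8, which is precisely this bound. One small repair: your pointwise estimate $|\phi_U(-t_q/h)|^{-1}\le C_1 h^{\beta_0}|t_q|^{-\beta_0}\exp(h^{-\beta}/\gamma)$ breaks down near $t_q=0$ when $\beta_0<0$ (since then $|t_q|^{-\beta_0}\to 0$, not $\ge 1$); for $\beta_0<0$ drop the $|t_q|$-dependence and use directly $\sup_{|u|\le 1/h}|\phi_U(u)|^{-1}\le C h^{\beta_0}\exp(h^{-\beta}/\gamma)$, then integrate over the cube---the refined bound with the integrable singularity $\prod|t_q|^{-\beta_0}$ is only needed, and only valid, in the range $0\le\beta_0<1/2$.
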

This is a generalization of Lemma B.9 in \cite{Delaigle}, and can be proved using Lemma B.8 in \cite{Delaigle}. 

\renewcommand*{\theequation}{D.\arabic{equation}}
\setcounter{equation}{0}
\renewcommand*{\thesubsection}{D.\arabic{subsection}}
\renewcommand*{\thelemma}{D.\arabic{lemma}}
\setcounter{subsection}{0}
\section*{Appendix D: Asymptotic properties of the deconvoluting kernel estimator $\hat m_{\text{DK}}(x)$}
    \subsection{The mean of the local linear weight $\mathcal{L}(W_j-x)$}\label{sec:loclin}
    Using \eqref{eq:KUlh}, we re-express the new weight $\mathcal{L}(W_j-x)$ in Equation \eqref{eq:DKweight} as 
    \begin{align*}
    &\ \mathcal{L}(W_j-x)\\
    = &\ \frac{1}{n}K_{U,0,h}(W_j-x)\sum_{k=1}^{n}
        K_{U,2,h}(W_k-x)-\frac{1}{n}K_{U,1,h}(W_j-x)\sum_{k=1}^{n}K_{U,1,h}(W_k-x).
    \end{align*}
    The result in Equation \eqref{eq:DKmean} is equivalent to 
    \begin{equation}
        E\{K_{U,\ell, h}(W-x)|X\}=\left(\frac{X-x}{h}\right)^\ell K_h(X_j-x), \text{ for $\ell=0, 1, 2$}.
        \label{eq:EKUlh}
    \end{equation} Using this result and the new expression of the weight given above, we have
    \begin{align}
        &\ \IE\{\mathcal{L}(W_{j}-x)|\bX\} \nonumber \\
        = &\ \frac{1}{n}\IE\left\{K_{U,0,h}(W_j-x)K_{U,2,h}(W_j-x)+K_{U,0,h}(W_j-x)\sum_{k \ne j}K_{U,2,h}(W_k-x)\right.\nonumber\\
        &\ \left.\left.-K^2_{U,1,h}(W_j-x)-K_{U,1,h}(W_j-x)\sum_{k\ne j}K_{U,1,h}(W_k-x)\right\vert\bX\right\} \nonumber\\
       = &\ \frac{1}{n}\IE\left\{\left. K_{U,0,h}(W_j-x)K_{U,2,h}(W_j-x)-K^2_{U,1,h}(W_j-x)\right \vert X_j \right\} + \nonumber\\
        &\ \frac{1}{n} \left\{K_{h}(X_{j}-x)\sum_{k\ne j} \bigg(\frac{X_k-x}{h}\bigg)^{2} K_{h}(X_k-x)- \right. \nonumber\\
        &\ \left.\bigg(\frac{X_{j}-x}{h}\bigg)K_{h}(X_{j}-x)\sum_{k \ne j} \bigg(\frac{X_k-x}{h}\bigg) K_{h}(X_k-x)\right\} \nonumber\\
       = &\ \frac{1}{n}\IE\left\{\left.K_{U,0,h}(W_j-x)K_{U,2,h}(W_j-x)\right\vert X_{j}\right\}-\frac{1}{n}\IE\left\{\left.K^2_{U,1,h}(W_j-x)\right \vert X_j\right\} +\label{eq:DKweight2exp}\\
        &\ \frac{1}{n} \left\{K_{h}(X_{j}-x)\sum_{k=1}^n \bigg(\frac{X_k-x}{h}\bigg)^{2} K_{h}(X_k-x)- \bigg(\frac{X_{j}-x}{h}\bigg)K_{h}(X_{j}-x)\times \right.\label{eq:back2original}\\
        &\ \left.\sum_{k=1}^n \bigg(\frac{X_k-x}{h}\bigg) K_{h}(X_k-x)\right\},\nonumber
    \end{align}
where \eqref{eq:back2original} is the original local linear weight $\mathcal{W}(X_j-x)$. To derive the two expectations in \eqref{eq:DKweight2exp} and other similar expectations arising in later derivations in this appendix, we derive a more general result next. 

For $\bell=(\ell_1, \ldots, \ell_Q)^\top\in \mathbb{N}_0^Q$ with $Q\ge 2$, by Lemmas~\ref{lem:DFCB4} and~\ref{lem:DFCB9},
\begin{align}
&\ \IE\left\{ \left. \prod_{q=1}^Q K_{U, \ell_q, h}(W-x) \right \vert X\right\} \nonumber \\
= &\ \int \left\{\prod_{q=1}^Q K_{U, \ell_q, h}(w-x)\right\} f_U(w-X) \, dw , \text{ next let $v=(w-x)/h$,} \nonumber\\
= &\ \frac{1}{h^{Q-1}}\int \left\{\prod_{q=1}^Q K_{U, \ell_q}(v)\right\} f_U(x+hv-X)\, dv\nonumber \\
= &\ \frac{1}{h^{Q-1}}\int \left\{\prod_{q=1}^Q K_{U, \ell_q}(v)\right\} f_U(X-x-hv)\, dv, \text{ assuming $f_U(-u)=f_U(u)$,} \nonumber \\
= &\ 
\begin{cases}
   \displaystyle{\frac{\eta(\ell_1, \ldots, \ell_Q)}{h^{Q-1+Q\beta}}f_U(X-x)+o_p\left( \frac{1}{h^{Q-1+Q\beta}} \right)}, & \text{if $U$ is ordinary smooth,}\\ 
    \displaystyle{O_p\left( \frac{\exp(Qh^{-\beta}/\gamma)}{h^{Q-1-Q\beta_2}}\right)},  & \text{if $U$ is super smooth.}
\end{cases}
\label{eq:EprodKs}
\end{align}
Using \eqref{eq:EprodKs} with $\bell=(0, 2)^\top$, we have, for the first expectation in \eqref{eq:DKweight2exp},
\begin{align*}
&\  \IE\left\{ \left. K_{U,0,h}(W_j-x)K_{U,2,h}(W_j-x)\right \vert X_j\right\} \\
= &\ 
\begin{cases}
    \displaystyle{\frac{\eta(0,2)}{h^{1+2\beta}}f_{U}(X_{j}-x)+o_p\left(\frac{1}{h^{1+2\beta}}\right)}, & \text{if $U$ is ordinary smooth of order $\beta$,}\\
   \displaystyle{O_p\left(\frac{\exp(2h^{-\beta}/\gamma)}{h^{1-2\beta_2}} \right),} & \text{if $U$ is super smooth of order $\beta$;}
\end{cases}
\end{align*}
similarly, the second expectation in \eqref{eq:DKweight2exp} is, with $\bell=(1, 1)^\top$ in \eqref{eq:EprodKs},\begin{align*}
       &\IE\left\{ \left. K^2_{U,1,h}(W_{j}-x)\right \vert X_j\right\} 
      \\&= 
    \begin{cases}
    \displaystyle{\frac{\eta(1,1)}{h^{1+2\beta}}f_{U}(X_{j}-x)+o_p\left(\frac{1}{h^{1+2\beta}}\right)}, & \text{if $U$ is ordinary smooth of order $\beta$,}\\
   \displaystyle{O_p\left(\frac{\exp(2h^{-\beta}/\gamma)}{h^{1-2\beta_2}} \right),} & \text{if $U$ is super smooth of order $\beta$.}
    \end{cases}
    \end{align*}

It follows that 
\begin{equation}
\label{eq:locallinsingle}
\begin{aligned}
 &\  \IE\{\mathcal{L}(W_j-x) \vert \bX\} \\
      = &\ 
    \begin{cases}
    \mathcal{W}(X_j-x)+ \displaystyle{\frac{\eta(0,2)-\eta(1,1)}{nh^{1+2\beta}}f_{U}(X_{j}-x)}+ & \\
    \displaystyle{o_p\left(\frac{1}{nh^{1+2\beta}}\right)}, & \text{ if $U$ is ordinary smooth,} \\
   \mathcal{W}(X_j-x)+ \displaystyle{O_p\left(\frac{\exp(2h^{-\beta}/\gamma)}{nh^{1-2\beta_2}} \right),} & \text{ if $U$ is super smooth.}
    \end{cases}
\end{aligned}
\end{equation}
Hence, the new weight based on $\bW$, $\mathcal{L}(W_j-x)$, is a consistent estimator of the local linear weight based on $\bX$, $\mathcal{W}(X_j-x)$, under suitable conditions. In particular, if $U$ is ordinary smooth of order $\beta$, then the condition is $nh^{1+2\beta}\to \infty$, as $n\to \infty$ and $h\to 0$; if $U$ is super smooth of order $\beta$, then the condition is $nh^{1-2\beta_2}\exp(-2h^{-\beta}/\gamma)\to \infty$, as $n\to \infty$ and $h\to 0$. We  impose these conditions throughout this appendix. 

\subsection{The mean of $\mathcal{L}^2(W_j-x)$}\label{sec:loclin2}
To derive the asymptotic variance of $\hat g_{\ell, \text{DK}}(x)$, for $\ell=1, 2$, we derive the mean of $\mathcal{L}^2(W_j-x)$ given $\bX$, 
    $$
    \IE\left\{\left.\mathcal{L}^2(W_j-x)\right \vert \bX\right\} 
    = \frac{1}{n^2}(I-2II+III),
    $$
    where 
\begin{align*}
 I = &\  \left\{K_{U, 0, h}(W_j-x)\sum_{k=1}^n K_{U, 2, h}(W_k-x)\right\}^2, \\
 II = &\ \displaystyle{\left\{K_{U,0,h}(W_j-x)\sum_{k=1}^n K_{U,2,h}(W_k-x)\right\}}\times\\
 &\ \displaystyle{\left\{K_{U,1,h}(W_j-x)\sum_{\ell=1}^n K_{U,1,h}(W_\ell-x)\right\}},\\
 III = &\ \left\{K_{U,1,h}(W_j-x) \sum_{k=1}^n K_{U,1,h}(W_k-x)\right\}^2.
\end{align*}

First,
    \begin{align*}
          &\ \IE(I|\bX)\\
        = &\ \IE\left\{\left. K^2_{U,0,h}(W_j-x) K^2_{U,2,h}(W_j-x)\right \vert X_j\right\}\\&+\IE\left\{\left. K^2_{U,0,h}(W_j-x)\right\vert X_j\right\}\sum_{k\ne j}\IE\left\{\left. K^2_{U,2,h}(W_k-x)\right\vert X_k\right\}+\\
        &\ 2\IE\left\{\left. K^2_{U,0,h}(W_j-x)K_{U,2,h}(W_j-x)\right\vert X_j\right\}\sum_{k\ne j}\IE\left\{\left. K_{U,2,h}(W_k-x)\right \vert X_k\right\}+\\
        &\ \IE\left\{\left. K^2_{U,0,h}(W_j-x)\right \vert X_j\right\}\sum_{\ell \ne k \ne j}\IE\left\{\left. K_{U,2,h}(W_k-x)\right \vert X_k\right\}\IE
        \left\{\left. K_{U,2,h}(W_\ell-x)\right \vert X_\ell\right\}.
    \end{align*}
Applying \eqref{eq:EKUlh} and \eqref{eq:EprodKs} to expectations in the above elaboration of $E(I|\bX)$, we have, if $U$ is ordinary smooth of order $\beta$,       \begin{align}   
        &\ \IE(I|\bX) \nonumber\\
       = &\ \frac{\eta(0,0,2,2)}{h^{3+4\beta}}f_U(X_j-x)+o_p\left(\frac{1}{h^{3+4\beta}}\right)+\frac{\eta(0,0)\eta(2,2)}{h^{2+4\beta}}f_U(X_j-x)\times \nonumber \\
       &\ \sum_{k\ne j}f_U(X_k-x)+o_p\left(\frac{n}{h^{2+4\beta}}\right)+ \label{eq:EIrow1}\\
       &\ \left\{\frac{\eta(0,0,2)}{h^{2+3\beta}}f_U(X_j-x)+o_p\left(\frac{1}{h^{2+3\beta}}\right)\right\}\times 2\sum_{k\ne j}\left(\frac{X_k-x}{h}\right)^2 K_h(X_k-x)+\label{eq:EIrow2}\\
       &\ \left\{\frac{\eta(0,0)}{h^{1+2\beta}}f_U(X_j-x)+o_p\left(\frac{1}{h^{1+2\beta}}\right)\right\}\sum_{\ell\ne k \ne j}\left(\frac{X_k-x}{h}\right)^2K_{h}(X_k-x)\times \nonumber \\
        &\ \left(\frac{X_\ell-x}{h}\right)^2 K_{h}(X_\ell-x), \label{eq:EIrow3}
    \end{align}
which can be further simplified by keeping only the dominating term as we explain next. Using the approximation for a random variable $A$ with a finite variance, $A=\IE(A)+O_p(\sqrt{\text{Var}(A)})$, one can show that 
\begin{equation}
\label{eq:apppolyK}
\left\{
\begin{aligned}
\left( \frac{X-x}{h}\right)K_h(X-x) & = h f'_X(x)\mu_2+O(h^3)+O_p(1/\sqrt{h}), \\
\left( \frac{X-x}{h}\right)^2K_h(X-x) & = f_X(x)\mu_2+O(h^2)+O_p(1/\sqrt{h}), 
\end{aligned}
\right.
\end{equation}
as $n\to \infty$ and $h\to 0$, and thus 
\begin{equation}
\label{eq:sumellkj}
\left\{
\begin{aligned}
&\ \sum_{\ell\ne k \ne j}\left(\frac{X_k-x}{h}\right)K_{h}(X_k-x)\left(\frac{X_\ell-x}{h}\right) K_{h}(X_\ell-x)\\
= &\ (n-1)(n-2) \left[ h^2 \left\{f'_X(x)\right\}^2\mu_2^2 +O_p\left(\frac{1}{h}\right)\right],\\
&\ \sum_{\ell\ne k \ne j}\left(\frac{X_k-x}{h}\right)^2K_{h}(X_k-x)\left(\frac{X_\ell-x}{h}\right) K_{h}(X_\ell-x) \\
= &\ (n-1)(n-2) \left\{ hf_X(x)f_X'(x)\mu_2^2 +O_p\left(\frac{1}{h}\right)\right\},\\
&\  \sum_{\ell\ne k \ne j}\left(\frac{X_k-x}{h}\right)^2K_{h}(X_k-x)\left(\frac{X_\ell-x}{h}\right)^2 K_{h}(X_\ell-x)\\
= &\ (n-1)(n-2) \left\{ f^2_X(x)\mu_2^2 +O_p\left(\frac{1}{h}\right)\right\},
\end{aligned}
\right.
\end{equation}
which are of order $n^2/h$, and hence dominate $\sum_{k\ne j}\{(X_k-x)/h\}K_h(X-k-x)$ and $\sum_{k\ne j}\{(X_k-x)/h\}^2K_h(X-k-x)$ that are of order $n/\sqrt{h}$. 
Now one can see that, with $nh^{1+2\beta}\to \infty$, \eqref{eq:EIrow1} is of order $n/h^{2+4\beta}$, which is dominated by \eqref{eq:EIrow3} that is of order $n^2/h^{2+2\beta}$; and, if $\beta>0.5$ as we assume henceforth, \eqref{eq:EIrow1} dominates \eqref{eq:EIrow2} that is of order $n/h^{2.5+3\beta}$. We thus conclude that 
\begin{equation}
\label{eq:EIsimpord}
\begin{aligned}   
    &\ \IE(I|\bX) \\
 =  &\ \left\{\frac{\eta(0,0)}{h^{1+2\beta}}f_U(X_j-x)+o_p\left(\frac{1}{h^{1+2\beta}}\right)\right\}\times \\
 &\ \sum_{\ell\ne k \ne j}\left(\frac{X_k-x}{h}\right)^2K_{h}(X_k-x)\left(\frac{X_\ell-x}{h}\right)^2 K_{h}(X_\ell-x). 
 \end{aligned}
\end{equation}
If $U$ is super smooth of order $\beta$, assuming $nh^{1-2\beta_2}\exp(-2h^{-\beta}/\gamma)\to \infty$, we have
\begin{align}   
     &\  \IE(I|\bX) \\
     = &\ O_p\left(\frac{\exp(4h^{-\beta}/\gamma)}{h^{3-4\beta_2}}\right) +O_p\left(\frac{n\exp(4h^{-\beta}/\gamma)}{h^{2-4\beta_2}}\right)+O_p\left(\frac{\exp(3h^{-\beta}/\gamma)}{h^{2-3\beta_2}}\right) \times \nonumber \\
     &\ 2\sum_{k\ne j}\left(\frac{X_k-x}{h}\right)^2 K_h(X_k-x)+O_p\left(\frac{\exp(2h^{-\beta}/\gamma)}{h^{1-2\beta_2}}\right) \times \nonumber \\
     &\ \sum_{\ell\ne k \ne j}\left(\frac{X_k-x}{h}\right)^2K_{h}(X_k-x)\left(\frac{X_\ell-x}{h}\right)^2 K_{h}(X_\ell-x) \nonumber\\
     = &\ O_p\left(\frac{\exp(2h^{-\beta}/\gamma)}{h^{1-2\beta_2}}\right)  \sum_{\ell\ne k \ne j}\left(\frac{X_k-x}{h}\right)^2K_{h}(X_k-x)\left(\frac{X_\ell-x}{h}\right)^2 K_{h}(X_\ell-x). \label{eq:EIsimpsup}
     \end{align} 
    Second, 
    \begin{align*}
         &\ \IE(II|\bX) \\
        = &\  \IE\left\{\left. K_{U,0,h}(W_j-x)K^2_{U,1,h}(W_j-x)K_{U,2,h}(W_j-x) \right \vert X_j\right\}+\\
        &\ \IE\left\{\left. K_{U,0,h}(W_j-x)K_{U,1,h}(W_j-x) \right\vert X_j\right\}\times \\
        &\ \sum_{k\ne j}\IE\left\{\left. K_{U,1,h}(W_k-x)K_{U,2,h}(W_k-x)\right \vert X_k\right\}+\\
        &\  \IE\left\{\left. K_{U,0,h}(W_j-x)K^2_{U,1,h}(W_j-x) \right \vert X_j\right\} \sum_{k\ne j} \IE\left\{\left. K_{U,2,h}(W_k-x)\right \vert X_k \right\}+\\
        &  \IE\left\{\left. K_{U,0,h}(W_j-x)K_{U,1,h}(W_j-x)K_{U,2,h}(W_j-x) \right \vert X_j\right\}\times \\
        &\ \sum_{\ell\ne j } \IE\left\{\left. K_{U,1,h}(W_\ell-x)\right \vert X_\ell\right\}+ \IE\left\{\left. K_{U,0,h}(W_j-x)K_{U,1,h}(W_j-x)\right \vert X_j \right\}\times \\
        &\ \sum_{\ell \ne k \ne j} 
        \IE\left\{\left. K_{U,1,h}(W_\ell-x)\right\vert X_\ell\right\}\IE\left\{\left. K_{U,2,h}(W_k-x)\right \vert X_k\right\}.
    \end{align*}
    By \eqref{eq:EKUlh} and \eqref{eq:EprodKs}, if $U$ is ordinary smooth of order $\beta$, then 
    \begin{align}
      &\ \IE(II|\bX) \nonumber \\
        = &\  \frac{\eta(0,1,1,2)}{h^{3+4\beta}}f_U(X_j-x) + o_p\left( \frac{1}{h^{3+4\beta}} \right) + \frac{\eta(0,1)\eta(1,2)}{h^{2+4\beta}}f_U(X_j-x)\times \nonumber \\
        &\ \sum_{k \ne j } f_U(X_k-x)+o_p\left( \frac{n}{h^{2+4\beta}} \right)+ \left\{\frac{\eta(0,1,1)}{h^{2+3\beta}}f_U(X_j-x)+o_p\left(\frac{1}{h^{2+3\beta}}\right)\right\}\times\nonumber \\
        &\ \sum_{k\ne j} \left(\frac{X_k-x}{h}\right)^2K_h(X_k-x)+ \left\{\frac{\eta(0,1,2)}{h^{2+3\beta}}f_U(X_j-x)+o_p\left( \frac{1}{h^{2+3\beta}} \right)\right\}\times \nonumber \\
        &\ \sum_{\ell \ne j} \left(\frac{X_\ell-x}{h}\right) K_h(X_\ell-x) +\left\{\frac{\eta(0,1)}{h^{1+2\beta}}f_U(X_j-x)+o_p\left(\frac{1}{h^{1+2\beta}} \right)\right\}\times \nonumber \\
        &\ \sum_{\ell\ne k \ne j} \left(\frac{X_\ell-x}{h}\right) K_h(X_\ell-x)\left(\frac{X_k-x}{h}\right)^2K_{h}(X_k-x) \nonumber \\
       = &\   \left\{\frac{\eta(0,1)}{h^{1+2\beta}}f_U(X_j-x)+o_p\left(\frac{1}{h^{1+2\beta}} \right)\right\}\times\nonumber \\
       &\ \sum_{\ell\ne k \ne j} \left(\frac{X_\ell-x}{h}\right) K_h(X_\ell-x)\left(\frac{X_k-x}{h}\right)^2K_{h}(X_k-x), \label{eq:EIIsimpord} 
    \end{align}
which is of the same order as that of \eqref{eq:EIsimpord}. 
If $U$ is super smooth of order $\beta$, 
\begin{align}
    &\ \IE(II|\bX) \nonumber \\
    = &\ O_p\left( \frac{\exp(4h^{-\beta}/\gamma)}{h^{3-4\beta_2}} \right) + O_p\left( \frac{n\exp(4h^{-\beta}/\gamma)}{h^{2-4\beta_2}} \right) \nonumber\\
        &\ O_p\left( \frac{\exp(3h^{-\beta}/\gamma)}{h^{2-3\beta_2}} \right)\sum_{k\ne j} \left(\frac{X_k-x}{h}\right)^2K_h(X_k-x)+\nonumber\\
        &\ O_p\left( \frac{\exp(3h^{-\beta}/\gamma)}{h^{2-3\beta_2}} \right)\sum_{\ell \ne j} \left(\frac{X_\ell-x}{h}\right) K_h(X_\ell-x) + \nonumber \\
        &\ O_p\left( \frac{\exp(2h^{-\beta}/\gamma)}{h^{1-2\beta_2}} \right)\sum_{\ell\ne k \ne j} \left(\frac{X_\ell-x}{h}\right) K_h(X_\ell-x)\left(\frac{X_k-x}{h}\right)^2K_{h}(X_k-x) \nonumber \\
        = &\ O_p\left( \frac{\exp(2h^{-\beta}/\gamma)}{h^{1-2\beta_2}} \right)\sum_{\ell\ne k \ne j} \left(\frac{X_\ell-x}{h}\right) K_h(X_\ell-x)\left(\frac{X_k-x}{h}\right)^2K_{h}(X_k-x), \label{eq:EIIsimpsup}
\end{align}
which is of the same order as that of \eqref{eq:EIsimpsup}. 
    Third, 
    \begin{align*}
       &\  \IE(III|\bX) \\
     =  &\ \IE\left\{\left. K^4_{U,1,h}(W_j-x) \right \vert X_j\right\} + \\
     &\ 2\IE\left\{\left. K^3_{U,1,h}(W_j-x)\right \vert X_j\right\}  \sum_{k\ne j} \IE\left\{\left. K_{U,1,h}(W_k-x)\right \vert X_k\right\} + \\
     &\ \IE\left\{\left. K^2_{U,1,h}(W_j-x)\right\vert X_j\right\}\sum_{k\ne j} \IE\left\{\left. K^2_{U,1,h}(W_k-x)\right \vert X_k\right\} +\\
     &\ \IE\left\{\left. K^2_{U,1,h}(W_j-x)\right\vert X_j\right\}\sum_{\ell\ne k \ne j} \IE\{K_{U,1,h}(W_\ell-x)|X_\ell\}\IE\{K_{U,1,h}(W_k-x)|X_k\}.
    \end{align*}
By \eqref{eq:EKUlh} and \eqref{eq:EprodKs}, if $U$ is ordinary smooth of order $\beta$, then 
    \begin{align}
     &\ E(III|\bX) \nonumber \\
     = &\ \frac{\eta(1,1,1,1)}{h^{3+4\beta}}f_U(X_j-x)+o_p\left(\frac{1}{h^{3+4\beta}} \right)+\left\{\frac{\eta(1,1,1)}{h^{2+3\beta}}f_U(X_j-x)+\right. \nonumber \\
     &\ \left. o_p\left(\frac{1}{h^{2+3\beta}}\right)\right\}\times 
     2\sum_{k\ne j} \left(\frac{X_k-x}{h}\right)K_h(X_k-x)+\left\{\frac{\eta(1,1)}{h^{1+2\beta}}f_U(X_j-x)+\right. \nonumber \\
     &\ \left. o_p\left(\frac{1}{h^{1+2\beta}} \right)\right\}  \sum_{k\ne j} \left\{\frac{\eta(1,1)}{h^{1+2\beta}}f_U(X_k-x)+o_p\left(\frac{1}{h^{1+2\beta}} \right)\right\}+ \nonumber\\
     &\ \left\{\frac{\eta(1,1)}{h^{1+2\beta}} f_U(X_j-x)+
      o_p\left(\frac{1}{h^{1+2\beta}} \right)\right\}\sum_{\ell \ne k\ne j} \left(\frac{X_\ell-x}{h}\right) K_h(X_\ell-x) \times \nonumber \\
      &\ \left(\frac{X_k-x}{h}\right) K_{h}(X_k-x) \nonumber\\
     = &\ \left\{\frac{\eta(1,1)}{h^{1+2\beta}}f_U(X_j-x)+o_p\left(\frac{1}{h^{1+2\beta}} \right)\right\}\sum_{\ell \ne k\ne j}\left(\frac{X_\ell-x}{h}\right) K_h(X_\ell-x)\times \nonumber \\
     &\ \left(\frac{X_k-x}{h}\right) K_{h}(X_k-x), \label{eq:EIIIsimpord}
    \end{align} 
    which is of the same order as that of \eqref{eq:EIsimpord}. If $U$ is super smooth of order $\beta$, then \ \begin{align}
    &\ E(III|\bX)\nonumber \\
     = &\ O_p\left(\frac{\exp(4h^{-\beta}/\gamma)}{h^{3-4\beta_2}} \right)+O_p\left(\frac{\exp(3h^{-\beta}/\gamma)}{h^{2-3\beta_2}} \right) \times 2\sum_{k\ne j} \left(\frac{X_k-x}{h}\right)K_h(X_k-x)+ \nonumber\\
      &\ O_p\left(\frac{\exp(2h^{-\beta}/\gamma)}{h^{1-2\beta_2}} \right)\times\sum_{\ell \ne k\ne j} \left(\frac{X_\ell-x}{h}\right) K_h(X_\ell-x)\left(\frac{X_k-x}{h}\right)K_{h}(X_k-x)+ \nonumber \\
      &\ 
      O_p\left(\frac{n\exp(4h^{-\beta}/\gamma)}{h^{2-4\beta_2}} \right) \nonumber\\
    = &\ O_p\left(\frac{\exp(2h^{-\beta}/\gamma)}{h^{1-2\beta_2}} \right)\sum_{\ell \ne k\ne j} \left(\frac{X_\ell-x}{h}\right) K_h(X_\ell-x)\left(\frac{X_k-x}{h}\right)K_{h}(X_k-x), \label{eq:EIIIsimpsup}
    \end{align}
which is of the same order as that of \eqref{eq:EIsimpsup}. 
Combining \eqref{eq:EIsimpord}, \eqref{eq:EIIsimpord}, and \eqref{eq:EIIIsimpord} for ordinary smooth $U$,  we have 
\begin{equation}
\label{eq:ELsqord}
    \begin{aligned}
     &\   \IE\left\{\left. \mathcal{L}^2(W_j-x)\right \vert \bX\right\} \\
       = &\ \frac{f_U(X_j-x)}{n^2h^{1+2\beta}}\left\{ \eta(0,0) \sum_{\ell\ne k \ne j}\left(\frac{X_k-x}{h}\right)^2K_{h}(X_k-x)\left(\frac{X_\ell-x}{h}\right)^2 K_{h}(X_\ell-x)
       \right.\\
       &\ -2 \eta(0,1) \sum_{\ell\ne k \ne j}\left(\frac{X_k-x}{h}\right)^2K_{h}(X_k-x)\left(\frac{X_\ell-x}{h}\right) K_{h}(X_\ell-x)\\
       &\ \left. +\eta(1,1) \sum_{\ell\ne k \ne j}\left(\frac{X_k-x}{h}\right)K_{h}(X_k-x)\left(\frac{X_\ell-x}{h}\right) K_{h}(X_\ell-x)\right\}\\
       &\ +o_p\left(\frac{1}{h^{1+2\beta}}\right),
    \end{aligned}
    \end{equation}
    and, using \eqref{eq:EIsimpsup}, \eqref{eq:EIIsimpsup}, and \eqref{eq:EIIIsimpsup} for super smooth $U$ gives 
\begin{equation}\label{eq:ELsqsup}
       \IE\left\{\left. \mathcal{L}^2(W_j-x)\right \vert \bX\right\}  = O_p\left(\frac{\exp(2h^{-\beta}/\gamma)}{h^{1-2\beta_2}}\right).         
     \end{equation}
    \subsection{Expectations of \texorpdfstring{$\hat{g}_{1,\text{DK}}(x)$}{Lg} and \texorpdfstring{$\hat{g}_{2,\text{DK}}(x)$}{Lg}} \label{sec:Eg1g2} 
    For $\hat{g}_{1,\text{DK}}(x) = 
    n^{-1}\sum_{j=1}^n \sin \Theta_j \mathcal{L}(W_j-x)$ and $\hat{g}_{2,\text{DK}}(x) = 
    n^{-1}\sum_{j=1}^n \cos \Theta_j $ \\ $\mathcal{L}(W_j-x)$, by \eqref{eq:locallinsingle},
    \begin{align}
        &\ \IE\left\{\left.\hat{g}_{\ell,\text{DK}}(x)\right \vert \bX\right\} \nonumber \\
       = &\ \frac{1}{n}\sum_{j=1}^n m_\ell(X_j)\IE\left\{\left. \mathcal{L}(W_j-x)\right \vert \bX\right\}  \nonumber\\ 
       = &\ \frac{1}{n}\sum_{j=1}^n m_\ell(X_j)\mathcal{W}(X_j-x) +\nonumber   \\
       &\ 
        \begin{cases}
    \displaystyle{\frac{\eta(0,2)-\eta(1,1)}{n^2h^{1+2\beta}}\sum_{j=1}^n m_\ell(X_j)f_{U}(X_{j}-x)}+ & \\
    \displaystyle{o_p\left(\frac{1}{nh^{1+2\beta}}\right)}, & \text{ if $U$ is ordinary smooth,}\\
   \displaystyle{O_p\left(\frac{\exp(2h^{-\beta}/\gamma)}{nh^{1-2\beta_2}} \right)}, & \text{ if $U$ is super smooth.}
        \end{cases} \label{eq:idealest}
     \end{align}
Using existing results (e.g., those in \eqref{eq:apppolyK}), one can show that 
    \begin{align*}
    \IE\left\{m_\ell(X)\mathcal{W}(X-x)\right\} = &\ \tau_\ell(x)f_X(x)\mu_2+
         \frac{h^2}{2}\{\tau^{(2)}_\ell(x)f_X(x)\mu^2_2+\\
         &\ \tau_\ell(x)f^{(2)}_X(x)\mu_4-2\tau'_\ell(x)f'_X(x)\mu^2_2\}+O(h^4),
   \end{align*}
where $\tau_\ell(x)=m_\ell(x)f_X(x)$.
In addition, by \eqref{eq:naiveprod*}, $\IE\{m_\ell(X)f_U(X-x)\} = \tau_\ell^*(x)$, where $\tau^*_\ell(x)=m^*_\ell(x)f_W(x)$. It follows that 
\begin{equation}
\label{eq:EgDK}
    \begin{aligned}
       &\ \IE\left\{\hat g_{\ell, \text{DK}}(x)|\bX\right\} \\
       = &\ \tau_\ell(x)f_X(x)\mu_2+
         \frac{h^2}{2}\left\{\tau^{(2)}_\ell(x)f_X(x)\mu^2_2+\tau_\ell(x)f^{(2)}_X(x)\mu_4 - \right. \\
         &\ 2\tau'_\ell(x)f'_X(x)\mu^2_2\Big\}+o(h^2)+\\
        &\ 
        \begin{cases}
    \displaystyle{\frac{\eta(0,2)-\eta(1,1)}{nh^{1+2\beta}} \tau_\ell^*(x) +o_p\left(\frac{1}{nh^{1+2\beta}}\right)}, & \text{ if $U$ is ordinary smooth,}\\
   \displaystyle{O_p\left(\frac{\exp(2h^{-\beta}/\gamma)}{nh^{1-2\beta_2}} \right)}, & \text{ if $U$ is super smooth.}
        \end{cases}
        \end{aligned}
\end{equation}    
Hence, $\hat g_{\ell, \text{DK}}(x)$ is an asymptotically unbiased estimator of $m_\ell(x)f^2_X(x)\mu_2$ if  $nh^{1+2\beta}\to \infty$ for ordinary smooth $U$, or $nh^{1-2\beta_2} \exp(-2h^{-\beta}/\gamma)\to \infty$ for super smooth $U$. If we further assume that $nh^{3+2\beta}\to \infty$ for ordinary smooth $U$, or $nh^{3-2\beta_2} \exp(-2h^{-\beta}/\gamma)\to \infty$ for super smooth $U$, then the dominating bias of $\hat g_{\ell, \text{DK}}(x)$ (as an estimator for $m_\ell(x)f^2_X(x)\mu_2$) is equal to that of $\tilde g_\ell(x)$, i.e., the counterpart estimator based on error-free data.

\subsection{Variances and covariance of $\hat{g}_{1,\text{DK}}(x)$ and $\hat{g}_{2,\text{DK}}(x)$}
\label{sec:varg}
Because $\text{Var}\{\hat g_{\ell, \text{DK}}(x)| \bX\}$ is dominated by $\IE\{\hat g^2_{\ell, \text{DK}}(x)| \bX\}$, we focus on this conditional expectation next.

Define $\xi_1(x)=\IE\{(\sin \Theta)^2|X=x\}$ and $\xi_2(x)=\IE\{(\cos \Theta)^2|X=x\}$. Results in Sections~\ref{sec:loclin} and~\ref{sec:loclin2} regarding moments of $\mathcal{L}(W_j-x)$ are now helpful in deriving the following expectation,  
    \begin{align*}
        \IE\left\{\left. \hat{g}^2_{1,\text{DK}}(x) \right\vert \bX\right\}
    =&\ \frac{1}{n^2}\IE\left[\left.\left\{\sum_{j=1}^{n} \sin \Theta_j \mathcal{L}(W_j-x)\right\}^2\right \vert \bX\right]\\
    =&\ \frac{1}{n^2}\IE\left\{\left.\sum_{j=1}^n (\sin \Theta_j)^2\mathcal{L}^2(W_j-x)\right \vert \bX \right\} \\
    &\ + \frac{1}{n^2}\IE\left\{\left.\sum_{j \ne k }\sin \Theta_j\mathcal{L}(W_j-x)\sin \Theta_k\mathcal{L}(W_k-x)\right \vert \bX\right\}\\
    = &\  \frac{1}{n^2}\sum_{j=1}^{n}\xi_1(X_{j})\IE \left\{\left.\mathcal{L}^2(W_j-x)\right \vert \bX\right\}+\\
    &\ \frac{1}{n^2}\sum_{j\ne k} m_1(X_j)m_1(X_k)\IE\{\mathcal{L}(W_j-x)\mathcal{L}(W_k-x)|\bX\},
    \end{align*}
where {\reviseagain the second summation is dominated by the first summation and thus is absorbed in the non-dominating (lower-order) terms of the first summation that is elaborated next}.

When $U$ is ordinary smooth, by \eqref{eq:ELsqord}, we have\vspace{-2mm}
\begin{align*}
        &\ \IE\left\{\left. \hat{g}^2_{1,\text{DK}}(x) \right\vert \bX\right\}\\
=&\ \frac{1}{n^4h^{1+2\beta}}\sum_{j=1}^n \xi_1(X_j) f_U(X_j-x)  \left\{ \eta(0,0) \sum_{\ell\ne k \ne j}\left(\frac{X_k-x}{h}\right)^2K_{h}(X_k-x)\times \right.\\
&\ \left(\frac{X_\ell-x}{h}\right)^2K_{h}(X_\ell-x)
        -2 \eta(0,1) \sum_{\ell\ne k \ne j}\left(\frac{X_k-x}{h}\right)^2K_{h}(X_k-x)\left(\frac{X_\ell-x}{h}\right) \\
        &\ K_{h}(X_\ell-x)
    \left. +\eta(1,1) \sum_{\ell\ne k \ne j}\left(\frac{X_k-x}{h}\right)K_{h}(X_k-x)\left(\frac{X_\ell-x}{h}\right) K_{h}(X_\ell-x)\right\}\\
    &\ +o_p\left(\frac{1}{nh^{2+2\beta}}\right).
\end{align*}    
Finally, using
\eqref{eq:anyYprod} and \eqref{eq:sumellkj}, one can show that 
$$\IE\{\hat g^2_{1,\text{DK}}(x)|\bX\}=\frac{\eta(0,0)}{nh^{1+2\beta}}\xi_1^*(x)f_W(x)f^2_X(x) \mu_2^2+o_p\left(\frac{1}{nh^{1+2\beta}}\right),$$ 
where $\xi^*_1(x)=\IE\{(\sin \Theta)^2|W=x\}$. Similarly, 
$$\IE\{\hat g^2_{2,\text{DK}}(x)|\bX\}=\frac{\eta(0,0)}{nh^{1+2\beta}}\xi_2^*(x)f_W(x)f^2_X(x) \mu_2^2+o_p\left(\frac{1}{nh^{1+2\beta}}\right),$$ 
where $\xi^*_2(x)=\IE\{(\cos \Theta)^2|W=x\}$. Hence, when $U$ is ordinary smooth, 
\begin{equation}
\text{Var}\{\hat g_{\ell, \text{DK}}(x)|\bX\}=\frac{\eta(0,0)}{nh^{1+2\beta}}\xi_\ell^*(x)f_W(x)f^2_X(x) \mu_2^2+o_p\left(\frac{1}{nh^{1+2\beta}}\right), \text{ for $\ell=1, 2$.}
\label{eq:vargDKord}
\end{equation}

When $U$ is super smooth, by \eqref{eq:ELsqsup}, one has 
\begin{equation}
\text{Var}\{\hat g_{\ell, \text{DK}}(x)|\bX\}=O_p\left(\frac{\exp(2h^{-\beta}/\gamma)}{nh^{1-2\beta_2}}\right), \text{ for $\ell=1, 2$.}
\label{eq:vargDKsup}
\end{equation}

Because
\scalebox{.99}{$\text{Cov}(\hat g_{1, \text{DK}}(x), \hat g_{2, \text{DK}}(x)|\bX)$} is dominated by $\IE\{\hat g_{1, \text{DK}}(x)\hat g_{2, \text{DK}}(x)|\bX\}$, we focus on deriving this conditional expectation next. 

Define $\psi(x)=\IE(\sin \Theta \cos \Theta|X=x)$, then we have
\begin{align*}
    &\ \IE\left
   \{\left.\hat{g}_{1,\text{DK}}(x)\hat{g}_{2,\text{DK}}(x)\right \vert \bX\right\} \\
   = &\  
   \frac{1}{n^2}\sum_{j=1}^{n}\IE\left\{\left. \sin \Theta_{j} \cos \Theta_{j}\mathcal{L}^2(W_{j}-x)\right \vert \bX\right\}+
   \\ &\
   \frac{1}{n^2}\sum_{j\ne k}\IE\left\{\left.\sin \Theta_{j} \mathcal{L}(W_{j}-x)\cos \Theta_k \mathcal{L}(W_k-x)\right \vert \bX\right\}\\ 
   = &\ \frac{1}{n^2}\sum_{j=1}^{n}\psi(X_{j})\IE\left\{\left.\mathcal{L}^2(W_{j}-x)\right\vert \bX\right\}+ \\
   &\ \frac{1}{n^2}\sum_{j\ne k}m_{1}(X_{j})m_{2}(X_k)\IE\left\{\left.\mathcal{L}(W_{j}-x)\mathcal{L}(W_k-x)\right \vert \bX\right\}.
\end{align*}
Using earlier findings for the above two conditional expectations, we obtain that, when $U$ is ordinary smooth, 
\begin{equation}
\text{Cov}\{\hat g_{1, \text{DK}}(x), \hat g_{2, \text{DK}}(x)|\bX\}=\frac{\eta(0,0)}{nh^{1+2\beta}}\psi^*(x)f_W(x)f^2_X(x) \mu_2^2+o_p\left(\frac{1}{nh^{1+2\beta}}\right),
\label{eq:covgDKord}
\end{equation}
where $\psi^*(x)=\IE(\sin \Theta \cos \Theta|W=x)$;
and, when $U$ is super smooth, 
\begin{equation}
\text{Cov}\{\hat g_{1, \text{DK}}(x), \hat g_{2, \text{DK}}(x)|\bX\}=O_p\left(\frac{\exp(2h^{-\beta}/\gamma)}{nh^{1-2\beta_2}}\right).
\label{eq:covgDKsup}
\end{equation}
    \subsection{Asymptotic bias of \texorpdfstring{$\hat{m}_{\text{DK}}(x)$}{Lg}}
    With the first and second asymptotic moments of $\hat{g}_{\ell, \text{DK}}(x)$ derived, we are ready to derive the asymptotic bias of $\hat{m}_{\text{DK}}(x)$. Suppressing the dependence on $x$ for simpler notations, we view $\hat m_{\cdot}$ as a bivariate function with arguments $(\hat g_{1,\cdot}, \hat g_{2, \cdot})$, i.e., $\hat m_{\cdot}=\text{atan2}[\hat g_{1,\cdot}, \hat g_{2, \cdot}]$. A Taylor expansion of $\hat{m}_{\cdot}$ around $(\hat g_{1,\cdot}, \hat g_{2,\cdot})=(g_1, g_2)$ is given by
    \begin{equation}
    \label{eq:mhatseries}
    \begin{aligned}
     &\ 
     \hat{m}_{\cdot}-m \\
     = &\ \frac{1}{g^2_{1}+g^2_{2}}\left\{g_2\left(\hat{g}_{1,\cdot}-g_1\right)-g_1\left(\hat{g}_{2,\cdot}-g_{2}\right)\right\}-\frac{1}{(g^2_{1}+g^2_{2})^{2}}\times \\
     &\ \left[g_{1}g_{2}\left\{\left(\hat{g}_{1,\cdot}-g_{1}\right)^{2}-\left(\hat{g}_{2,\cdot}-g_{2}\right)^{2}\right\}   +\left(g_{2}^{2}-g_{1}^{2}\right)\left(\hat{g}_{1,\cdot}-g_{1}\right)\left(\hat{g}_{2,\cdot}-g_{2}\right)\right]+\\
     &\ O_p\left\{\left(\hat{g}_{1,\cdot}-g_{1}\right)^{3}\right\}+O_p\left\{\left(\hat{g}_{2,\cdot}-g_{2}\right)^{3}\right\}.
    \end{aligned}
    \end{equation}
Taking the expectation of both sides of \eqref{eq:mhatseries} gives 
 \begin{equation}
    \label{eq:mhatbias}
    \begin{aligned}    &\ \text{Bias}\left(\hat{m}_{\cdot}\right) \\
    = &\  \frac{1}{g^2_{1}+g^2_{2}}\left\{g_2\IE(\hat{g}_{1,\cdot}-g_1)-g_1\IE(\hat{g}_{2,\cdot}-g_{2})\right\}-\frac{1}{(g^2_{1}+g^2_{2})^{2}}\left(g_{1}g_{2}\left[\IE\left\{(\hat{g}_{1,\cdot}-g_{1})^{2}\right\}\right.\right.\\
    &\ \left.-\IE\left\{(\hat{g}_{2,\cdot}-g_{2})^{2}\right\}\right] \left. +\left(g_{2}^{2}-g_{1}^{2}\right)\IE\left\{\left(\hat{g}_{1,\cdot}-g_{1}\right)\left(\hat{g}_{2,\cdot}-g_{2}\right)\right\}\right) \\
    &\ + O\{(\hat{g}_{1,\cdot}-g_{1})^{3}\}+O\{(\hat{g}_{2,\cdot}-g_{2})^{3}\},
        \end{aligned}
    \end{equation}
We consider the asymptotic results conditioning on $\bX$ in this appendix, and \eqref{eq:mhatbias} still holds when all the moments are conditional moments given $\bX$. 
    
For the deconvoluting kernel estimator $\hat m_{\text{DK}}(x)$, $g_\ell(x)=m_\ell(x) f^2_X(x)\mu_2$, for $\ell=1, 2$, in \eqref{eq:mhatbias}; $\IE(\hat g_{\ell, \text{DK}}-g_\ell|\bX)$ as the bias of $\hat g_{\ell. \text{DK}}$ is revealed by \eqref{eq:EgDK}; $\IE\{(\hat g_{\ell, \text{DK}}-g_\ell)^2|\bX\}=\text{Var}(\hat g_{\ell, \text{DK}}|\bX)+\{\text{Bias}(\hat g_{\ell, \text{DK}}|\bX)\}^2$ is dominated by $\text{Var}(\hat g_{\ell, \text{DK}}|\bX)$ given \eqref{eq:vargDKord} and \eqref{eq:vargDKsup}, with $\{\text{Bias}(\hat g_{\ell, \text{DK}}|\bX)\}^2$ absorbed in lower-order terms in $\text{Bias}(\hat g_{\ell, \text{DK}}|\bX)$; lastly, $\IE\{(\hat g_1-g_1)(\hat g_2-g_2)|\bX\}$ is equal to $\text{Cov}(\hat g_{1, \text{DK}}, \hat g_{2, \text{DK}}|\bX)+\text{Bias}(\hat g_{1,\text{DK}}|\bX)\text{Bias}(\hat g_{2, \text{DK}}|\bX)$, which is dominated by the first tem  $\text{Cov}(\hat g_{1,\text{DK}}, \hat g_{2,\text{DK}}|\bX)$ given  in \eqref{eq:covgDKord} and \eqref{eq:covgDKsup}, with the second term absorbed in lower order terms in $\text{Bias}(\hat g_{\ell, \text{DK}}|\bX)$. Therefore, putting \eqref{eq:EgDK}, \eqref{eq:vargDKord}, and \eqref{eq:covgDKord} in \eqref{eq:mhatbias}, we have, for ordinary smooth $U$,  
    \begin{align}
     &\text{Bias}\left\{\hat{m}_{\text{DK}}(x)|\bX\right\} \nonumber \\
   = &\ \frac{h^2 \mu_2}{2} \left\{\frac{m_1^{(2)}(x)m_2(x)-m_1(x)m_2^{(2)}(x)}{m_1^2(x)+m_2^2(x)}\right\} +o(h^2)+ \label{eq:DiMarziobias}  \\
   &\ \frac{f_W(x)}{nh^{1+2\beta}f^2_X(x)\mu_2 \left\{m_1^2(x)+m_2^2(x)\right\}}\times \nonumber \\
   &\ \bigg[\{\eta(0,2)-\eta(1,1)\}\{m_1^*(x)m_2(x)-m_1(x)m_2^*(x)\} -\nonumber \\
   &\ \left. \eta(0,0)\mu_2\times \frac{m_1(x)m_2(x)\left\{\xi^*_1(x)-\xi^*_2(x)\right\}+\psi^*(x) \left\{ m_2^2(x)-m_1^2(x)\right\}}{m_1^2(x)+m_2^2(x)} \right]+\nonumber \\
   &\ o_p\left(\frac{1}{nh^{1+2\beta}}\right).\label{eq:mDKbiasme}
    \end{align}
Viewing $m(x)=\text{atan2}[m_1(x), m_2(x)]$ as a univariate function with argument $x$, one has $$m'(x)=\frac{m_1'(x)m_2(x)-m_1(x)m_2'(x)}{m_1^2(x)+m_2^2(x)},  $$following which one can show that   
\begin{align*}
  \frac{m_1^{(2)}(x)m_2(x)-m_1(x)m_2^{(2)}(x)}{m_1^2(x)+m_2^2(x)}
  & = m^{(2)}(x)+2 m'(x) \frac{C'(x)}{C(x)}\\
  & = m^{(2)}(x)+2 m'(x)\frac{m_1(x)m'_1(x)+m_2(x)m'_2(x)}{m_1^2(x)+m_2^2(x)},
\end{align*}
where $C(x)=\sqrt{m_1^2(x)+m_2^2(x)}$. This reveals that the dominating term in \eqref{eq:DiMarziobias} coincides with the dominating bias of the ideal estimator $\tilde m(x)$ stated in \cite[][see Theorem 4]{di2013non}. 
For the second part of the dominating bias that is of order $1/(nh^{1+2\beta})$ following \eqref{eq:DiMarziobias} above, one can see that parts of it reduce to zero if $W=X$, i.e., in the absence of measurement error so that, for instance, $m^*_\ell(x)=m_\ell(x)$. This second part of the dominating bias reflects the effect of measurement error on the bias of $\hat m_{\text{DK}}(x)$. Certainly, if $nh^{3+2\beta}\to \infty$ as $n\to \infty$ and $h\to 0$, then this second term is absorbed in $o(h^2)$ in \eqref{eq:DiMarziobias}, suggesting that a larger sample size is required if we want $\hat m_{\text{DK}}(x)$ to achieve a comparable level of bias as that of $\tilde m(x)$. 

Similarly, if $U$ is super smooth, incorporating \eqref{eq:EgDK}, \eqref{eq:vargDKsup}, and \eqref{eq:covgDKsup} in \eqref{eq:mhatbias}, we have 
  \begin{align}\label{DKbias}\text{Bias}\left\{\hat{m}_{\text{DK}}(x)|\bX\right\}  
  = &\ \frac{h^2 \mu_2}{2}\times \left\{\frac{m_1^{(2)}(x)m_2(x)-m_1(x)m_2^{(2)}(x)}{m_1^2(x)+m_2^2(x)}\right\}\nonumber \\ 
  &\ +o(h^2)+O_p\left(\frac{\exp(2h^{-\beta}/\gamma)}{nh^{1-2\beta_2}}\right),
   \end{align}
where the effect of measurement error is (relatively more) vaguely reflected in the last term that depends on the order of smoothness $\beta$. However it is easy to see that if $nh^{4-2\beta_2}\exp(-2h^{-\beta}/\gamma)$ $\to \infty$ as $n\to \infty$ and $h\to 0$, then the dominating bias of $\hat m_{\text{DK}}(x)$ again coincides with that of $\tilde m(x)$. In other words, in the presence of super smooth measurement error, an even larger sample size than what is needed when $U$ is ordinary smooth would be required for the bias of $m_{\text{DK}}(x)$ to be comparable with that of $\tilde m(x)$. These give the results in Theorem 1 in the main article regarding the asymptotic bias of $m_{\text{DK}}(x)$.

    \subsection{Asymptotic variance of \texorpdfstring{$\hat{m}_{\text{DK}}(x)$}{Lg}}\label{dcvar}
Now viewing $\hat m^2_{\cdot}$ as a bivariate function with arguments $(\hat g_{1,\cdot}, \hat g_{2,\cdot})$, we have a Taylor expansion of $\hat m^2_{\cdot}$ around $(\hat g_{1,\cdot}, \hat g_{2,\cdot})=(g_1, g_2)$ given by  
\begin{align*}
&\ \hat m_\cdot^2\\
= &\ \ m^2+\frac{2m}{g_1^2+g_2^2}\left\{g_2 \left(\hat g_{1,\cdot}-g_1\right)- g_1 \left(\hat g_{2,\cdot}-g_2\right)\right\} + \frac{2m}{\left(g_1^2+g_2^2\right)^2}\times \\
&\ \left[g_1g_2 \left\{\left(\hat g_{2,\cdot}-g_2\right)^2 -\left(\hat g_{1,\cdot}-g_1\right)^2\right\}+\left(g_1^2-g_2^2\right)\left(\hat g_{1,\cdot}-g_1\right)\left(\hat g_{2,\cdot}-g_2\right)
\right]+\\
&\  \frac{1}{\left(g_1^2+g_2^2\right)^2}\left\{ g_2^2 \left(\hat g_{1,\cdot}-g_1\right)^2 +g_1^2 \left(\hat g_{2,\cdot}-g_2\right)^2-2g_1g_2 \left(\hat g_{1,\cdot}-g_1\right)\left(\hat g_{2,\cdot}-g_2\right)\right\}+\\
&\ O_p\left\{\left(\hat g_{1,\cdot}-g_1\right)^3\right\}+O_p\left\{\left(\hat g_{2,\cdot}-g_2\right)^3\right\}.
\end{align*}  
Taking conditional expectations of both sides of the above equation and dropping terms that can be absorbed in lower-order terms of $\IE(\hat g_{\ell, \cdot}-g_\ell|\bX)$ {\reviseagain in \eqref{eq:EgDK}}, we have 
\begin{equation}
\label{eq:Emhatsq}
\begin{aligned}
&\ \IE\left(\hat m^2_{\cdot}\right|\bX)\\
= &\  m^2 +\frac{2m}{g_1^2+g_2^2}\left\{g_2 \text{Bias}\left( \hat g_{1,\cdot}|\bX\right)-g_1 \text{Bias}\left( \hat g_{2,\cdot}|\bX\right)\right\}+\\
&\  \frac{1}{\left(g_1^2+g_2^2\right)^2}\left\{ g_1 (2mg_2+g_1)\text{Var}\left(\hat g_{2,\cdot}|\bX\right)-g_2 (2mg_1-g_2)\text{Var}\left(\hat g_{1,\cdot}|\bX\right)\right\}+\\
&\ \frac{2}{\left(g_1^2+g_2^2\right)^2}\left\{ m\left( g_1^2-g_2^2\right)-g_1g_2\right\}\text{Cov}\left( \hat g_{1,\cdot}, \hat g_{2,\cdot}|\bX\right)+\text{lower-order terms}.
\end{aligned}
\end{equation}

Using results in Sections~\ref{sec:Eg1g2} and \ref{sec:varg}, we have, for ordinary smooth $U$, 
\begin{equation}
\label{eq:EmDK2}
\begin{aligned}
&\ \IE\left\{\hat m^2_{\text{DK}}(x)|\bX\right\} \\
= &\ m^2(x)+ h^2 m(x) \mu_2 \left\{ m^{(2)}(x)+2 m'(x) \frac{C'(x)}{C(x)}\right\}+o_p(h^2)+\\
&\ \frac{2f_W(x)}{nh^{1+2\beta}f_X^2(x) \left\{ m_1^2(x)+m_2^2(x)\right\}}\bigg \{ m(x)\{\eta(0,2)-\eta(1,1)\}\left\{m_1^*(x)m_2(x)-\right.\\
&\ \left. m_1(x)m_2^*(x)\right\}/\mu_2 +\frac{\eta(0,0)}{m_1^2(x)+m_2^2(x)} \bigg( m(x) m_1(x)m_2(x) \left\{ \xi^*_2(x)-\xi^*_1(x)\right\}+  \\
&\ \left.\frac{m_1^2(x)\xi^*_2(x)+m_2^2(x)\xi^*_1(x)}{2}\right.  + \psi^*(x) \left[m(x) \left\{m_1^2(x)-m_2^2(x)\right\}\right. \\
&\ \left. -m_1(x)m_2(x)\right] \bigg)\bigg \} +o_p\left(\frac{1}{nh^{1+2\beta}}\right), 
\end{aligned}
\end{equation}

and, for super smooth $U$,
\begin{align*}
    &\ \IE\left\{\hat m^2_{\text{DK}}(x)|\bX\right\} \\
= &\ m^2(x)+ h^2 m(x) \mu_2 \left\{ m^{(2)}(x)+2m'(x) \frac{C'(x)}{C(x)}\right\}+o_p(h^2) +O_p\left(\frac{\exp(2h^{-\beta}/\gamma)}{nh^{1-2\beta_2}}\right).
\end{align*}

Focusing on the term of order $O(h^2)$ and the term of order $O(1/(nh^{1+2\beta}))$ for ordinary smooth $U$, we have \begin{equation}\label{DKm2}\IE\{\hat m^2_{\text{DK}}(x)|\bX\}=m^2(x)+h^2\times A+\frac{1}{nh^{1+2\beta}} \times B +o(h^2)+o_p\left(\frac{1}{nh^{1+2\beta}}\right)\end{equation} and, with $\text{Bias}\{\hat m_{\text{DK}}(x)|\bX\}=h^2\times C+ \{1/(nh^{1+2\beta})\}\times D+o_p(1/(nh^{1+2\beta}))$, 
\begin{align*}
&\ \left[\IE\left\{\hat m_{\text{DK}}(x)|\bX\right\}\right]^2 \\
 = &\ \left[\text{Bias}\{\hat m_{\text{DK}}(x)|\bX\}+m(x)\right]^2\\
= &\ \text{Bias}^2\{\hat m_{\text{DK}}(x)|\bX\}+m^2(x)+2m(x)\text{Bias}\{\hat m_{\text{DK}}(x)|\bX\}\\
= &\ m^2(x)+2 m(x) \left (h^2\times C+\frac{1}{nh^{1+2\beta}}\times D\right)+o(h^2)+o_p\left(\frac{1}{nh^{1+2\beta}}\right),
\end{align*}
where $A$ and $B$ are indicated by \eqref{eq:EmDK2}, $C$ and $D$ are implied by \eqref{eq:DiMarziobias} and \eqref{eq:mDKbiasme}. In fact, $A-2m(x)C=0$, and thus, for ordinary smooth $U$, 
\begin{align*}
   &\ \text{Var}\{\hat m_{\text{DK}}(x)|\bX\} \\
   = &\ h^2 \{A-2m(x)C\}+\frac{1}{nh^{1+2\beta}}\{B-2m(x) D\}+o(h^2)+o_p\left(\frac{1}{nh^{1+2\beta}}\right)\\
  =  &\ \frac{f_W(x)\eta(0,0)}{nh^{1+2\beta}f_X^2(x)\{m_1^2(x)+m_2^2(x)\}^2} \{ m_1^2(x)\xi^*_2(x)+m_2^2(x)\xi_1^*(x)\\
  &\ -2m_1(x)m_2(x) \psi^*(x)\}+o_p\left(\frac{1}{nh^{1+2\beta}}\right);
\end{align*}
and, for super smooth $U$, 
$$\text{Var}\{\hat m_{\text{DK}}(x)|\bX\}=O_p\left(\frac{\exp(2h^{-\beta}/\gamma)}{nh^{1-2\beta_2}}\right).$$
  
\renewcommand*{\theequation}{E.\arabic{equation}}
\setcounter{equation}{0}
\renewcommand*{\thesubsection}{E.\arabic{subsection}}
\renewcommand*{\thelemma}{E.\arabic{lemma}}
\setcounter{subsection}{0}
\section*{{\revise Appendix E: Asymptotic properties of the complex error estimator $\hat m_{\text{CE}}(x)$}}

We consider the $p$-th order local polynomial weight used in $\hat m_{\text{CE}}(x)$ in this appendix, for $p\ge 0$. 
Throughout this appendix, we assume normal measurement error and conditions imposed in Lemma~\ref{lem:addce2} in the main article are satisfied, such as relevant functions being entire functions.

\subsection{Expectations of $\hat{g}_{1,\text{CE}}(x)$ and $\hat g_{2,\text{CE}}(x)$}
\label{sec:EgCE}
Recall that 
\begin{align*}      
\hat g_{1,\text{CE}}(x) &= \frac{1}{n}\sum_{j=1}^n \sin \Theta_j \mathcal{L}^*(W_j-x) \\
& = \frac{1}{n}\sum_{j=1}^n \sin \Theta_j  \times \frac{1}{B}\sum_{b=1}^{B} \mathcal{W}^*(W_{j,b}^*-x; \bW_b^*) \\
& =  \frac{1}{B}\sum_{b=1}^{B} \frac{1}{n}  \sum_{j=1}^n \sin \Theta_j \mathcal{W}^*(W_{j,b}^*-x; \bW_b^*) \\
& = \frac{1}{B}\sum_{b=1}^{B} \hat g^*_{1,b}(x),
\end{align*}
where $\hat g^*_{1,b}(x)=n^{-1}  \sum_{j=1}^n \sin \Theta_j \mathcal{W}^*(W_{j,b}^*-x; \bW_b^*)$ is the same type of estimator as the ideal local linear estimator $\tilde g_1(x)$ for $g_1(x)=m_1(x)=\IE(\sin \Theta|X=x)$, with the complex-valued weight $\mathcal{W}^*(W_{j,b}^*-x; \bW_b^*)$ in place of the normalized local linear weight $\mathcal{W}^*(X_j-x; \bX)$ in (5) in the main article, or, more generically, in place of the naive local polynomial weight of order $p$. By Lemma~\ref{lem:addce2}, $\IE\{\hat g^*_{1,b}(x)|\bX, \bTheta\}=\tilde g_1(x)$, and thus $\IE\{\hat g_{1,\text{CE}}(x)|\bX,\bTheta\}=\tilde g_1(x)$. It follows that, if $\sigma^2_1(\cdot)=\text{Var}(\sin \Theta|X=\cdot)=\xi_1(\cdot)-m_1^2(\cdot)$ is continuous in a neighborhood of $x$, and that $nh\to \infty$ as $n\to \infty$ and $h\to 0$, then
\begin{align}
&\ \IE\{\hat g_{1,\text{CE}}(x)|\bX\} \nonumber \\
= &\ \IE\left[ \left.\IE\left\{\hat g_{1,\text{CE}}(x)|\bX,\bTheta\right \}\right \vert \bX\right] \nonumber\\
= &\ \IE\left\{\tilde g_1(x)|\bX\right\}, \text{ next use Theorem 3.1 in \cite{Fan},} \nonumber    \\
= &\ 
\begin{cases}
 m_1(x) + \be_1^\top \bS^{-1} \tilde \bc_p \frac{1}{(p+2)!}\left\{m_1^{(p+2)}(x)+ \right.\\
 \left.(p+2)m_1^{(p+1)}(x) \frac{f'_X(x)}{f_X(x)}\right\}h^{p+2}
 +o_p(h^{p+2}), & \text{ if $p$ is even,}\\  
 m_1(x)+ \be_1^\top \bS^{-1}\bc_p \frac{1}{(p+1)!}m_1^{(p+1)}(x)h^{p+1}+o_p(h^{p+1}), & \text{ if $p$ is odd,}
\end{cases}
\label{eq:Egtilde1}
\end{align}
where $\bS=(\mu_{r_1+r_2})_{0\le r_1, r_2\le p}$, $\bc_p=(\mu_{p+1}, \ldots, \mu_{2p+1})^\top$,  $\tilde\bc_p=(\mu_{p+2}, \ldots, \mu_{2p+2})^\top$, and $\be_1$ is the $(p+1)\times 1$ unit-norm vector with the first entry equal to 1. 
Similarly, if $\sigma^2_2(\cdot)=\text{Var}(\cos \Theta|X=\cdot)=\xi_2(\cdot)-m_2^2(\cdot)$ is continuous in a neighborhood of $x$, and $nh\to \infty$ as $n\to \infty$ and $h\to 0$, then
\begin{align}
&\ \IE\{\hat g_{2,\text{CE}}(x)|\bX\} \nonumber \\ 
= &\  
\begin{cases}
 m_2(x) + \be_1^\top \bS^{-1} \tilde \bc_p \frac{1}{(p+2)!}\left\{m_2^{(p+2)}(x)+ \right.\\
 \left.(p+2)m_2^{(p+1)}(x) \frac{f'_X(x)}{f_X(x)}\right\}h^{p+2}
 +o_p(h^{p+2}), & \text{ if $p$ is even,}\\  
 m_2(x)+ \be_1^\top \bS^{-1}\bc_p \frac{1}{(p+1)!}m_2^{(p+1)}h^{p+1}+o_p(h^{p+1}), & \text{ if $p$ is odd,}
\end{cases}
\label{eq:Egtilde2} 
\end{align}

In the special case considered in the main article with $p=1$, we have 
\begin{equation*}
 \IE\{\hat g_{\ell,\text{CE}}(x)|\bX\} = m_\ell(x) + \frac{h^2}{2}\mu_2m_\ell^{(2)}(x)+O_p(h^4), \text{ for $\ell=1, 2$}. 
\end{equation*}

\subsection{Variances of $\hat{g}_{1,\text{CE}}(x)$ and $\hat{g}_{2,\text{CE}}(x)$} \label{sec:VargCE}
By Lemma~\ref{lem:addce2}, $\IE\{\hat g^*_{1,b}(x)|\bX, \bTheta\}=\tilde g_1(x)$,  and $\IE[\{\hat g^{*}_{1,b}(x)\}^2|\bX, \bTheta]=\tilde g^2_1(x)$, thus
\begin{align}
\text{Var}\left\{ \left. \hat g^*_{1,b}(x) \right \vert \bX\right\} & = \IE\left[\left. \left\{ \hat g^*_{1,b}(x)\right\}^2 \right \vert \bX\right]-\left[\IE\left. \left\{ \hat g^*_{1,b}(x) \right \vert \bX\right\}\right]^2 \nonumber\\
& = \IE\left\{\left. \tilde g^2_1(x) \right \vert \bX\right\}-\left[\IE\left. \left\{ \tilde g_1(x) \right \vert \bX\right\}\right]^2 \nonumber\\
& =\text{Var}\{\tilde g_1(x)|\bX\} \nonumber \\
& =\be_1 ^\top\bS^{-1} \bnu^*\bS^{-1}\be_1\frac{\sigma_1^2(x)}{nh f_X(x)}+o_p\left(\frac{1}{nh}\right),\label{eq:varg1b*}
\end{align}
where $\bnu^*=(\nu_{r_1+r_2})_{0\le r_1, r_2\le p}$, and the last equation comes from the asymptotic variance of a local polynomial estimator of $m_1(x)$ in Theorem 3.1 in \cite{Fan} under the assumption that $\sigma^2_1(\cdot)$ is continuous in a neighborhood of $x$, and that $nh\to \infty$ as $n\to \infty$ and $h\to 0$.
It follows that
\begin{align}
&\ \text{Var}\left\{\hat{g}_{1,\text{CE}}(x)|\bX\right\} \nonumber\\
=&\ \text{Var}\left\{ \left. \frac{1}{B}\sum_{b=1}^{B}\hat g^{*}_{1,b}(x) \right \vert \bX\right\}, \nonumber\\
=&\frac{1}{B^{2}}\sum_{b=1}^{B}\text{Var}\left\{\left. \hat g^{*}_{1,b}(x)\right \vert \bX\right\}+\frac{1}{B^{2}}\sum_{a \ne b}\text{Cov}\left\{\left. \hat g^{*}_{1,a}(x), \hat g^{*}_{1,b}(x) \right \vert \bX\right\}\nonumber\\
= &\frac{1}{B}\text{Var}\left\{\left. \tilde g_1(x)\right \vert \bX\right\}+\frac{B-1}{B}\left[\IE\left\{\left.\hat g^{*}_{1,a}(x)\hat g^{*}_{1,b}(x) \right \vert \bX \right\}-\right. \nonumber \\
&\ \left. \IE\left\{\left.\hat g^*_{1,a}(x)\right\vert \bX \right\}\IE\left\{\left.\hat g^*_{1,b}(x)\right\vert \bX \right\}\right], \text{ where $a\ne b$,} \nonumber \\
= &  \be_1^\top \bS^{-1}\bnu^*\bS^{-1}\be_1\frac{ \sigma_1^2(x)}{Bnh f_X(x)}+o_p\left(\frac{1}{Bnh}\right)+ \frac{B-1}{B}\IE\left\{\left.\hat g^{*}_{1,a}(x)\hat g^{*}_{1,b}(x) \right \vert \bX \right\}-\nonumber\\
&\ \frac{B-1}{B} \left[\IE\left\{\left. \tilde g_1(x)\right\vert \bX \right\}\right]^2, \text{ by \eqref{eq:varg1b*}.}
\label{eq:yet2find} 
\end{align}
Assume that $nh^{p+3}\to \infty$, if $p$ is even, or, if $p$ is odd, assume $nh^{p+2}\to \infty$, then the first two terms (of order $1/(nh)$ at a fixed $B$) are included in $o_p(h^{p+1+I(\text{$p$ is even})})$ that is the order of the last term. 
In what follows, we look into the only expectation left unsolved above, \scalebox{.94}{$\IE\{\hat{g}_{1,a}^{*}(x)\hat{g}_{1,b}^{*}(x)|\boldsymbol{X}\}$,} which requires new tactics to solve due to its dependence on two sets of complex-valued error-contaminated covariate values, $\bW^*_a=(W^*_{1, a}, \ldots, W^*_{n,a})^\top$ and \\ $\bW^*_b=(W^*_{1, b}, \ldots, W^*_{n,b})^\top$. 

Adopting notations for weighted least squares estimators as in \cite[Chapter 3,][]{Fan}, we have
    $$\hat{g}_{1,a}^{*}(x)\hat{g}_{1,b}^{*}(x)=\boldsymbol{e}_{1}^{\top}(\boldsymbol{G}_{a}^{\top}\bOmega_{a}\boldsymbol{G}_{a})^{-1}\boldsymbol{G}_{a}^{\top}\bOmega_a\boldsymbol{Y}\boldsymbol{Y}^{\top}\bOmega_b\boldsymbol{G}_{b}(\boldsymbol{G}_{b}^{\top}\boldsymbol{\Omega}_{b}\boldsymbol{G}_{b})^{-1}\boldsymbol{e}_{1},$$
    where $\boldsymbol{Y}=(\sin\Theta_{1},\ldots,\sin\Theta_{n})^{\top}$, $\bOmega_a=\text{diag}(K_{h}(W^*_{1,a}-x),\ldots, K_{h}(W^*_{n,a}-x))$, 
    $$\bG_a=
    \begin{bmatrix} 
    1 & W^*_{1,a}-x & \ldots & (W_{1,a}^*-x)^p\\
    \vdots & \vdots & \ddots & \vdots \\
    1 & W^*_{n,a}-x & \ldots & (W_{n,a}^*-x)^p
    \end{bmatrix},$$
   and $\bOmega_b$ and $\bG_b$ are similarly defined. Recall that, for $j=1, \ldots, n$, $W^*_{j,a}=W_j+ i \sigma_u Z_{j,a}$ and $W^*_{j,b}=W_j+i \sigma_u Z_{j,b}$, with $Z_{j,a} \perp Z_{j,b}$ from $N(0, 1)$ for $a\ne b$. Utilizing the law of iterated expectation, we first look into the expectation conditioning on $\bW^*_a$, $\bW^*_b$, and $\bX$,
    \begin{align}
    &\ \IE\left\{\left.\hat{g}_{1,a}^{*}(x)\hat{g}_{1,b}^{*}(x) \right \vert\bW^*_a, \bW^*_b,\boldsymbol{X}\right\} \nonumber\\
    = &\ \boldsymbol{e}_{1}^{\top}(\boldsymbol{G}_{a}^{\top}\bOmega_{a}\boldsymbol{G}_{a})^{-1}\boldsymbol{G}_{a}^{\top}\bOmega_{a}\IE(\bY\bY^{\top}|\boldsymbol{X})\bOmega_{b}\boldsymbol{G}_{b}(\boldsymbol{G}_{b}^{\top}\bOmega_{b}\boldsymbol{G}_{b})^{-1}\boldsymbol{e}_{1} \nonumber \\
    = &\ \boldsymbol{e}_{1}^{\top}(n^{-1}\boldsymbol{G}_{a}^{\top}\bOmega_{a}\boldsymbol{G}_{a})^{-1}(n^{-2}\boldsymbol{G}_{a}^{\top}\bOmega_{a}\boldsymbol{\Sigma} \bOmega_{b}\boldsymbol{G}_{b})(n^{-1}\boldsymbol{G}_{b}^{\top}\bOmega_{b}\boldsymbol{G}_{b})^{-1}\boldsymbol{e}_{1}, \label{eq:condexp1}
    \end{align}
    where
$$\boldsymbol{\Sigma}=\IE\left(\left.\bY\bY^\top \right \vert\bX\right)=
\scalebox{.9}{$\begin{bmatrix}
    \xi_1(X_{1})& m_1(X_{1})m_1(X_{2}) & \ldots &m_{1}(X_{1})m_{1}(X_{n}) \\
    m_{1}(X_{2})m_{1}(X_{1})& \xi_1(X_{2})& \ldots &m_{1}(X_{2})m_{1}(X_{n}) \\
    \vdots & \vdots & \ddots &\vdots \\
    m_{1}(X_{n})m_{1}(X_{1})& m_{1}(X_{n})m_{1}(X_{2})&\ldots & \xi_1(X_{n})
\end{bmatrix}.$}$$

Because 
$$n^{-1}\bG_{a}^{\top}\bOmega_{a}\bG_{a} = \left[\frac{1}{n}\sum_{j=1}^{n} (W^*_{j,a}-x)^{\ell+q-2}K_{h}(W^*_{j,a}-x)\right]_{1\le \ell, q\le (p+1)},$$
with the expectation of the $(\ell, q)$ entry  conditioning on $\bX$ equal to, by Lemma 1, for $\ell, q=1, \ldots,  p+1$,
\begin{align*}
 &\ \frac{1}{n}\sum_{j=1}^{n} \IE\left\{\left.(W^*_{j,a}-x)^{\ell+q-2}K_{h}(W^*_{j,a}-x)\right\vert\boldsymbol{X}\right\} \\
= &\  \frac{1}{n} \sum_{j=1}^{n}(X_j-x)^{\ell+q-2}K_{h}(X_{j}-x) \\
= &\ h^{\ell+q-2}\mu_{\ell+q-2} f_X(x)+h^{\ell+q-1}\mu_{\ell+q-1} f'_X(x)+h^{\ell+q}\mu_{\ell+q} f^{(2)}_X(x)/2\\
&\ +O_p\left(\frac{h^{\ell+q-2}}{\sqrt{nh}}\right),
\end{align*}
therefore, 
\begin{equation}
\label{eq:left&right}
\begin{aligned}
    &\ \IE\left(\left.n^{-1}\bG_a^\top \bOmega_a\bG_a \right \vert \bX\right)
     \\
     = &\ \bH\left\{f_X(x)\bS +hf'_X(x) \bS_1 +h^2 f_X^{(2)}(x)\bS_2/2+O_p\left(\frac{1}{\sqrt{nh}}\right)\right\}\bH,
\end{aligned}
\end{equation}
where $\bH=\text{diag}(1, h, \ldots, h^p)$, $\bS_\ell=(\mu_{r_1+r_2+\ell})_{0\le r_1, r_2\le p}$, for $\ell=1, 2$. This also gives $\IE\{n^{-1}\bG_b^\top \bOmega_b\bG_b|\bX\}$.

Next we study the middle portion of \eqref{eq:condexp1}, $n^{-2}\bG_a^\top\bOmega_a\bSigma \bOmega_b \bG_b$. The $(\ell, q)$- entry of this matrix is
\begin{equation}
\label{eq:middleportion}
\begin{aligned}
&\ (n^{-2}\bG_a^\top\bOmega_a\bSigma \bOmega_b \bG_b)_{\ell,q} \\
= &\
\frac{1}{n^2}\sum_{j=1}^{n}\sum_{k=1}^{n}K_{h}(W^*_{j,b}-x)K_{h}(W^*_{k,a}-x)\times\\ &\ \IE(Y_jY_k|\bX)
(W^*_{j,b}-x)^{q-1}(W^*_{k,a}-x)^{\ell-1}, \text{ for $\ell, q=1, \ldots, p+1$}.
\end{aligned}
\end{equation}
For the summand of the above double sum, we have, if $j\ne k$, the expectation of the summand given $\bX$ is 
\begin{align*}
&\ m_1(X_j)(X_j-x)^{q-1}K_{h}(X_j-x)m_1(X_k)(X_{k}-x)^{\ell-1}K_{h}(X_k-x)\\
=&\ h^{q+\ell-2}\mu_{q-1}\mu_{\ell-1}\tau^2_1(x)+h^{q+\ell-1}(\mu_{q-1}\mu_\ell+\mu_q\mu_{\ell-1})\tau_1(x)\tau'_1(x)+\\
&\ \frac{h^{q+\ell}}{2}\left[\mu_{q-1}\mu_{\ell+1}\tau_1(x)\tau_1^{(2)}(x)+\mu_{q+1}\mu_{\ell-1}\tau_1(x)\tau_1^{(2)}(x)+2 \mu_q\mu_\ell \{\tau_1'(x)\}^2\right]+\\
&\ O_p\left(h^{q+\ell-3}\right).
\end{align*}
If $j=k$, the expectation of the summand given $\bX$ is 
\begin{align}
&\ \IE\left\{\left.K_{h}(W^*_{j,b}-x)K_{h}(W^*_{j,a}-x)
(W^*_{j,b}-x)^{q-1}(W^*_{j,a}-x)^{\ell-1} \right \vert X_j\right\}\xi_1(X_j)\nonumber \\
= &\  \IE\left[ \IE\left\{\left.(W^*_{j,a}-x)^{\ell-1}K_{h}(W^*_{j,a}-x)\right\vert W_j, X_j\right\}\times \right. \nonumber \\
&\ \quad \left. \left.\IE\left\{\left.
(W^*_{j,b}-x)^{q-1}K_{h}(W^*_{j,b}-x) \right \vert W_j, X_j\right\} \right \vert X_j\right]\xi_1(X_j) \nonumber \\
= &\ \IE\left[\IE\left\{\left.(W^*_{j,a}-x)^{\ell-1}K_{h}(W^*_{j,a}-x)\right \vert W_j\right\}\times \right. \nonumber \\
&\ \quad \left. \left. \IE\left\{\left.
(W^*_{j,b}-x)^{q-1}K_{h}(W^*_{j,b}-x) \right \vert W_j\right\} \right\vert X_j\right]\xi_1(X_j).
\label{eq:varcross1}
\end{align}
To find the inner expectation in \eqref{eq:varcross1}, we consider the following Fourier transform, 
\begin{align}
 \phi_{(W^*-x)^{k}K_{h}(W^*-x)}(t) 
 & = \int (W^*-x)^{k}K_{h}(W^*-x) e^{itx}\,dx \nonumber \\
 & = \exp(itW^*)h^k \int u^k K(u) e^{-ithu}\, du \nonumber \\
 & = \exp(itW^*)h^k i^{-k} \left.\frac{\partial^k \phi_K(s)}{\partial s^k}
 \right\vert_{s=-ht} \nonumber \\
 & = \exp(itW^*)h^k i^{-k} \phi_K^{(k)}(-ht). \label{eq:charfun} 
\end{align}
Using the inverse Fourier transform, we elaborate the inner expectation in \eqref{eq:varcross1} as follows,
\begin{align*}
&\ \IE\left\{\left.(W^*_{j,a}-x)^{\ell-1}K_{h}(W^*_{j,a}-x) \right \vert W_j\right\}\\
= &\ \int (W_j+i\sigma_U z-x)^{\ell-1} K_{h}(W_j+i\sigma_U z-x)f_{Z}(z) \,dz \\
= &\ \int \frac{1}{2\pi}\int e^{-itx}\phi_{(W_j+i\sigma_U z-x)^{\ell-1}K_{h}(W_j+i \sigma_U z-x)}(t) \, dt \times \frac{1}{\sqrt{2\pi}}e^{-z^2/2}\, dz \\
= &\ \frac{1}{2\pi} h^{\ell-1} i^{-(\ell-1)}\int\exp\{it(W_j-x)\}\phi_{K}^{(\ell-1)}(-ht)\times \\
&\ \frac{1}{\sqrt{2\pi}}\int \exp(-t\sigma_U z-z^2/2)\, dzdt, \text{ by \eqref{eq:charfun},}\\
= &\ \frac{1}{2\pi }h^{\ell-1} i^{-(\ell-1)}\int \exp\{it(W_j-x)\}\phi_{K}^{(\ell-1)}(-ht) \exp(t^2\sigma_U^2/2) \, dt\\
= &\ h^{\ell-1} i^{-(\ell-1)}\frac{1}{2\pi }\int \exp\{it(W_j-x)\}\frac{\phi_{K}^{(\ell-1)}(-ht)}{\phi_U(t)}\, dt\\
= &\ h^{\ell-2} i^{-(\ell-1)}\frac{1}{2\pi} \int \exp\left\{-is \left(\frac{W_j-x}{h}\right) \right\} \frac{\phi_{K}^{(\ell-1)}(s)}{\phi_U(-s/h)}\, ds\\
= &\ h^{\ell-1}K_{U, \ell-1, h}(W_j-x), \text{ by \eqref{eq:KUlh}}.
\end{align*}
Similarly, $\IE\{(W^*_{j,b}-x)^{q-1}K_{h}(W^*_{j,b}-x)|W_j\}=h^{q-1}K_{U, q-1, h}(W_j-x)$. Hence, \eqref{eq:varcross1} reduces to, by \eqref{eq:EprodKs},
\begin{align*}
   &\ h^{\ell+q-2} \IE\left\{K_{U, \ell-1, h}(W_j-x)K_{U, q-1, h}(W_j-x)|X_j\right\}\xi_1(X_j) \\
   = &\ \xi_1(X_j) h^{\ell+q-2}\times  O_p\left(\frac{\exp(2h^{-\beta}/\gamma)}{h^{1-2\beta_2}}\right).
\end{align*}
Finally, we have the expectation of \eqref{eq:middleportion} given $\bX$ and assuming $\xi_1(X)=O_p(1)$, 
\begin{equation*}
\begin{aligned}
 &\ \IE\left\{\left.\left(n^{-2}\bG_a^\top \bOmega_a \bSigma \bOmega_b \bG_b\right)_{\ell,q}\right \vert \bX \right\}    \\
= &\ h^{q+\ell-2} \mu_{q-1}\mu_{\ell-1}\tau_1^2(x) +h^{q+\ell-1}(\mu_{q-1}\mu_\ell +\mu_q\mu_{\ell-1})\tau_1(x)\tau'_1(x)+\\
&\ \frac{h^{q+\ell}}{2}\left[\mu_{q-1}\mu_{\ell+1}\tau_1(x)\tau_1^{(2)}(x)+\mu_{q+1}\mu_{\ell-1}\tau_1(x)\tau_1^{(2)}(x)+2\mu_q\mu_\ell \{ \tau'_1(x)\}^2\right]+\\
&\ O_p\left(h^{q+\ell-3}\right)+O_p\left(\frac{\exp(2h^{-\beta}/\gamma)}{nh^{3-q-\ell-2\beta_2}}\right).
 \end{aligned}
 \end{equation*}
Putting these entries in the matrix, we have 
\begin{equation}
\label{eq:atmiddle}
\begin{aligned}
 &\ \IE\left(\left.n^{-2}\bG_a^\top \bOmega_a \bSigma \bOmega_b \bG_b\right \vert \bX\right)    \\
 = &\ \bH\bigg(\tau_1^2(x)  \bc_0\bc_0^\top +h \tau_1(x) \tau'_1(x) \left(\bc_0\bc_1^\top+\bc_1 \bc_0^\top\right)\\
     &\ +\frac{h^2}{2}  \left[ \tau_1(x)\tau_1^{(2)}(x)\left(\bc_0\bc_2^\top+\bc_2\bc_0^\top\right)+2\left\{\tau'_1(x)\right\}^2\bc_1\bc_1^\top\right] \\
     &\ \left.+O_p(1/h)+O_p\left(\frac{\exp(2h^{-\beta/\gamma})}{nh^{1-2\beta_2}}\right)\right)\bH,
\end{aligned}
\end{equation}
where $\bc_0=(\mu_0, \mu_1, \ldots, \mu_p)^\top$, $\bc_1=(\mu_1, \mu_2, \ldots, \mu_{p+1})^\top$,  $\bc_2=(\mu_2, \mu_3, \ldots, \mu_{p+1})^\top$.

Using \eqref{eq:left&right}, \eqref{eq:atmiddle}, and that fact that $\be_1^\top \bH^{-1}=\be_1^\top$, we have 
\begin{align*}
&\ \IE[\hat g^*_{1,a}(x)\hat g^*_{1,b}(x)|\bX] \\
= &\ m_1^2(x)+\be_1^\top \bS^{-1} \bigg[\frac{hm_1(x)}{f_X(x)}\left\{-m_1(x) f'_X(x)\left(\bc_0\bc_0^\top \bS^{-1}\bS_1+\bS_1\bS^{-1}\bc_0\bc_0^\top\right)+\right.\\
&\  \left. \tau'_1(x)(\bc_0\bc_1^\top+\bc_1\bc_0^\top)\right\}+\frac{h^2}{2f_X(x)}\bigg(\frac{1}{f_X(x)}\left[\tau_1(x)\tau_1^{(2)}(x)\left(\bc_0\bc_2^\top+\bc_2\bc_0^\top\right)+\right.\\
&\ \left.2\left\{\tau'_1(x)\right\}^2\bc_1\bc_1^\top\right]-\frac{2m_1(x)}{f_X(x)}f'_x(x)\tau'_1(x)(\bc_0\bc_1^\top+\bc_1\bc_0^\top)\left(\bS^{-1}\bS_1+\bS_1\bS^{-1}\right)+\\
&\ \frac{2m_1^2(x)}{f_X(x)}\left\{f'_X(x)\right\}^2\bS_1 \bS^{-1}\bc_0\bc_0^\top\bS^{-1}\bS_1 -m_1^2(x)f_X^{(2)}(x)\left(\bc_0\bc_0^\top\bS^{-1}\bS_2+\right.\\
&\ \left. \bS_2\bS^{-1}\bc_0\bc_0^\top\right)\bigg)+o(h^2)+O_p\left(\frac{\exp(2h^{-\beta}/\gamma)}{nh^{1-2\beta_2}}\right)\bigg]\bS^{-1}\be_1\\
\triangleq &\ m_1^2(x)+\be_1^\top \bS^{-1}\mathbb{C}(h, m_1, p)\bS^{-1}\be_1,
\end{align*}
where we use $\mathbb{C}(h, m_1, p)$ to denote the expression between $\be_1^\top\bS^{-1}$ and $\bS^{-1}\be_1$, with its dependence on $h$, $p$, and the function $m_1(\cdot)$ highlighted. 
Using this result in \eqref{eq:yet2find} yields, if $p$ is odd, 
\begin{equation}
\label{eq:varg1ce}
\begin{aligned}
  &\ \text{Var}\{\hat{g}_{1,\text{CE}}(x)|\mathbf{X}\} \\
= &\ \left(1-\frac{1}{B}\right)\be_1^\top \bS^{-1}\left\{\mathbb{C}(h, p, m_1)\bS^{-1}\be_1-2m_1(x)m_1^{(p+1)}(x)\frac{h^{p+1}}{(p+1)!}\bc_p\right\};
\end{aligned}  
\end{equation}
and, if $p$ is even, 
\begin{equation*}
\begin{aligned}
&\ \text{Var}\left\{\hat  g_{1, \text{CE}}(x)|\bX\right\} \\
= &\ \left(1-\frac{1}{B}\right)\be_1^\top\bS^{-1}\bigg[\mathbb{C}(h, p, m_1)\bS^{-1}\be_1-2m_1(x) \bigg\{ m_1^{(p+2)}(x)+\\
&\ \left.(p+2)m_1^{(p+1)}(x) \frac{f'_x(x)}{f_X(x)}\right\}\frac{h^{p+2}}{(p+2)!}\tilde \bc_p\bigg].
\end{aligned}
\end{equation*}
Results for $\text{Var}\{\hat g_{2,\text{CE}}(x)|\bX\}$ are given by the same expressions but with $m_2(x)$ and $m'_2(x)$ replacing $m_1(x)$ and $m'_1(x)$, respectively. 
Because $U$ is assumed to be normal when formulating $\hat g_{1,\text{CE}}(x)$ and $\hat g_{2,\text{CE}}(x)$, we have $\beta=2$, $\beta_2=0$ (because $\beta_0=0$), and $\gamma=2/\sigma_u^2$ in these asymptotic variance results.

In the special case considered in the main article with $p=1$, \eqref{eq:varg1ce} reduces to 
\begin{align*}
  &\ \text{Var}\{\hat{g}_{1,\text{CE}}(x)|\mathbf{X}\} \\
  = &\ \left(1-\frac{1}{B}\right)\frac{h^2m_1(x)}{f_X(x)}\left[(\mu_2+\mu_4/\mu_2) f^{(2)}_X(x) m_1(x) -2 f'_X(x) m'_1(x) \times \right. \\
  &\ \left.\left\{1+\mu_2\frac{f'_X(x)m_1(x)}{f_X(x) m'_1(x)}-\mu_2\right\} \right]+O_p(h^4)+O_p\bigg(\frac{\exp(2h^{-\beta}/\gamma)}{nh^{1-2\beta_{2}}}\bigg).  
\end{align*}
Similarly, 
\begin{equation*}
\begin{aligned}
 &\ \text{Var}\{\hat{g}_{2,\text{CE}}(x)|\mathbf{X}\} \\
 = &\ \left(1-\frac{1}{B}\right)\frac{h^2m_2(x)}{f_X(x)}\left[(\mu_2+\mu_4/\mu_2) f^{(2)}_X(x) m_2(x) -2 f'_X(x) m'_2(x) \times \right. \\
  &\ \left.\left\{1+\mu_2\frac{f'_X(x)m_2(x)}{f_X(x) m'_2(x)}-\mu_2\right\} \right]+O_p(h^4)+O_p\bigg(\frac{\exp(\sigma_u^2 h^{-2})}{nh}\bigg). 
\end{aligned}
\end{equation*}

\subsection{Covariance of $\hat{g}_{1,\text{CE}}(x)$ and $\hat{g}_{2,\text{CE}}(x)$}
\label{sec:CovgCE}
Using similar tactics adopted in the previous two subsections, one can derive expectations involved in the covariance of $\hat{g}_{1,\text{CE}}(x)$ and $\hat{g}_{2,\text{CE}}(x)$. In particular,
\begin{align}
&\ \text{Cov}\left\{\left.\hat{g}_{1,\text{CE}}(x), \, \hat{g}_{2,\text{CE}}(x) \right \vert \bX \right\} \nonumber \\
= &\ \IE\left\{\left. \hat g_{1,\text{CE}}(x)\hat g_{2,\text{CE}}(x)\right \vert \bX\right\}-\IE\left\{\left. \hat g_{1,\text{CE}}(x)\right \vert \bX\right\}\IE\left\{\left. \hat g_{2,\text{CE}}(x)\right \vert \bX\right\} \nonumber\\
= &\ \frac{1}{B}\IE\left\{\left. \hat g^*_{1,b}(x)\hat g^*_{2,b}(x)\right \vert \bX\right\}+\left(1-\frac{1}{B}\right)\IE\left\{\left. \hat g^*_{1,a}(x)\hat g^*_{2,b}(x)\right \vert \bX\right\} -\nonumber \\
&\ \IE\left\{\left. \tilde g_1(x)\right \vert \bX\right\}\IE\left\{\left. \tilde g_2(x)\right \vert \bX\right\}, \text{ where $a\ne b$,} \nonumber\\
= &\ \frac{1}{B}\IE\left\{\left. \tilde g_1(x)\tilde g_2(x)\right \vert \bX\right\}+\left(1-\frac{1}{B}\right)\IE\left\{\left. \hat g^*_{1,a}(x)\hat g^*_{2,b}(x)\right \vert \bX\right\} -\nonumber \\
&\ \IE\left\{\left. \tilde g_1(x)\right \vert \bX\right\}\IE\left\{\left. \tilde g_2(x)\right \vert \bX\right\}.\label{eq:cov3parts}
\end{align}
The term in \eqref{eq:cov3parts} as the products of two expectations can be elaborated using \eqref{eq:Egtilde1}. The first expectation in \eqref{eq:cov3parts} only concerns estimators in the absence of measurement error and can be derived using a similar strategy employed to derive \eqref{eq:condexp1} without involving error-contaminated covariates, which gives 
\begin{align*}
   &\ \IE\left\{\left. \tilde g_1(x)\tilde g_2(x)\right \vert \bX\right\} \\
= &\ m_1(x) m_2(x) +\frac{\be_1^\top \bS^{-1}}{f^2_{X}(x)}\bigg(h f_X(x)\left\{-m_1(x) m_2(x) f'_X(x)\left(\bc_0\bc_0^\top \bS^{-1}\bS_1+\right.\right.\\
&\ \left.\left.\bS_1\bS^{-1}\bc_0\bc_0^\top \right)+m_1(x)\tau'_2(x)\bc_1\bc_0^\top+\tau'_1(x)m_2(x) \bc_0\bc_1^\top \right\}+\frac{h^2}{2}\bigg[\tau_1(x)\tau^{(2)}_2(x)\times\\
&\ \bc_2\bc_0^\top+\tau_1^{(2)}(x)\tau_2(x) \bc_0\bc_2^\top +2 \tau'_1(x) \tau'_2(x)\bc_1\bc_1^\top -2f'_X(x)\left\{m_1(x)\tau'_2(x)\times \right. \\
&\ \left.\left( \bS_1\bS^{-1}\bc_1\bc_0^\top+\bc_1\bc_0^\top \bS^{-1}\bS_1\right) + \tau'_1(x)m_2(x) \left(\bS_1\bS^{-1}\bc_0\bc_1^\top+\bc_0\bc_1^\top \bS^{-1}\bS_1\right) \right\}+ \\
&\ 2 m_1(x)m_2(x) \left\{f'_X(x)\right\}^2 \bS_1\bS^{-1}\bc_0\bc_0^\top \bS^{-1}\bS_1-m_1(x)m_2(x) f_X(x) f_X^{(2)}(x)\times \\
&\ \left(\bc_0\bc_0^\top \bS^{-1}\bS_2+\bS_2\bS^{-1}\bc_0\bc_0^\top\right)\bigg]+o(h^2)+O_p\left(\frac{1}{h}\right)\bigg)\bS^{-1}\be_1\\
&\ \triangleq m_1(x) m_2(x) +\be_1^\top \bS^{-1}\left\{\tilde{\mathbb{C}}(h, p)+O_p\left(\frac{1}{h}\right)\right\}\bS^{-1}\be_1.
\end{align*}
The second expectation in \eqref{eq:cov3parts} is also similar to \eqref{eq:condexp1} and involves error-contaminated covariates in $\bW^*_a$ and $\bW^*_b$, and thus has the same dominating terms as those in $\IE\left\{\left. \tilde g_1(x)\tilde g_2(x)\right \vert \bX\right\}$ above but with the last term being $O_p(\exp(2h^{-\beta}/\gamma)/(nh^{1-2\beta_2}))$ in place of $O_p(1/h)$. Putting the four expectations in \eqref{eq:cov3parts} together reveals that, if $p$ is odd,  
\begin{equation}
\label{eq:covg12ce}
\begin{aligned}
&\ \text{Cov}\left\{\left.\hat{g}_{1,\text{CE}}(x), \, \hat{g}_{2,\text{CE}}(x) \right \vert \bX \right\}\\
= &\ \be_1^\top\bS^{-1}\bigg[ \left\{\tilde{\mathbb{C}}(h,p)+O_p\left(\frac{1}{Bh}\right)+O_p\left(\frac{\exp(2h^{-\beta}/\gamma)}{nh^{1-2\beta_2}}\right)\right\}\bS^{-1}\be_1-\\
&\ \left\{m_1^{(p+1)}(x) m_2(x) +m_1(x) m^{(p+1)}_2(x)\right\}\frac{h^{p+1}}{(p+1)!}\bc_p\bigg];
\end{aligned}
\end{equation}
and, if $p$ is even, 
\begin{equation}
\label{eq:covg12ce2}
\begin{aligned}
&\ \text{Cov}\left\{\left.\hat{g}_{1,\text{CE}}(x), \, \hat{g}_{2,\text{CE}}(x) \right \vert \bX \right\}\\
= &\ \be_1^\top\bS^{-1}\bigg( \left\{\tilde{\mathbb{C}}(h,p)+O_p\left(\frac{1}{Bh}\right)+O_p\left(\frac{\exp(2h^{-\beta}/\gamma)}{nh^{1-2\beta_2}}\right)\right\}\bS^{-1}\be_1-\\
&\ \bigg[ m_1(x)\left\{m_2^{(p+2)}(x)+(p+2)m_2^{(p+1)}(x) \frac{f_X'(x)}{f_X(x)} \right\}+\\
&\ m_2(x)\left\{m_1^{(p+2)}(x)+(p+2)m_1^{(p+1)}(x) \frac{f_X'(x)}{f_X(x)} \right\}\bigg]\frac{h^{p+2}}{(p+2)!}\tilde \bc_p\bigg).
\end{aligned}
\end{equation}
Let $B/n$ tend to a positive constant as $n\to  \infty$ so that the term $O_p(1/(Bh))$ in \eqref{eq:covg12ce} and \eqref{eq:covg12ce2} is included in $O_p(\exp(2h^{-\beta}/\gamma)/(nh^{1-2\beta_2}))$. This also allows dropping the factor $(1-1/B)$ in the dominating terms in \eqref{eq:varg1ce}. 
    
In the special case with $p=1$, following algebraic simplifications, we have 
\begin{equation*}
\begin{aligned}
&\ \text{Cov}\left\{\left.\hat{g}_{1,\text{CE}}(x), \, \hat{g}_{2,\text{CE}}(x) \right \vert \bX \right\}\\
= &\ \frac{h^2}{f^2_X(x)}\left(m_1(x)m_2(x) \left[(\mu_2+\mu_4/\mu_2) f_X(x) f_X^{(2)}(x) -2\mu_2 \left\{ f'_X(x)\right\}^2\right] \right. \\
&\ +\mu_2 f_X(x) \left\{f_X(x)-f'_X(x)\right\}\left\{m_1(x) m'_2(x)+m'_1(x)m_2(x) \right\} \Big)\\&+O(h^4)+O_p\left(\frac{1}{Bh}\right)+O_p\left(\frac{\exp(\sigma_u^2 h^{-2})}{nh}\right).
\end{aligned}
\end{equation*}

\subsection{Asymptotic bias and variance of $\hat{m}_{\text{CE}}(x)$}
Using the biases of $\hat g_{1, \text{CE}}(x)$ and $\hat g_{2, \text{CE}}(x)$ in Appendix~\ref{sec:EgCE}, their variances in Appendix~\ref{sec:VargCE}, and the covariance of them in Appendix~\ref{sec:CovgCE} in \eqref{eq:mhatbias}, we have, if $p$ is odd, 
\begin{align*}
    &\ \text{Bias}\left\{\hat m_{\text{CE}}(x)|\bX\right\} \\
    = &\ \be_1^\top \bS^{-1}\left(\frac{h^{p+1}}{(p+1)!\left\{m_1^2(x)+m_2^2(x)\right\}}\left(m_2(x)m_1^{(p+1)}(x)-m_1(x) m_2^{(p+1)}(x)+ \right.\right.\\
    &\ \frac{1}{m_1^2(x)+m_2^2(x)}\left[2m_1(x)m_2(x) \left\{m_1(x)m_1^{(p+1)}(x)-m_2(x)m_2^{(p+1)}(x)\right\} +\right.\\
    &\ \left.\left.\left\{m_2^2(x)-m_1^2(x)\right\}\left\{m_1^{(p+1)}(x)m_2(x)+m_1(x)m_2^{(p+1)}(x)\right\}\right]\right)\bc_p+o_p\left(h^{p+1}\right)-\\
    &\ \frac{1}{\left\{m_1^2(x)+m_2^2(x)\right\}^2}\left[m_1(x)m_2(x)\left\{\mathbb{C}(h, p, m_1)-\mathbb{C}(h, p, m_2)\right\}+\right.\\
    &\ \left.\left.\left\{m_2^2(x)-m_1^2(x)\right\}\left\{\tilde{\mathbb{C}}(h, p)+O\left(\frac{\exp(2h^{-\beta}/\gamma)}{nh^{1-2\beta_2}}\right)\right\}\right]\bS^{-1}\be_1\right);
\end{align*}
and, if $p$ is even, 
\begin{align*}
    &\ \text{Bias}\left\{\hat m_{\text{CE}}(x)|\bX\right\} \\
    = &\ \be_1^\top \bS^{-1}\left(\frac{h^{p+2}}{(p+2)!\left\{m_1^2(x)+m_2^2(x)\right\} }\bigg( m_2(x)\bigg\{m_1^{(p+2)}(x)+(p+2)m_1^{(p+1)}(x)\times\right.\\
    &\ \left.\frac{f_X'(x)}{f_X(x)}\right\}-m_1(x)\left\{m_2^{(p+2)}(x)+(p+2)m_2^{(p+1)}(x)\frac{f_X'(x)}{f_X(x)}\right\}+\\
    &\ \frac{2+m_1^2(x)-m_2^2(x)}{m_1^2(x)+m_2^2(x)}\left[m_1(x)\left\{m_1^{(p+2)}(x)+(p+2)m_1^{(p+1)}(x)\frac{f_X'(x)}{f_X(x)}\right\}\right.+\\
    &\ \left. m_2(x)\left\{m_2^{(p+2)}(x)+(p+2)m_2^{(p+1)}(x)\frac{f_X'(x)}{f_X(x)}\right\}\right]\bigg)\tilde \bc_p+o_p\left(h^{p+2}\right)-\\
    &\ \frac{1}{\left\{m_1^2(x)+m_2^2(x)\right\}^2}\bigg[m_1(x)m_2(x) \left\{\mathbb{C}(h, p, m_1)-\mathbb{C}(h, p, m_2)\right\}+\\
    &\ \left\{m_2^2(x)-m_1^2(x)\right\}\left\{\tilde {\mathbb{C}}(h, p)+O_p\left(\frac{\exp(2h^{-\beta}/\gamma)}{nh^{1-2\beta_2}}\right)\right\}\bigg]\bS^{-1}\be_1\bigg).
\end{align*}

In the special case with $p=1$
, \begin{align}\label{CEBias}
&\ \text{Bias}\left\{\hat{m}_{\text{CE}}(x)|\bX\right\} \nonumber\\
=&\ \frac{h^2 \mu_2}{2}\left\{m^{(2)}(x)+2m'(x) \frac{m_1(x)m'_1(x)+m_2(x)m_2'(x)}{m_1^2(x)+m_2^2(x)}\right\}+\nonumber\\
&\ \frac{h^2}{f_X(x)\left\{m_1^2(x)+m_2^2(x)\right\}^2}\Big[ 2 (1-\mu_2) f_X'(x) m_1(x) m_2(x) \{m_1(x) m'_1(x)\nonumber\\& -m_2(x)m'_2(x)\}+
 \mu_2 \{f_X(x) -f'_X(x)\}\{m_1^2(x)-m_2^2(x) \}\{m_1(x)m'_2(x)\nonumber\\&+m'_1(x)m_2(x)\}\Big]+O_p(h^4)+O_p(\frac{\exp(\sigma_u^2 h^{-2})}{nh}).
\end{align}
    
Using these bias results, along with the biases of $\hat g_{1, \text{CE}}(x)$ and $\hat g_{2, \text{CE}}(x)$ in Appendix~\ref{sec:EgCE}, their variances in Appendix~\ref{sec:VargCE}, and the covariance of them in Appendix~\ref{sec:CovgCE}, in \eqref{eq:Emhatsq} reveals 
$\text{Var}\{\hat m_{\text{CE}}(x)|\bX\}$. In the special case with $p=1$,
\begin{align}\label{CEVar}
    &\ \text{Var}\{\hat m_{\text{CE}}(x)|\bX\} \nonumber\\
    = &\ 2h^2\left\{(2\mu_2-1)f_X'(x)/f_X(x)-\mu_2\right\}\frac{m_1(x)m_2(x)\left\{m_1(x)m_2'(x)+m'_1(x)m_2(x)\right\}}{\left\{ m_1^2(x)+m^2_2(x)\right\}^2}\nonumber\\
    &\ +O_p(h^4)+O_p\left(\frac{\exp\left(\sigma_u^2h^{-2}\right)}{nh}\right).
\end{align}
       
\subsection{Asymptotic normality of $\hat g_{1, \text{CE}}$ and $\hat g_{2, \text{CE}}$} \label{sec:normgCE}

Our complex-error corrected estimator falls under a larger umbrella of Monte-Carlo corrected score estimators. To this end,  \cite{novickthesis} proved than these Monte-Carlo corrected score estimators are asymptotically identical to SIMEX estimators under the same model. Further, \cite{song} showed that the SIMEX estimator applied to the nonparametric regression model with normal errors, and utilizing a local linear kernel, is asymptotically normally distributed. With this in mind, it can be shown that our method is asymptotically equivalent to a Monte-Carlo corrected score, specifically one that arises from the objective function
$$\mathcal{F}=\sum_{i=1}^{n}\bigg[Y_{i}-\alpha-\beta(X_{i}-x)\bigg]^{2}K_{h}(X_{i}-x).$$
Thus the complex-error estimators, $\hat g_{1, \text{CE}}$ and $\hat g_{2, \text{CE}}$, are asymptotically equivalent to the SIMEX estimator, which is asymptotically normal.
\renewcommand*{\theequation}{F.\arabic{equation}}
\setcounter{equation}{0}
\renewcommand*{\thesubsection}{F.\arabic{subsection}}
\renewcommand*{\thelemma}{F.\arabic{lemma}}
\setcounter{subsection}{0}
\section*{{\revise Appendix F: Asymptotic properties of the one-step correction estimator $\hat m_{\text{OS}}(x)$}}

We consider the $p$-th order local polynomial weight used in $\hat m_{\text{OS}}(x)$ in this appendix, for $p\ge 0$. For this generalization, one needs to change ``$\ell=0, 1, 2, 3$'' in {\bf Condition O} stated in Section~\ref{sec:asymp} to ``$\ell=0, 1, \ldots, 2p+1$''; similarly, one needs to change ``$\ell=0,1, 2$" in {\bf Condition S} to ``$\ell=0, 1, \ldots, 2p$''. 

\subsection{The first two moments of $\hat g_{1, \text{OS}}(x)$ and $\hat g_{2, \text{OS}}(x)$}
Because the one-step correction estimator $\hat g_{\ell, \text{OS}}(x)$ for $g_\ell(x)=m_\ell(x)f_X(x)$ is precisely the estimator for, in notations used in \citet{huangzhou}, $B(x)=\IE(Y|X=x)f_X(x)$ with a linear response $Y$, the bias and variance of $\hat g_{\ell, \text{OS}}(x)$ are readily available in \citet{huangzhou}. In particular, recall that $\bS=(\mu_{r_1+r_2})_{0\le r_1, r_2\le p}$, $\bc_p=(\mu_{p+1}, \ldots, \mu_{2p+1})^\top$, and $\tilde\bc_p=(\mu_{p+2}, \ldots, \mu_{2p+2})^\top$, by Equation (A.5) in \cite{huangzhou}, we have for $\ell=1, 2$,
\begin{equation}
\label{eq:EgOS}
\begin{aligned}
&\ \IE\left\{\left.\hat{g}_{\ell, \text{OS}}(x)\right \vert \bW \right\}\\
= & 
\begin{cases}
    g_\ell(x) +(N_{\ell,p}*D)(x)+(M_{\ell, p}*D)(x)\displaystyle{\frac{h^{p+1}}{(p+1)!}}+o_p(h^{p+1}), & \text{if $p$ is odd,} \\
    g_\ell(x) +(N_{\ell,p}*D)(x)+(M_{\ell, p}*D)(x)\displaystyle{\frac{h^{p+2}}{(p+2)!}}+o_p(h^{p+2}), & \text{if $p$ is even,} 
\end{cases}
\end{aligned}
\end{equation}
where $N_{\ell,p}(w) = m^*_\ell(w)\sum_{r=1}^p f^{(r)}_W(w)\mu_r h^r/r!$, and, if $p$ is odd,
\begin{equation}
M_{\ell, p}(w)=m_\ell^*(w)f^{(p+1)}_W(w)\mu_{p+1}+m_\ell^{*(p+1)}(w) f_W(w) \be_1^\top \bS^{-1}\bc_p, \label{eq:Mlpodd}
\end{equation}
and, if $p$ is even, 
\begin{equation}
\label{eq:Mlpeven}
\begin{aligned}
    &\ M_{\ell, p}(w)\\
    = &\ m_\ell^*(w)f^{(p+2)}_W(w)\mu_{p+2}+\left\{m_\ell^{*(p+2)}(w) f_W(w)+(p+2)m_\ell^{*(p+1)}f'_W(w)\right\} \be_1^\top \bS^{-1}\tilde\bc_p,
\end{aligned} 
\end{equation}
with the Fourier transform of $N_{\ell,p}(w)$ and $M_{\ell,p}(w)$ compactly supported on $I_t$, and $D(x) = (2\pi)^{-1}\int_{I_t} e^{-itx}/\phi_U(t)dt$. Because the Fourier transform of $M_{\ell,p}(w)$, $\phi_{M_{\ell,p}}(t)$, is compactly supported on $I_t$,  the convolution of $M_{\ell,p}(\cdot)$ and $D(\cdot)$ is
\begin{align*}
    (M_{\ell,p}* D)(x) & = \int M_{\ell,p}(w)D(x-w)dw \\
    & = \frac{1}{2\pi} \int M_{\ell,p}(w) \int_{I_t} e^{-it(x-w)}/\phi_U(t)\, dt dw \\
    &= \frac{1}{2\pi} \int_{I_t} e^{-itx}\frac{1}{\phi_U(t)} \int  e^{itw} M_{\ell,p}(w)\, dw dt  \\
    &= \frac{1}{2\pi} \int_{I_t} e^{-itx}\frac{\phi_{M_{\ell,p}}(t)}{\phi_U(t)} \, dt.
\end{align*} 
In other words, $(M_{\ell,p} *D)(x)$ is the same integral transform of $M_{\ell,p}(\cdot)$ appearing in Lemma 3 in the main article but expressed in the form of convolution, that is, $(M_{\ell,p} *D)(x)=\mathcal{T}_U(M_{\ell,p})(x)$ by adopting the integral transform $\mathcal{T}_U$ defined in Section 3.3 in the main article. Similarly, $(N_{\ell,p} *D)(x)=\mathcal{T}_U(N_{\ell,p})(x)$. 

Define $\bS^*=(\eta(r_1, r_2))_{0\le r_1, r_2\le p}$. Then, according to Appendix B in \cite{huangzhou}, we have, for $\ell=1, 2$, 
\begin{equation}
\text{Var}\left\{ \left. \hat g_{\ell,\text{OS}}(x)\right \vert \bW \right\} = 
\be_1^\top \bS^{-1}\bS^*\bS^{-1}\be_1 \frac{\sigma_\ell^{*2}(x)f_W(x)}{nh^{1+2\beta}}+o_p\left(\frac{1}{nh^{1+2\beta}}\right),
\label{eq:VargOSord}
\end{equation}
if $U$ is ordinary smooth, 
and $\text{Var}\left\{ \left. \hat g_{\ell,\text{OS}}(x)\right \vert \bW \right\}$ is bounded from above by 
\begin{equation}
\be_1^\top \bS^{-1}\bS^{-1}\be_1\frac{C\sigma_\ell^{*2}(x) f_W(x)\exp(2h^{-\beta}/\gamma)}{f_X^2(x) nh^{1-2\beta_2}}+o_p\left(\frac{\exp(2h^{-\beta}/\gamma)}{nh^{1-2\beta_2}}\right),
\label{eq:VargOSsup}
\end{equation}
if $U$ is super smooth,
where $\beta_2=\beta_0I(\beta_0<0.5)$, $\sigma^{*2}_1(x)=\text{Var}(\sin \Theta|W=x)=\xi^*_1(x)-m^{*2}_1(x)$ and $\sigma^{*2}_2(x)=\text{Var}(\cos \Theta|W=x)=\xi^*_2(x)-m^{*2}_2(x)$. 

Lastly, adopting the same strategy elaborated in Appendix B in \cite{huangzhou}, one can show that 
\begin{equation}
\text{Cov}\left\{ \left. \hat g_{1,\text{OS}}(x), \hat g_{2,\text{OS}}(x)\right \vert \bW \right\}=\be_1^\top \bS^{-1}\bS^*\bS^{-1}\be_1 \frac{\phi^*(x)f_W(x)}{nh^{1+2\beta}}+o_p\left(\frac{1}{nh^{1+2\beta}}\right),
\label{eq:CovgOSord}
\end{equation}
if $U$ is ordinary smooth, and $\text{Cov}\left\{ \left. \hat g_{1,\text{OS}}(x), \hat g_{2,\text{OS}}(x)\right \vert \bW \right\}$ is bounded from above by 
\begin{equation}
\be_1^\top \bS^{-1}\bS^{-1}\be_1\frac{C\phi^*(x) f_W(x)\exp(2h^{-\beta}/\gamma)}{f_X^2(x) nh^{1-2\beta_2}}+o_p\left(\frac{\exp(2h^{-\beta}/\gamma)}{nh^{1-2\beta_2}}\right),
\label{eq:CovgOSsup}
\end{equation}
if $U$ is super smooth,
where $\phi^*(x)=\text{Cov}(\sin \Theta, \cos \Theta|W=x)=\psi^*(x)-m_1^*(x)m_2^*(x)$.

\subsection{Asymptotic bias and variance of \texorpdfstring{$\hat{m}_{\text{OS}}(x)$}{Lg}}
\label{sec:biasvarOS}
By \eqref{eq:mhatbias}, with $g_\ell(x)=m_\ell(x) f_X(x)$, for $\ell=1, 2$,
\begin{align*}
&\ \text{Bias}\left\{ \left. \hat m_{\text{OS}}(x)\right \vert \bW\right \} \\
= &\ \frac{1}{g^2_1(x)+g^2_2(x)}\left[g_2(x) \text{Bias}\left\{\left. \hat g_{1,\text{OS}}(x)\right \vert \bW\right\}-g_1(x) \text{Bias}\left\{\left. \hat g_{2,\text{OS}}(x)\right \vert \bW\right\} \right]-\\
&\  \frac{1}{\left\{g^2_1(x)+g^2_2(x)\right\}^2}\left( g_1(x)g_2(x) \left[ \text{Var}\left\{ \left. \hat g_{1,\text{OS}}(x) \right\vert \bW\right\} -\text{Var}\left\{ \left. \hat g_{2,\text{OS}}(x) \right\vert \bW\right\}\right] +\right. \\
&\ \left.\left\{ g^2_2(x)-g^2_1(x)\right\} \text{Cov}\left\{ \left. \hat g_{1,\text{OS}}(x), \hat g_{2,\text{OS}}(x) \right\vert \bW\right\}\right)+\text{lower-order terms,}
\end{align*}
{\reviseagain where the lower-order terms refer to the non-dominating terms of $\IE\{\hat g_{\ell,\text{OS}}(x)-g_\ell(x)|\bW\}$ indicated in \eqref{eq:EgOS}.}
By \eqref{eq:EgOS}, \eqref{eq:VargOSord}, and \eqref{eq:CovgOSord}, if $U$ is ordinary smooth,
\begin{align*}
    &\ \text{Bias}\left\{\hat m_{\text{OS}}(x)|\bW\right\}\\
    = &\ \frac{1}{f_X(x)\left\{m_1^2(x)+m_2^2(x)\right\}}
    \bigg[m_2(x) \mathcal{T}_U(N_{1,p})(x)-m_1(x) \mathcal{T}_U(N_{2,p})(x)\\
    &\ \left.+\left\{m_2(x) \mathcal{T}_U(M_{1,p})(x)-m_1(x) \mathcal{T}_U(M_{2,p})(x)\right\}
    \frac{h^{p+1+I(\text{$p$ is even})}}{(p+1+I(\text{$p$ is even}))!}\right]\\
    &\ +o_p\left(h^{p+1+I(\text{$p$ is even})}\right)+\frac{\be_1^\top \bS^{-1}\bS^*\bS^{-1}\be_1 f_W(x)}{nh^{1+2\beta}f^2_X(x)\left\{m_1^2(x)+m_2^2(x)\right\}^2}\times \\
    &\ \left[m_1(x)m_2(x) \left\{\sigma^{*2}_1(x)-\sigma_2^{*2}(x)\right\}+\phi^*(x)\left\{m_2^2(x)-m_1^2(x)\right\}\right]+o_p\left(\frac{1}{nh^{1+2\beta}}\right);
\end{align*}
and, if $U$ is super smooth, using \eqref{eq:EgOS}, \eqref{eq:VargOSsup}, and \eqref{eq:CovgOSsup}, we have 
\begin{align*}
    &\ \text{Bias}\left\{\hat m_{\text{OS}}(x)|\bW\right\}\\
    = &\ \frac{1}{f_X(x)\left\{m_1^2(x)+m_2^2(x)\right\}}
    \bigg[m_2(x) \mathcal{T}_U(N_{1,p})(x)-m_1(x) \mathcal{T}_U(N_{2,p})(x)\\
    &\ \left.+\left\{m_2(x) \mathcal{T}_U(M_{1,p})(x)-m_1(x) \mathcal{T}_U(M_{2,p})(x)\right\}
    \frac{h^{p+1+I(\text{$p$ is even})}}{(p+1+I(\text{$p$ is even}))!}\right]\\
    &\ +o_p\left(h^{p+1+I(\text{$p$ is even})}\right)+O_p\left(\frac{\exp(2h^{-\beta}/\gamma)}{nh^{1-2\beta_2}}\right).
\end{align*}
In the special case we focus on in our study with a local linear weight, i.e., with $p=1$, because $N_{\ell, 1}(w)=m_\ell^*(w)f'_W(w)\mu_1h=0$, for $\ell=1, 2$, we have, if $U$ is ordinary smooth, 
\begin{align*}
&\ \text{Bias}\left\{ \left. \hat m_{\text{OS}}(x)\right \vert \bW\right \} \\
=&\  \frac{h^2 \mu_2}{2f_X(x) \left\{m^2_1(x)+m^2_2(x) \right\}}\left\{m_2(x)\mathcal{T}_U(M_{1,1})(x)-m_1(x) \mathcal{T}_U(M_{2,1})(x)\right\} +o_p\left(h^2\right) \\
&\ + \frac{f_W(x)\eta(0,0)\left[m_1(x)m_2(x) \left\{ \sigma^{*2}_1(x)-\sigma^{*2}_2(x)\right\} 
+\phi^*(x)\left\{m^2_2(x)-m^2_1(x)\right\}\right]}{nh^{1+2\beta}f_X^2(x) \left\{m_1^2(x)+m^2_2(x)\right\}^2} \\ &+ o_p\left(\frac{1}{nh^{1+2\beta}} \right);
\end{align*}
and, if $U$ is super smooth, 
\begin{align}\label{OSBias}
\ \text{Bias}\left\{ \left. \hat m_{\text{OS}}(x)\right \vert \bW\right \} 
=&\  \frac{h^2 \mu_2}{2f_X(x) \left\{m^2_1(x)+m^2_2(x) \right\}}\{m_2(x)\mathcal{T}_U(M_{1,1})(x)\nonumber\\&-m_1(x) \mathcal{T}_U(M_{2,p})(x)\} +o_p\left(h^2\right)+O_p\left( \frac{\exp(2h^{-\beta}/\gamma)}{nh^{1-2\beta_2}}\right).
\end{align}
Since the dominating terms in these asymptotic bias do not depend on $\bW$, they can be interpreted as the unconditional asymptotic dominating bias. 

Using \eqref{eq:EgOS}, \eqref{eq:VargOSord}, and \eqref{eq:CovgOSord} in \eqref{eq:Emhatsq} gives that, for ordinary smooth $U$, 
\begin{align*}
     &\ \IE\left\{\left. 
     \hat m^2_{\text{OS}}(x)\right \vert \bW \right\}\\
     = &\ m^2(x)+\frac{2m(x)}{f_X(x)\left\{m_1^2(x)+m_2^2(x)\right\}}\bigg[m_2(x) \mathcal{T}_U(N_{1,p})(x)-m_1(x) \mathcal{T}_U(N_{2,p})(x)\\
    &\ \left.+\left\{m_2(x) \mathcal{T}_U(M_{1,p})(x)-m_1(x) \mathcal{T}_U(M_{2,p})(x)\right\}
    \frac{h^{p+1+I(\text{$p$ is even})}}{(p+1+I(\text{$p$ is even}))!}\right]\\
    &\ +o_p\left(h^{p+1+I(\text{$p$ is even})}\right)+\frac{2\be_1^\top \bS^{-1}\bS^*\bS^{-1}\be_1 f_W(x)}{nh^{1+2\beta}f^2_X(x)\left\{m_1^2(x)+m_2^2(x)\right\}^2}\times \\
    &\ \left( m(x) m_1(x)m_2(x) \left\{\sigma_2^{*2}(x)-\sigma_1^{2*}(x) \right\}+\frac{1}{2}\left\{m_1^2(x)\sigma_2^{*2}(x)+m_2^2(x) \sigma_1^{*2}(x)\right\} \right. \\
    &\ +\phi^*(x) \left[m(x) \left\{m_1^2(x)-m_2^2(x)\right\}-m_1(x)m_2(x)\right]\bigg)+o_p\left(\frac{1}{nh^{1+2\beta}}\right);
\end{align*}
and, for super smooth $U$, 
\begin{align*}
     &\ \IE\left\{\left. 
     \hat m^2_{\text{OS}}(x)\right \vert \bW \right\}\\
     = &\ m^2(x)+\frac{2m(x)}{f_X(x)\left\{m_1^2(x)+m_2^2(x)\right\}}\bigg[m_2(x) \mathcal{T}_U(N_{1,p})(x)-m_1(x) \mathcal{T}_U(N_{2,p})(x)\\
    &\ \left.+\left\{m_2(x) \mathcal{T}_U(M_{1,p})(x)-m_1(x) \mathcal{T}_U(M_{2,p})(x)\right\}
    \frac{h^{p+1+I(\text{$p$ is even})}}{(p+1+I(\text{$p$ is even}))!}\right]\\
    &\ +o_p\left(h^{p+1+I(\text{$p$ is even})}\right)+O_p\left(\frac{\exp(2h^{-\beta}/\gamma)}{nh^{1-2\beta_2}}\right).
\end{align*}
It follows that, for ordinary smooth $U$, 
\begin{align*}
    &\ \text{Var}\left\{\hat m_{\text{OS}}(x)|\bW\right\} \\
 = &\ \frac{\be_1^\top \bS^{-1}\bS^*\bS^{-1}\be_1 f_W(x)}{nh^{1+2\beta}f_X^2(x)\left\{m_1^2(x)+m_2^2(x)\right\}^2}\bigg(4m(x)m_1(x)m_2(x)\left\{\sigma^{*2}_2(x)-\sigma^{*2}_1(x)\right\} \\
 &\ +2\phi^*(x)\left[2m(x) \left\{m_1^2(x)-m_2^2(x)\right\}-m_1(x)m_2(x)\right]+m_1^2(x)\sigma^{*2}_2(x)\\
 &\ +m_2^2(x) \sigma^{*2}_1(x)-\frac{2}{f_X(x)\left\{m_1^2(x)+m_2^2(x)\right\}}\bigg[m_1(x)m_2(x) \left\{\sigma_1^{*2}(x)-\sigma_2^{*2}(x)\right\}\\
 &\ +\phi^*(x)\left\{m_2^2(x)-m_1^2(x)\right\}\bigg]\left\{m_2(x) \mathcal{T}_U(N_{1,p})(x)-m_1(x) \mathcal{T}_U(N_{2,p})(x)\right\}\bigg)\\
 &\ +o_p\left(h^{p+1+I(\text{$p$ is even)}}\right)+o_p\left(\frac{1}{nh^{1+2\beta}}\right),
\end{align*}
which reduces to, when $p=1$,
\begin{equation}
\label{OSVar}
\begin{aligned}
  &\   \text{Var}\left\{\left. \hat m_{\text{OS}}(x)\right \vert \bW\right\}\\ 
= &\ \frac{f_W(x) \eta(0,0)}{nh^{1+2\beta} f^2_X(x) \{m^2_1(x)+m^2_2(x)\}^2}\bigg( m^2_1(x) \xi^*_2(x)+m^2_2(x) \xi^*_1(x)\\
&\ 2m_1(x)m_2(x) \psi^*(x)  -\left\{m_1(x)m^*_2(x)-m^*_1(x) m_2(x) \right\}^2 \\
&\ +4m(x)\bigg[m_1(x)m_2(x) \left\{\sigma^{*2}_2(x)-\sigma^{*2}_1(x)\right\} +\phi^*(x) \{m^2_1(x)\\
&\ -m^2_2(x)\}\bigg]\bigg) +o_p\left( \frac{1}{nh^{1+2\beta}}\right).
\end{aligned}
\end{equation}
For super smooth $U$, 
\begin{align*}
    &\ \text{Var}\left\{\hat m_{\text{OS}}(x)|\bW\right\}\\
 = &\ O_p\left(\frac{\exp(2h^{-\beta}/\gamma}{nh^{1-2\beta_2}}\right)-\frac{m_2(x)\mathcal{T}_U(N_{1,p})(x)-m_1(x)\mathcal{T}(N_{2,p})(x)}{f^2_X(x)
 \left\{m_1^2(x)+m_2^2(x)\right\}^2}\times \\
 &\ \bigg[m_2(x)\mathcal{T}_U(N_{1,p})(x)-m_1(x)\mathcal{T}(N_{2,p})(x)+2\{m_2(x)\mathcal{T}_U(M_{1,p})(x)\\
 &\ -m_1(x)\mathcal{T}(M_{2,p})(x)\}\frac{h^{p+1+I(\text{$p$ is even})}}{(p+1+I(\text{$p$ is even}))!}\bigg]+o_p\left(h^{p+1+I(\text{$p$ is even})}\right),
\end{align*}
which reduces to, when $p=1$,
$$\text{Var}\left\{\left. \hat m_{\text{OS}}(x)\right \vert \bW\right\} =O_p\left(\frac{\exp(2h^{-\beta}/\gamma)}{nh^{1-2\beta_2}}\right).$$

\renewcommand*{\theequation}{G.\arabic{equation}}
\setcounter{equation}{0}
\renewcommand*{\thesubsection}{G.\arabic{subsection}}
\renewcommand*{\thelemma}{G.\arabic{lemma}}
\setcounter{subsection}{0}
\section*{Appendix G: Asymptotic normality of $\hat m_{\cdot}(x)$}
Let ``$\stackrel{\mathscr{L}}{\longrightarrow}$'' denote ``converges in law to,'' i.e., ``converges in distribution to,'' and let ``$\stackrel{\mathscr{P}}{\longrightarrow}$'' denote ``converges in probability to.'' A lemma is established next in preparation for proving the asymptotic normality of our proposed estimators for $m(x)$.
\begin{lemma}\label{lem:asynorm}
Consider three sequences of random variables indexed by $n$, $A_n$, $B_n$, and $C_n=\text{atan2}[A_n, B_n]$. If $d_n (A_n-\mu_A, B_n-\mu_B)^\top\stackrel{\mathscr{L}}{\longrightarrow}N(\bzero_2, \bSigma)$ as $n\to \infty$, where $\mu^2_A+\mu^2_B\ne 0$, $\bzero_2$ is a $2\times 1$ vector of zeros, $\bSigma$ is a $2\times 2$ variance-covariance matrix, and $d_n$ is a positive non-random sequence that diverges to $\infty$ as $n\to \infty$, then $d_n(C_n-\text{atan2}[\mu_A, \mu_B]) \stackrel{\mathscr{L}}{\longrightarrow} N(0, (\mu_A^2\bSigma[2, 2]+\mu_B^2\bSigma[1,1]-2 \mu_A\mu_B \bSigma[1,2])/(\mu_A^2+\mu_B^2)^2)$ as $n\rightarrow \infty$.
\end{lemma}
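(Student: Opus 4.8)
The plan is to read Lemma~\ref{lem:asynorm} as an instance of the multivariate delta method for the map $g(a,b)=\text{atan2}[a,b]$, which is continuously differentiable on $\mathbb{R}^2$ minus the ray along which it jumps, with $\partial g/\partial a = b/(a^2+b^2)$ and $\partial g/\partial b = -a/(a^2+b^2)$; hence $\nabla g(\mu_A,\mu_B) = (\mu_B,-\mu_A)^\top/(\mu_A^2+\mu_B^2)$, which is well defined exactly because $\mu_A^2+\mu_B^2\ne 0$. If the delta method applies, it yields $d_n(C_n-\text{atan2}[\mu_A,\mu_B])\stackrel{\mathscr{L}}{\longrightarrow} N(0,\nabla g(\mu_A,\mu_B)^\top\bSigma\,\nabla g(\mu_A,\mu_B))$, and a one-line expansion of the quadratic form gives $\nabla g(\mu_A,\mu_B)^\top\bSigma\,\nabla g(\mu_A,\mu_B) = (\mu_A^2\bSigma[2,2]+\mu_B^2\bSigma[1,1]-2\mu_A\mu_B\bSigma[1,2])/(\mu_A^2+\mu_B^2)^2$, which is the asserted variance.

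To make the delta method rigorous I would first dispense with the only genuine difficulty, namely that $\text{atan2}$ is discontinuous along the ray where the represented angle equals $\pi$, so $g$ need not be differentiable at $(\mu_A,\mu_B)$ if that point lies on this ray. The remedy is a rotation: choose a fixed planar rotation $R_{\vartheta}$ carrying $(\mu_A,\mu_B)^\top$ off the cut, and use that for $(a,b)$ in a small enough neighborhood of $(\mu_A,\mu_B)$ not meeting the cut one has $\text{atan2}[R_{\vartheta}(a,b)^\top] = \text{atan2}[a,b]+\vartheta$ for the appropriate $2\pi$-representative. Since $d_n(A_n-\mu_A,B_n-\mu_B)^\top$ converges in law it is tight, so $(A_n,B_n)^\top$ lies in any such neighborhood with probability tending to one, and because $R_{\vartheta}$ is linear, $d_n(R_{\vartheta}(A_n,B_n)^\top-R_{\vartheta}(\mu_A,\mu_B)^\top)\stackrel{\mathscr{L}}{\longrightarrow}N(\bzero_2,R_{\vartheta}\bSigma R_{\vartheta}^\top)$. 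Proving the statement for the rotated sequence and then subtracting the constant $\vartheta$ recovers it for $C_n$; moreover the claimed variance is unchanged, since the rotation-equivariance $g(R_{\vartheta}v)=g(v)+\vartheta$ differentiates to $\nabla g(R_{\vartheta}v)=R_{\vartheta}\nabla g(v)$, whence $\nabla g(R_{\vartheta}v)^\top(R_{\vartheta}\bSigma R_{\vartheta}^\top)\nabla g(R_{\vartheta}v)=\nabla g(v)^\top\bSigma\,\nabla g(v)$. So I may assume, without loss of generality, that $(\mu_A,\mu_B)$ lies in an open set on which $g$ is twice continuously differentiable.

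With that reduction in place, the remaining steps are routine. Write $\Delta_n=(A_n-\mu_A,\,B_n-\mu_B)^\top$; tightness of $d_n\Delta_n$ gives $\Delta_n=O_p(d_n^{-1})$, so with probability tending to one $\Delta_n$ stays in a fixed compact neighborhood of $\bzero_2$ on which $\|\nabla^2 g\|$ is bounded by some constant $K$. Taylor's theorem then gives $C_n-\text{atan2}[\mu_A,\mu_B]=\nabla g(\mu_A,\mu_B)^\top\Delta_n+R_n$ with $|R_n|\le K\|\Delta_n\|^2$, hence $d_nR_n=O_p(d_n^{-1})=o_p(1)$. By the continuous mapping theorem applied to the linear functional $v\mapsto\nabla g(\mu_A,\mu_B)^\top v$, $d_n\nabla g(\mu_A,\mu_B)^\top\Delta_n\stackrel{\mathscr{L}}{\longrightarrow}N(0,\nabla g(\mu_A,\mu_B)^\top\bSigma\,\nabla g(\mu_A,\mu_B))$, and Slutsky's theorem combines this with $d_nR_n\stackrel{\mathscr{P}}{\longrightarrow}0$ to finish the proof; substituting $\nabla g(\mu_A,\mu_B)=(\mu_B,-\mu_A)^\top/(\mu_A^2+\mu_B^2)$ produces the stated variance formula. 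The main obstacle is precisely the branch-cut bookkeeping in the second paragraph; once $g$ is known to be smooth near $(\mu_A,\mu_B)$, everything else is the textbook delta-method argument.
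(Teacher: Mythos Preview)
Your proof is correct and follows essentially the same route as the paper: a Taylor expansion of $\text{atan2}$ at $(\mu_A,\mu_B)$, with the linear term $\nabla g(\mu_A,\mu_B)^\top d_n\Delta_n$ supplying the Gaussian limit and the higher-order remainder shown to be $o_p(1)$ after scaling by $d_n$, combined via Slutsky. You are in fact more careful than the paper, which writes the expansion formally without addressing the branch cut of $\text{atan2}$; your rotation reduction fills that gap.
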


\begin{proof}
Because $d_n(A_n-\mu_A)\stackrel{\mathscr{L}}{\longrightarrow}N(0, \bSigma[1,1])$, where $d_n\to \infty$ as $n\to \infty$, by Theorem 2.3.4 in \cite{lehmann1999elements}, $A_n\stackrel{\mathscr{P}}{\longrightarrow} \mu_A$. Similarly, $B_n\stackrel{\mathscr{P}}{\longrightarrow} \mu_B$. By the continuous mapping theorem \citep[][Theorem 2.1.4]{lehmann1999elements}, $(A_n-\mu_A)^\ell\stackrel{\mathscr{P}}{\longrightarrow}0$ and $(B_n-\mu_B)^\ell\stackrel{\mathscr{P}}{\longrightarrow}0$, for $\ell\in \mathbb{N}$, as $n\to \infty$. 

Using a Taylor expansion of $C_n$ around $(A_n, B_n)=(\mu_A, \mu_B)$, we have 
\begin{align}
  & \ d_n(C_n-\text{atan2}[\mu_A, \mu_B])\\ 
   =&\ 
     \frac{1}{\mu_A^2+\mu_B^2}\{\mu_B d_n(A_n-\mu_A)-\mu_A d_n (B_n-\mu_B)\} \label{eq:line1}\\
     &\scalebox{.8}{$ -\frac{1}{(\mu_A^2+\mu_B^2)^{2}}\bigg[\mu_A\mu_B\left\{d_n(A_n-\mu_A)^{2}-d_n(B_n-\mu_B)^{2}\right\}+(\mu_B^{2}-\mu_A^{2})d_n(A_n-\mu_A)(B_n-\mu_B)\bigg] \label{eq:line2}$}\\
     &\ +O_p\{d_n(A_n-\mu_A)^{3}\}+O_p\{d_n(B_n-\mu_B)^{3}\} \nonumber.
\end{align}
By Slutsky's theorem, \eqref{eq:line1}, which is equal to $(\mu_B, -\mu_A)d_n (A_n-\mu_A, B_n-\mu_B)^\top/(\mu^2_A+\mu^2_B)$, converges in distribution to 
$N(0, (\mu_B, -\mu_A)\bSigma (\mu_B, -\mu_A)^\top/(\mu_A^2+\mu_B^2)^2)$. In \eqref{eq:line2}, since $d_n(A_n-\mu_A)^2=d_n(A_n-\mu_A)\times (A_n-\mu_A)$, with $d_n(A_n-\mu_A) \stackrel{\mathscr{L}}{\longrightarrow} N(0, \bSigma[1,1])$ and $A_n-\mu_A \stackrel{\mathscr{P}}{\longrightarrow}0$, we have  $d_n(A_n-\mu_A)^2\stackrel{\mathscr{P}}{\longrightarrow} 0$ by Slutsky's theorem. Similarly, $d_n(B_n-\mu_B)^2\stackrel{\mathscr{P}}{\longrightarrow} 0$ and  $d_n(A_n-\mu_A)(B_n-\mu_B)\stackrel{\mathscr{P}}{\longrightarrow} 0$. Hence, \eqref{eq:line2} and the remainder terms following \eqref{eq:line2} all converge in probability to zero. Putting these together and using Slutsky's theorem again, we establish the limiting distribution of $d_n(C_n-\text{atan2}[\mu_A, \mu_B])$.
\end{proof}

Now return to our context of estimating the circular mean function $m(x)$. By letting $A_n=\hat g_{1, \cdot}(x)$ and $B_n=\hat g_{2, \cdot}(x)$ in Lemma~\ref{lem:asynorm}, we have $C_n=\hat m_\cdot(x)$, where ``$\cdot$'' refers to one of the acronyms (``DK,'' ``CE,'' and ``OS'') relating to our proposed estimators for $m(x)$. According to Lemma~\ref{lem:asynorm}, the asymptotic normality of our proposed estimators follows directly from the established asymptotic normality of $\hat g_{1, \cdot}(x)$ and $\hat g_{2,\cdot}(x)$. For example, by the asymptotic normality of the one-step correction estimators established in \cite{huangzhou} for $\hat g_{1, \text{OS}}(x)$ and $\hat g_{2,\text{OS}}(x)$, we can immediately conclude that $d_n\{\hat m_{\text{OS}}(x)-\text{atan2}[m_1(x)f_X(x), \, m_2(x)f_X(x)]\}=d_n\{\hat m_{\text{OS}}(x)-m(x)\}$ converges in distribution to a mean-zero normal distribution as $n\to \infty$, where $d_n=\sqrt{n}h^{1/2+\beta}$ if $U$ is ordinary smooth of order $\beta$, and $d_n=\sqrt{n}h^{1-\beta_2}\exp(-h^{-\beta}/\gamma)$ if $U$ is super smooth of order $\beta$. Similar conclusions for $\hat m_{\text{DK}}(x)$ and $\hat m_{\text{CE}}(x)$ can be established by the asymptotic normality of $\hat g_{\ell, \text{DK}}(x)$ according to \cite{Delaigle} and that of $\hat g_{\ell, \text{CE}}(x)$ shown in Section~\ref{sec:normgCE}.
\renewcommand*{\theequation}{H.\arabic{equation}}
\setcounter{equation}{0}
\renewcommand*{\thesubsection}{H.\arabic{subsection}}
\renewcommand*{\thelemma}{H.\arabic{lemma}}
\setcounter{subsection}{0}
\section*{{\revise Appendix H: Mean integrated square error and optimal bandwidths}}
\label{sec:miseopt}
A commonly employed strategy for choosing the bandwidth in local polynomial estimation is through the minimization of the mean integrated square error (MISE),  
\begin{align}\label{MISE}
    \text{MISE}
    &=\int_{\mathbb{R}} \left(\left[\text{Bias}\{\hat{m}_{\cdot}(x)|\bX\}\right]^{2}+\text{Var}\{\hat{m}_{\cdot}(x)|\bX\}\right)w(x)\,dx,
\end{align}
following the definition in \citet[Section 2.1,][]{Fan}, where  $w(x)\ge 0$ is a weight function that we set to one in our study, and $\hat{m}_{\cdot}(x)$ generically refers to one of our proposed estimator. With the asymptotic bias and variance of our proposed estimators in the presence of ordinary smooth measurement error derived, the asymptotic MISE (AMISE) of $\hat m_{\text{DK}}(x)$ and $\hat m_{\text{OS}}(x)$ in this case can be used to find an asymptotically optimal bandwidth. In the presence of super smooth measurement error, we only have the order for some upper bound of the asymptotic variance of $\hat m_{\text{DK}}(x)$ and $\hat m_{\text{OS}}(x)$, which is less useful for optimal bandwidth derivation based on AMISE. Even though the dominating bias and dominating variance of $\hat m_{\text{CE}}(x)$ presented in Theorems~\ref{thm:asympbias} and \ref{thm:asympvar} include elaborated terms of order $O(h^2)$, they also implicitly depend on terms of order $O_p(\exp(\sigma_u^2h^{-2})/(nh))$ that prohibit one from recovering a useful AMISE for this estimator. In what follows, we present the AMISEs of $\hat m_{\text{DK}}(x)$ and $\hat m_{\text{OS}}(x)$, and the corresponding optimal bandwidths when local linear weights are used while assuming ordinary smooth measurement errors. 

By Theorem~\ref{thm:asympbias}, the dominating bias of $\hat m_{\text{DK}}(x)$ and $\hat m_{\text{OS}}(x)$ are of the same form given by $ \mathbb{DB}(x)h^2+\mathbb{DB}^*(x)/(nh^{1+2\beta})$,
where $\mathbb{DB}(x)$ is the dominating terms in $\hat m_{\text{DK}}(x)$ or $\hat m_{\text{OS}}(x)$ that multiply $h^2$, which depends on functionals of $m(x)$ and $\mu_2$, and $\mathbb{DB}^*(x)$ is the dominating terms in these estimators that multiply $1/(nh^{1+2\beta})$, which depends on naive counterparts of functionals of $m(x)$ besides these functionals. By Theorem~\ref{thm:asympvar}, the dominating variance of $\hat m_{\text{DK}}(x)$ and $\hat m_{\text{OS}}(x)$ also share a common form given by $\mathbb{DV}(x)/(nh^{1+2\beta})$, with $\mathbb{DV}(x)$ involving many of the aforementioned functionals. These dominating terms, $\mathbb{DB}(x)$, $\mathbb{DB}^*(x)$, and $\mathbb{DV}(x)$, are free of $n$ and $h$. 
 
Using these generic forms of the dominating bias and dominating variance of an estimator in \eqref{MISE}, we have 
\begin{align*}
    \text{AMISE} & = \int_{\mathbb{R}} \left[\left\{\mathbb{DB}(x) h^2+\frac{\mathbb{DB}^*(x)}{nh^{1+2\beta}}\right\}^2+\frac{\mathbb{DV}(x)}{nh^{1+2\beta}}\right]\, dx.
\end{align*}
Using the dominating terms in the above AMISE, we find  an asymptotic optimal $h$ given by 
\begin{equation}
h_{\text{Optimal}}=\bigg[\frac{(1+2\beta)}{4n}\frac{\int_{\mathbb{R}} \mathbb{DV}(x)\, dx}{\int_{\mathbb{R}}\left\{\mathbb{DB}(x)\right\}^2  \, dx}\bigg]^{1/(5+2\beta)}.
\label{eq:optimalh}
\end{equation}
Using this asymptotic optimal bandwidth in the bias results in Theorems~\ref{thm:asympbias} and \ref{thm:asympvar} gives the bias of $\hat m_{\text{DK}}(x)$ and $\hat m_{\text{OS}}(x)$ both in the order of $O(n^{-2/(5+2\beta)})$, whereas the variance of them are in the order of $O_p(n^{-4/(5+2\beta)})$ when $U$ is ordinary smooth.

\renewcommand*{\thefigure}{I.\arabic{figure}}
\setcounter{figure}{0}
\section*{{\revise Appendix I: Using naive bandwidths in a non-naive estimator}}

Focusing on the deconvoluting kernel estimator, $\hat m_{\text{DK}}(x)$, we conducted a simulation study to compare the naive 5-fold cross-validation bandwidth selection, which ignores measurement errors in the naive loss function, and our proposed CV-SIMEX, with the optimal bandwidth $h_{\text{opt}}$ defined in Section~\ref{sec:simudesign} serving as a reference. In this experiment, we have $m(x)=2\text{atan}(1/x)$ as the regression function,  with $X\sim N(0,4)$ and $U\sim N(0,1)$, yielding $\lambda=0.8$. 
Figure \ref{cvbox} presents the empirical risk of $\hat m_{\text{DK}}(x)$ when different bandwidths are used based on 100 Monte Carlo replicates at each considered sample size. The empirical risk is computed based on fitted values of $m(x)$ at a sequence of $x$ ranging from $-3$ to 3 at increments of 0.1. The improvement in the estimation when using a bandwidth chosen by CV-SIMEX is evident compared to when one implements a naive CV. As $n$ increases, CV-SIMEX becomes closer in performance to the optimal choice of bandwidth. 

\begin{figure}[!h]
\centering
   \includegraphics[scale=.55]{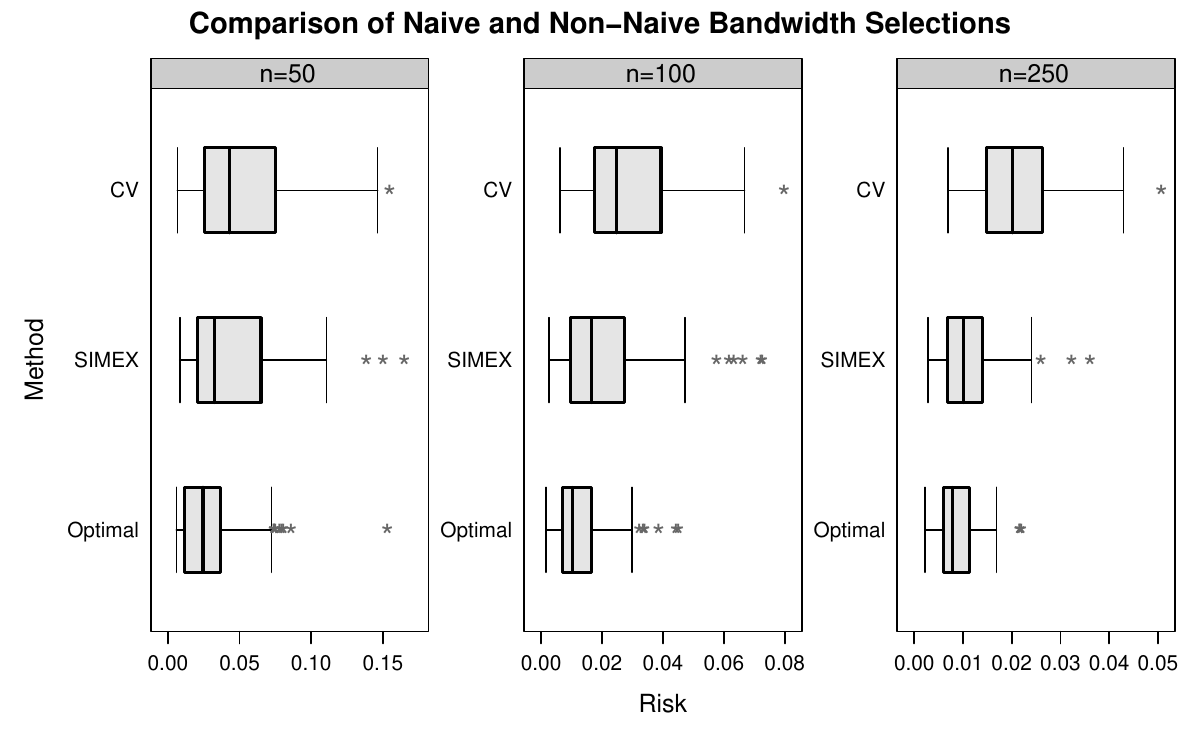}
    \caption{\label{cvbox}Boxplots of empirical risk across 100 Monte Carlo replicates at each level of $n\in \{50, 100, 250\}$ for $\hat m_{\text{DK}}(x)$ when three bandwidth selection methods are used: a naive cross-validation (CV), the proposed CV-SIMEX (SIMEX), and the optimal bandwidth (Optimal).}
\end{figure}

\end{appendix}

\bibliographystyle{apalike}
\bibliography{refs.bib}  

\end{document}